\documentclass[conference]{IEEEtran}

\IEEEoverridecommandlockouts

\usepackage[T1]{fontenc}
\usepackage{amsmath,amssymb,amsthm}
\usepackage{mathtools}
\usepackage{booktabs}
\usepackage{algorithm}
\usepackage[noend]{algorithmic}
\usepackage{xcolor}
\usepackage{tikz}
\newcommand{\oomit}[1]{}
\usetikzlibrary{external,automata,arrows.meta,positioning,calc,shapes,shadows,decorations,patterns,fit,backgrounds}

\tikzset{
	state/.style=
    {circle, draw, align=center, auto, initial text={}}, 
	>=stealth,
	loopright/.style={loop,looseness=5,out=35, in=-35},
	loopleft/.style={loop,looseness=5,out=145, in=215},
	loopabove/.style={loop,looseness=5,out=125, in=55},
	loopbelow/.style={loop,looseness=5,out=-125, in=-55}
} 

\usepackage[colorlinks=true]{hyperref}
\hypersetup{colorlinks=true,linkcolor=blue,citecolor=blue,urlcolor=blue}
\usepackage[capitalize,noabbrev]{cleveref}
\usepackage{capt-of}
\usepackage{microtype}
\usepackage{multirow}
\usepackage{array}
\usepackage{enumitem}
\usepackage{stmaryrd}

\allowdisplaybreaks

\theoremstyle{plain}
\newtheorem{theorem}{Theorem}
\newtheorem{lemma}[theorem]{Lemma}
\newtheorem{proposition}[theorem]{Proposition}
\newtheorem{corollary}[theorem]{Corollary}

\theoremstyle{definition}
\newtheorem{definition}{Definition}
\newtheorem{problem}{Problem}
\crefname{problem}{Problem}{Problems}
\newtheorem{example}{Example}

\theoremstyle{remark}
\newtheorem{remark}{Remark}

\AddToHook{env/lemma/begin}{\crefalias{theorem}{lemma}}
\AddToHook{env/proposition/begin}{\crefalias{theorem}{proposition}}
\AddToHook{env/corollary/begin}{\crefalias{theorem}{corollary}}

\usepackage{wrapfig}
\usepackage{graphicx} 
\usepackage{listings}
\usepackage{comment}

\usepackage{makecell}
\usepackage{diagbox}
\usepackage{tabularx}

\usepackage{pifont}

\usepackage[textsize=scriptsize, textwidth=6cm, obeyFinal]{todonotes}

\newcommand{\APorig}{\mathsf{AP}_{\mathsf{orig}}}
\newcommand{\APfin}{\mathsf{AP}_{\mathsf{final}}}

\newcommand{\calI}{\mathcal{I}}
\newcommand{\calIz}{\mathcal{I}_{0}}
\newcommand{\proj}{\pi}
\newcommand{\delay}{\Delta}
\newcommand{\Pairsconf}{\mathsf{Pairs}_{\mathsf{conf}}}
\newcommand{\Pairsundone}{\mathsf{Pairs}_{\mathsf{undone}}}
\newcommand{\Pairstgt}{\mathsf{Pairs}_{\mathsf{target}}}

\newcommand{\ltl}{\textmd{\textup{\textsf{LTL}}}}
\newcommand{\mitl}{\textmd{\textup{\textsf{MITL}}}}
\newcommand{\mitppl}{\textmd{\textup{\textsf{MITPPL}}}}

\newcommand{\tre}{\textmd{\textup{\textsf{TRE}}}}
\newcommand{\mtl}{\textmd{\textup{\textsf{MTL}}}}

\newcommand{\mightyppl}{\textsc{MightyPPL}}

\newcommand*{\AP}{\mathsf{AP}}
\renewcommand*{\P}{\mathsf{P}}

\newcommand{\F}{\mathcal{F}}

\newcommand{\until}{\mathbf{{U}}}

\newcommand{\since}{\mathbf{{S}}}

\newcommand{\R}{\mathbb{R}_{\geq 0}}
\newcommand{\N}{\mathbb{N}}

\renewcommand{\max}{\mathsf{max}}

\definecolor{saffron}{rgb}{1.0,0.49,0.0}

\makeatletter
\newcommand{\oset}[3][0ex]{%
  \mathrel{\mathop{#3}\limits^{
    \vbox to#1{\kern-2\ex@
    \hbox{$\scriptstyle#2$}\vss}}}}
\makeatother

\newcommand{\Pnkern}{%
  \mkern-2mu
}
\DeclareMathOperator{\eventually}{\mathbf{F}}
\DeclareMathOperator{\once}{\overset{\leftarrow}{\mathbf{F}}}
\DeclareMathOperator{\historically}{\overset{\leftarrow}{\mathbf{G}}}
\DeclareMathOperator{\globally}{\mathbf{G}}
\DeclareMathOperator{\nextx}{\mathbf{X}}
\DeclareMathOperator{\PnF}{\mathbf{P \Pnkern n}}
\DeclareMathOperator{\PnO}{\oset[-1pt]{\leftarrow}{\mathbf{P \Pnkern n}}}
\renewcommand{\F}{\eventually}

\makeatletter
\def\@IEEEsectpunct{.\ \,}
\def\paragraph{\@startsection{paragraph}{6}{\z@}{1.5ex plus 1.5ex minus 0.5ex}%
{0ex}{\normalfont\normalsize\bfseries}}
\makeatother

\begin{document}

\title{On Synthesis of Metric Interval Temporal Logics\thanks{The work of Khushraj Madnani was
supported by the ANRF ARG-MATRICS Grant (File No. ANRF/ARGM/2025/003152/MTR) under the project
titled ``Approaching the Decidability Purlieu of Timed Languages (ADePT).''}}
\author{
  \IEEEauthorblockN{Hsi-Ming Ho}
  \IEEEauthorblockA{
    University of Sussex, Brighton, UK\\
    hsi-ming.ho@sussex.ac.uk 
  }
  \and
  \IEEEauthorblockN{Shankara Narayanan Krishna}
  \IEEEauthorblockA{
    IIT Bombay, Mumbai, India\\
    krishnas@cse.iitb.ac.in
  }
  \and
  \IEEEauthorblockN{Khushraj Madnani}
  \IEEEauthorblockA{
    IIT Guwahati, Guwahati, India\\
    khushraj@iitg.ac.in 
  }
}

\maketitle

\begin{abstract}
Automated mining of formal specifications is vital for verifying real-time systems. However, existing passive learning approaches remain restricted to deterministic specifications or limited fragments of Timed Regular Expressions ($\tre{}$). To our knowledge, this paper presents the first framework to tackle \emph{precise} passive learning for an expressive timed logic---\emph{Metric Interval Temporal Logic} ($\mitl{}$)---without relying on predefined templates or restricted logic fragments.
 Our approach formally reduces the timed learning problem into a scalable untimed one. By identifying quantitative timing differences between positive and negative traces, we synthesise precise timed constraints and inject them as new Boolean atomic propositions. This embeds timing into the alphabet, delegating the complex formula evaluation to highly optimised, off-the-shelf untimed LTL tools.
 Crucially, our framework is complete, guaranteeing a separating specification can always be found. We evaluate our implementation across several benchmarks, demonstrating the effectiveness of our approach.
\end{abstract}

\begin{IEEEkeywords}
Metric Interval Temporal Logic, Specification Mining, Passive Learning,
Feature Synthesis, Real-Time Systems.
\end{IEEEkeywords}

\section{Introduction}\label{sec:intro}

Formal specifications underpin the verification, runtime monitoring, and testing
of \emph{real-time} and \emph{cyber-physical systems} (CPS). However, manually crafting these
rigorous mathematical formulae is notoriously complex and prone to error. While
recent advances in \emph{Large Language Models} (LLMs) have begun to alleviate this
bottleneck by translating natural language requirements into formal specifications (see, e.g.,~\cite{xie2025effective}), this
approach still relies on the existence of clear, comprehensive textual
comments or documentation. In many practical scenarios, engineers instead possess vast
amounts of implicit behavioural data, such as execution traces, logs, and
counterexamples. \emph{Specification mining} (or `\emph{learning}') addresses the inverse problem: synthesising a 
mathematical model that directly captures the behaviours hidden within 
samples of labelled traces, namely the positive 
and negative trace sets, $\Omega^{+}$ and $\Omega^{-}$, respectively. 

For (untimed) \emph{Linear Temporal Logic} ($\ltl{}$), the mining problem is well
studied. Methods such as~\cite{neider2018learning, arif2020syslite, raha2022scalable} are implemented in tools that handle thousands of traces. However, in many
practical applications---such as autonomous driving,
medical monitoring, and industrial automation---system correctness depends not
only on the logical order of events but also on the strict time bounds of their
occurrences. To rigorously reason about these timing constraints, one must turn
to timed logics such as \emph{Metric Interval Temporal Logic} ($\mitl{}$)~\cite{alur1996benefits}, a quantitative extension of $\ltl{}$ with integer-bounded timing constraints.
However, the learning problem for $\mitl{}$ is conceivably harder: constructing a
consistent $\mitl{}$ formula requires choosing both a structure (nested operators, atomic propositions) \emph{and} interval parameters.

A na\"ive attempt to reuse mature $\ltl{}$ learning tools is the simple
`untiming' reduction: for a \emph{timed word}
$\omega = (d_1,\sigma_1)\cdots(d_\ell,\sigma_\ell)$ over $\Sigma_\AP = 2^\AP$ (i.e.~$\sigma_i$ is a
subset of \emph{atomic propositions}
$\AP$ for all $i$, $1 \leq i \leq \ell$,  
and $d_i$'s denote the time delay before witnessing $\sigma_i$'s),
let $\mu(\omega) = \sigma_1\cdots\sigma_\ell$
be the sequence obtained from $\omega$ by removing the `delay' component. 
We then feed
$\{\mu(\omega) : \omega \in \Omega^+\}$ and
$\{\mu(\omega) : \omega \in \Omega^-\}$ to an off-the-shelf $\ltl{}$ learner.
This approach would \emph{not} work, however, whenever a positive and a negative trace 
share an identical untimed projection, as obviously no $\ltl{}$ formula can distinguish between them.
In this case, the two traces are said to be \emph{propositionally indistinguishable}
with respect to $\AP$.

\begin{example}\label{ex:naive-fails}
Consider an EDF (Earliest Deadline First) scheduler with two tasks $T_1,T_2$ where $T_i$ 
involves releasing a job (witnessed as $p_i$) and its completion (witnessed as $q_i$).
The trace
$\omega^+ = (0, \{p_1\})(3.5, \{q_1\})(4, \{p_2\})(8.2, \{q_2\})$ records a successful release
($p_i$) and completion ($q_i$) sequence, while the trace
$\omega^- = (0, \{p_1\})(12.5, \{q_1\})(4, \{p_2\})(8.2,\{q_2\})$ records the same sequence of
events but with $T_1$ completing after a delay of $12.5$ time units, well past its deadline.
The untimed projection of $\omega^{+}, \omega^{-}$ are identical: $\mu(\omega^+) = \mu(\omega^-) = \{p_1\}\,\{q_1\}\, \{p_2\}\, \{q_2\}$, and obviously
no $\ltl{}$ formula over $\AP = \{p_1, q_1, p_2, q_2\}$ distinguishes them.
\end{example}

\paragraph{Quantitative distinguishability}
When two propositionally indistinguishable traces are nonetheless
behaviourally distinct (i.e.~one in $\Omega^{+}$ and one in $\Omega^{-}$), their divergence is purely quantitative: some
pair of event occurrences in the two traces differs in delay. We
formalise this as a \emph{pivot}---a tuple $((i, j), c)$ such that 
one trace's inter-event delay between positions $i$ and $j$ is below the integer boundary $c$ and the other's is above $c$.
Pivots are easily computable certificates of \emph{quantitative distinguishability}.
The crux of our framework is that a formula such as $\eventually_{[0,c]}p$
or $\once_{[0,c]}p$ evaluated at the corresponding position can distinguish
the two traces involved. When treated as a fresh proposition
and added to the set of atomic propositions, this formula serves as a new \emph{feature} and effectively restores
propositional distinguishability. 

\paragraph{From timed mining to untimed mining}
The feature-engineering pass is
guaranteed to make progress on every iteration.
Once enough synthesised features have been injected into the set of atomic propositions, every
positive-negative trace pair $(\omega^+, \omega^-)$ are propositionally distinguishable with respect
to the updated (expanded) set of atomic propositions $\AP$. At that point, the set of positive and negative traces can both be recasted onto the updated $\AP$ with delays discarded, and an untimed $\ltl{}$ learner
applied with no further modification. The expressive power lost when
discarding delays is recovered by the features the learner now sees in
the set of atomic propositions. In other words, this reduction turns a hard \emph{timed} mining problem into an
easier \emph{untimed} one.

\paragraph{Contributions}
We make the following contributions.
\begin{enumerate}[leftmargin=*]
  \item \textbf{Formalisation of propositional indistinguishability
    (\Cref{sec:pind}).} We define propositional
    indistinguishability with respect to arbitrary subsets of atomic propositions, prove a structural
    characterisation of when a pair is quantitatively distinguishable but
    propositionally indistinguishable (\Cref{thm:pivot-char}), and reduce
    such cases to a finite collection of integer pivots.
  \item \textbf{Subset-projection collision detection
    (\Cref{sec:phase1}).} We give an algorithm that enumerates
    subsets of atomic propositions in order of decreasing size and identifies every
    propositionally indistinguishable trace pair together with all the pivots
    which explain \emph{how the two traces may be distinguished}. The algorithm runs in
    $\mathcal{O}(2^{|\AP|}\cdot N^2 \cdot \ell^2)$ in the worst case (where $N$ is the number of traces in the larger of $\Omega^+, \Omega^-$ and $\ell$ is the length of the longest trace) but is highly
    effective in practice because most real-world specifications use small
    numbers of atomic propositions.
  \item \textbf{Layered feature-synthesis framework
    (\Cref{sec:phase2,sec:phase3}).}
    We detail the successive phases of our translation pipeline: template injection, 
layered feature synthesis (formulated as a set-cover problem), and proposition 
minimisation. During feature synthesis, the engine primarily relies on 
standard $\mitl{}$ modalities ($\eventually_I$, $\once_I$) to extract separating 
timing constraints. Where these prove insufficiently expressive, such as for 
complex counting patterns, we use Pnueli modalities 
($\PnF_J$, $\PnO_J$) as a fallback strategy. Furthermore, domain-specific 
template injection can be applied prior to synthesis, acting as a structural 
pre-filter to accelerate the resolution of known specification patterns.
  \item \textbf{Soundness and completeness
    (Sec.~\ref{sec:theory}).} We prove that the synthesised features
    suffice to make every $(\omega^+, \omega^-) \in \Omega^{+} \times \Omega^{-}$ propositionally distinguishable
    (\Cref{thm:soundness}), and that the
    framework is complete in the sense that
    we can generate an $\mitppl{}$ formula that distinguishes$(\Omega^{+}, \Omega^{-})$ if they can be distinguished by a timed automaton (\Cref{thm:completeness}).

  \item \textbf{Implementation and evaluation (Sec.~\ref{sec:experiments}).}
  We implement the framework in Python and evaluate it across three specific
  benchmark scenarios: (1) learning the behaviours of a simple timed automaton,
  (2) learning typical parameterised specification patterns used in the online
  monitoring of cyber-physical systems, and (3) learning to determine the
  precise timing constants within a train-gate controller. We demonstrate the
  feasibility and computational efficiency of our translation pipeline,
  contextualising our results against the performance of direct $\mtl{}$ mining
  and $\tre{}$ synthesis reported in~\cite{raha2023synthesizing, wang2025synthesis},
  and explicitly evaluate the structural trade-off between solver scalability and out-of-sample generalisation. 
  
\end{enumerate}

\paragraph{Outline}
\Cref{sec:example} presents a motivating example. \Cref{sec:prelim}
collects preliminaries on timed words, $\mitl{}$ and $\mitppl{}$, untimed $\ltl{}$
learning, etc.
\Cref{sec:pind} formalises
propositional indistinguishability and present the four phases of the process.
\Cref{sec:theory} establishes the framework's correctness, progress, and
complexity.
\Cref{sec:experiments} reports the experimental evaluation.
\Cref{sec:related,sec:conclusion} discuss related work and conclude.

\section{Motivating Example}\label{sec:example}

We motivate the framework using an example adapted from~\cite{wang2025synthesis}.
In their work, certain problem instances are
explicitly identified as out-of-scope (termed `obscured') because
separating them inherently requires the \emph{intersection} operator in
\emph{timed regular expressions} ($\tre{}$s).  We demonstrate how our framework
handles this trivially.

\paragraph{Setup} Consider a simple system with a single event type,
giving the original set of atomic propositions $\APorig=\{p\}$. The goal is to
distinguish $\Omega^+ = \{ \omega^+_1 \}$  from $\Omega^- = \{ \omega^-_1, \omega^-_2 \}$
with a simple $\tre{}$ (without intersection).
\cite{wang2025synthesis} demonstrates that while the positive trace can be separated from each negative trace
individually using simple $\tre{}$s, distinguishing it from both
simultaneously requires intersection (cf.~\cite{TRE}), which their synthesis method
explicitly forbids and cannot support.
Specifically, the traces are given as follows:
\[
\begin{aligned}
\omega^+_1 &= (0, \emptyset)(1.5, \{p\})(2.6, \{p\})(1.5, \{p\})\\
\omega^-_1 &= (0, \emptyset)(1.2, \{p\})(2.6, \{p\})(1.5, \{p\})\\
\omega^-_2 &= (0, \emptyset)(1.5, \{p\})(2.6, \{p\})(1.2, \{p\}) 
\end{aligned}
\]
All three traces share the exact same untimed projection $\emptyset\,\{p\}\,\{p\}\, \{p\}$,
making them propositionally indistinguishable 

\paragraph{What our framework does}
Phase~1 (Sec.~\ref{sec:phase1}) detects the propositionally indistinguishable
pairs and extracts pivots:
\begin{itemize}[leftmargin=*]
  \item Pair $(\omega^+_1, \omega^-_1)$: the sum of the first two delays is
  $4.1$ time units in the positive trace and $3.8$ in the negative trace, crossing the
  integer boundary $4$.
  \item Pair $(\omega^+_1, \omega^-_2)$: the sum of the last two delays is
  $4.1$ time units in the positive trace and $3.8$ in the negative trace, again crossing
  the integer boundary $4$.
\end{itemize}
Rather than attempting to construct a complex monolithic formula with intersections,
Phase~3 (\Cref{sec:phase3}) synthesises two new 
$\mitl{}$ formulae that serve as features: 
\[
\phi_1 = \once_{[0,4]} (\neg p), \; \phi_2 = \once_{[0,4]} (\once_{[0,4]} (\neg p)) 
\]
The meaning of $\phi_1$ is: `\emph{there is an event with $\neg p$ in at most $4$ time units in the past from now}'. This holds at position $3$ in $\omega^-_1$, but not at position $3$ in $\omega^+_1$ and $\omega^-_2$. 
Similarly, $\phi_2$ holds at position $4$ in $\omega^-_1$ and $\omega^-_2$, 
but not at position $4$ in $\omega^+_1$.

\paragraph{Expanded atomic propositions and untimed reduction}
The expanded set of atomic propositions is $\AP = \APorig \cup \{\phi_1,\, \phi_2\}$.
The positive and negative traces, with all delays stripped, are now pairwise distinguishable as untimed strings:
\[
\small 
\begin{aligned}
\mu^+_1 &= (0, \{\phi_1, \phi_2\})(1.5, \{p, \phi_1, \phi_2\})(2.6, \{p, \phi_2\})(1.5, \{p\})\\
\mu^-_1 &= (0, \{\phi_1, \phi_2\})(1.2, \{p, \phi_1, \phi_2 \})({2.6, \{{{p},{\phi_1}, {\phi_2}}\}})(1.5, \{p, \phi_2\})\\
\mu^-_2 &= (0, \{\phi_1, \phi_2\})(1.5, \{p, \phi_1, \phi_2\})(2.6, \{p, \phi_2\})(1.2, \{p, \phi_2\}) 
\end{aligned}
\]
The Boolean
evaluations of $\phi_1$ and $\phi_2$ at each position natively separate the
behaviours around the critical $4$ time unit boundary. 
When fed to the off-the-shelf $\ltl{}$ learner \textsc{Bolt}~\cite{bathie2026ltl}, the tool
outputs the formula

\[
\eventually (\neg \phi_2) \equiv \eventually (\neg \once_{[0,4]} (\once_{[0,4]} (\neg p))) \;,
\]
seamlessly resolving the exact problem instance that precludes the existing
$\tre{}$-based synthesis method of~\cite{wang2025synthesis}.

\section{Preliminaries}\label{sec:prelim}

\paragraph{Timed words and projections}
Let $\AP$ be a finite set of atomic propositions, and let $\Sigma_{\AP} = 2^{\AP}$ denote the corresponding finite alphabet. Let $\N$ and $\R$ denote the sets of natural numbers and non-negative real numbers, respectively.

\begin{definition}[Timed Word]\label{def:timed-word}
A \emph{timed word} (or a \emph{trace}) over $\Sigma_{\AP}$ is a finite sequence $\omega =
(d_1,\sigma_1)\cdots(d_\ell,\sigma_\ell) \in (\R \times \Sigma_{\AP})^*$, where each
$d_i$ represents the non-negative time delay elapsed immediately prior to
observing the symbol $\sigma_i$. We denote the length of $\omega$ as
$|\omega| = \ell$. 
The \emph{untimed projection} of $\omega$ is the sequence of symbols
$\mu(\omega)=\sigma_1\cdots\sigma_\ell$. For a set $\Omega$ of timed words,
we write $\mu(\Omega)$ for $\bigcup_{\omega \in \Omega} \{ \mu(\omega) \} $.
\end{definition}

\begin{definition}[Propositional Projection]\label{def:projection} For a subset
of atomic propositions $\P \subseteq \AP$, the \emph{$\P$-projection} of a
letter $\sigma \in \Sigma_{\AP}$ is $\proj_\P(\sigma) = \sigma \cap \P$. The
$\P$-projection of an untimed string $\mu = \sigma_1\cdots\sigma_\ell$ is the
string formed by projecting each individual letter in place, yielding
$\proj_\P(\sigma_1)\cdots\proj_\P(\sigma_\ell)$. Similarly, the $\P$-projection of
a timed word $\omega = (d_1, \sigma_1)\cdots(d_\ell, \sigma_\ell)$, denoted
$\proj_\P(\omega)$, is the timed word $(d_1, \sigma_1 \cap \P)\cdots(d_\ell, \sigma_\ell
\cap \P)$.
For a set $\Omega$ of timed words,
we write $\proj_\P(\Omega)$ for $\bigcup_{\omega \in \Omega} \{ \proj_P(\omega) \} $.
\end{definition}

\begin{example}[Propositional Projection]\label{ex:projection} Let
\[
 \omega = (0.5, \{p_1, r_1\})(3.2, \{p_2,
q_1\})(3.0, \{q_2\})(4.5, \{p_1, q_2\})
\]
be a timed word over $\Sigma_\AP$ where $\AP = \{p_1, p_2, q_1, q_2, r_1\}$.
The $\{p_1, p_2\}$-projection of $\omega$ is:
$$ \proj_{\{p_1,p_2\}}(\omega) = (0.5,
\{p_1\})(3.2, \{p_2\})(3.0, \emptyset)(4.5, \{p_1\}) $$
The $\{q_1, q_2\}$-projection of $\omega$ is:
$$ \proj_{\{q_1,q_2\}}(\omega) = (0.5, \emptyset)(3.2,
\{q_1\})(3.0, \{q_2\})(4.5, \{q_2\}) $$
\end{example}

\paragraph{$\mitl{}$ with Past and Pnueli modalities}
We use the standard $\mitppl{}$ syntax and semantics
of~\cite{ho2026MightyPPL}. Let $\langle$ denote a left-open
``('' or left-closed ``['' boundary, and let $\rangle$ denote a right-open ``)''
or right-closed ``]'' boundary. Let $\calI$ be the set of all intervals
$\langle l, u \rangle$ where $l \leq u$, $l \in \N$, and $u \in \N \cup
\{\infty\}$. Likewise, let $\calIz = \{[0,u\rangle \mid u \in \N \cup
\{\infty\} \}$. An
interval is \emph{singular} if it takes the form $[c,c]$ for $c \in \N$, and
\emph{non-singular} otherwise.

\begin{definition}[$\mitppl{}$ Syntax]\label{def:mitppl}
$\mitppl{}$ formulae over $\AP$ are defined by the grammar:
\begin{align*}
\varphi ::={}& \mathbf{true} \mid p \mid \neg\varphi \mid \varphi_1 \wedge \varphi_2
              \mid \varphi_1 \until_I \varphi_2 \mid \varphi_1 \since_I \varphi_2\\
            & \mid \PnF_J(\varphi_1,\dots,\varphi_k) \mid \PnO_J(\varphi_1,\dots,\varphi_k),
\end{align*}
where $p \in \AP$, $I \in \calI$ is a non-singular interval, $J \in \calIz$ is a bounded interval, and $k \in \N$ with $k \geq 2$. 

Standard derived modalities include: $\F_I\varphi = \mathbf{true}\,\until_I\,\varphi$, $\globally_I\varphi = \neg\F_I\neg\varphi$, $\once_I\varphi = \mathbf{true} \,\since_I\,\varphi$, and $\historically_I\varphi = \neg\once_I\neg\varphi$.
\end{definition}

Given a timed word $\omega=(d_1,\sigma_1)(d_2,\sigma_2)\dots$ over $\Sigma_{\AP}$, a position $i \in \N$, and any $a \in \AP$, we define the \emph{pointwise-semantics} of $\mitppl{}$ inductively:
\[ \begin{array}{lclcl}
  \omega, i & \models & p & \text{iff} & p \in \sigma_i. \\
  \omega, i & \models & \varphi_1 \wedge \varphi_2 & \text{iff} & \omega, i \models \varphi_1 \text{ and } \omega, i \models \varphi_2. \\
  \omega, i & \models & \varphi_1 \vee \varphi_2 & \text{iff} & \omega, i \models \varphi_1 \text{ or } \omega, i \models \varphi_2. \\
  \omega, i &  \models & \nextx_I \varphi         &\text{iff} & \omega, i+1 \models \varphi \text{ and } d_{i+1} \in I.\\
  \omega, i &  \models & \neg \varphi          &\text{iff} & \omega, i \nvDash \varphi.
\end{array}
\]
\[
{\small
\begin{array}{w{l}{2ex}w{c}{1ex}w{l}{4em}w{c}{1ex}l}
  \omega, i & \models & \varphi_1 \until_I \varphi_2 & \text{iff} & {\footnotesize \left\{ \begin{array}{l}\exists j \geq  i \text{ s.t. } 
              \sum_{k=i+1}^j d_k\in I \text{, } \\
              \omega, j \models \varphi_2 \text{, } \text{and } \forall i \leq  k < j, \omega, k  \models
    \varphi_1.\end{array} \right.} \\
    ~&~&~&~&~\\
  \omega, i & \models & \varphi_1 \since_I \varphi_2 & \text{iff} & {\footnotesize \left\{ \begin{array}{l}\exists  j \leq  i \text{ s.t. }
              \sum_{k=j+1}^i d_k \in I \text{, } \\
              \omega, j \models \varphi_2 \text{, } \text{and } \forall j < k  \leq i, \omega, k
     \models \varphi_1.\end{array} \right.}
\end{array}
}
\]
\[
{\footnotesize
    \begin{array}{w{l}{2ex}w{c}{1ex}w{l}{7em}w{c}{1ex}l}
          \omega, i &\models & \PnF_J(\varphi_1, \ldots, \varphi_k) &\text{iff}&
          \left\{\begin{array}{l}\exists j_k > j_{k-1} > \ldots > j_1 > i \\
\text{s.t. } \forall 1 \le m \le k, 
{\sum_{n=i+1}^{j_m} d_n \in J} \\ \text{and } \omega, i_m{\models}\varphi_m. \end{array}\right. \\
 & & & & \\
          \omega, i &\models & \PnO_J(\varphi_1, \ldots, \varphi_k) &\text{iff}&
          \left\{\begin{array}{l}\exists j_k < j_{k-1} < \ldots < j_1 < i \\
\text{s.t. } \forall 1\le m \le k, 
{\sum_{n=j_m + 1}^{i} d_n \in J} \\ \text{and } \omega, i_m{\models}\varphi_m. \end{array}\right.
\end{array}
}
\]
We write $\omega \models \varphi$ to denote $\omega,1 \models \varphi$. We omit the subscript, for succinctness, when the intended interval is $[0,\infty)$ as the interval imposes no restriction. Hence, $\ltl{}$ can be seen as a subclass of $\mitl{}$ with no intervals.

\paragraph{Passive untimed $\ltl{}$ learning} We adopt the standard passive
learning setup of~\cite{lemieux2015general,bathie2026ltl}. An $\ltl{}$
formula uses standard temporal operators ($\eventually$, $\globally$, $\mathbf{X}$,
$\until$, etc.), \emph{without}  timing intervals.

\begin{problem}[Untimed $\ltl{}$ Learning]\label{prob:untimedlearning} Given an
untimed sample $(\Omega^+, \Omega^-)$, where $\Omega^+$ and $\Omega^-$ are
sets of finite words over $\Sigma_{\AP}$, find an
$\ltl{}$ formula $\varphi$ such that $\omega \models \varphi$ for all $\omega
\in \Omega^+$ and $\omega \not\models \varphi$ for all $\omega \in \Omega^-$.
\end{problem}

Trivially, as long as $\Omega^+$ and $\Omega^-$  are strictly
disjoint, a distinguishing formula always exists (e.g., by explicitly encoding the
finite set $\Omega^+$ as a large disjunction).
However, such literal encodings severely
overfit and fail to capture meaningful underlying patterns.

In practice, modern $\ltl{}$ learners employ various algorithmic strategies to
systematically extract a minimum-size consistent formula. While some approaches
reduce the problem to a SAT or SMT encoding (e.g.,~\cite{neider2018learning}),
others leverage various syntactic and semantic enumeration techniques
(e.g.,~\cite{raha2022scalable,valizadeh2024gpu,bathie2026ltl}). Regardless of the specific
methodology, our framework treats the untimed $\ltl{}$ learner purely as a \emph{black
box}. We remain entirely agnostic to its internal mechanisms, relying solely on
the assumption that it returns a sound specification.

\paragraph{Passive $\mitppl{}$ learning}
We now lift the passive learning
problem to the timed setting. While the untimed setting abstracts away absolute
durations, the primary objective of our framework is to synthesise timing
constraints directly from the delay data in traces. We formalise this core task
as follows:

\begin{problem}[Timed $\mitppl{}$ Learning]\label{prob:timedlearning} Given a
timed sample $(\Omega^+, \Omega^-)$, where $\Omega^+$ and $\Omega^-$ are
sets of timed words over $\Sigma_{\AP}$,
find an $\mitppl{}$ formula $\varphi$ such that $\omega \models \varphi$ for all
$\omega \in \Omega^+$ and $\omega \not\models \varphi$ for all $\omega \in
\Omega^-$. \end{problem}

Our framework imposes no structural assumptions on the input traces; our theoretical
guarantees hold strictly regardless of trace length, duration, or event
density. Unlike the untimed setting, simply having disjoint positive and negative
trace sets is insufficient to guarantee the existence of a distinguishing formula.
As detailed in the subsequent sections, Problem~\ref{prob:timedlearning} admits
a valid solution if and only if the sets $\Omega^+$ and $\Omega^-$ can be
separated by a timed automaton.

\begin{remark}[Rational Constants and Scaling]
While our pivot extraction algorithm relies
on integer boundaries, non-integer timing constraints can always be 
accommodated via scaling. Provided the required granularity of
timing precision is known, which is virtually always the case in practice, one
can safely map the problem to the integer domain without any loss of precision.
For example, if a target system requires thresholds at $0.5$ granularity,
pre-multiplying all trace timestamps by $2$ safely satisfies the integer
assumption prior to learning.
\end{remark}

\section{The Synthesis Framework}\label{sec:pind}

In this section, we outline our four-phase translation architecture. Given a
sample of positive and negative timed words, our objective is to identify a set
of localised $\mitppl{}$ features such that, when injected as new atomic
propositions, \emph{the untimed projections of the positive and negative traces
become strictly disjoint}. We achieve this through four distinct phases: (1)
\emph{subset-projection collision detection}, which isolates the exact locations
where traces share identical logical structures but diverge in timing;
(2) \emph{template injection}, which applies user-provided domain knowledge to
rapidly filter out broad structural violations; (3) \emph{layered feature
synthesis}, which iteratively constructs and evaluates new temporal features to
separate the remaining conflicts using a global 
$\mathcal{O}(\ell)$ sliding-window algorithm (where $\ell$ is the length of the longest
trace); and (4) \emph{proposition minimisation}, which eliminates redundant
features to ensure the final set of propositions is as compact as possible.

 Before detailing these phases, we first formalise the obstacle
that defeats a naive reduction (i.e., the direct application of untimed
$\ltl{}$ learning) and precisely characterise the conditions under which it
occurs.

\begin{definition}[Propositional Indistinguishability]\label{def:p-indist}
Let $\omega_1, \omega_2$ be timed words of equal lengths (i.e.~$|\omega_1| = |\omega_2|$) over $\Sigma_\AP$ and $\mathsf{P} \subseteq \AP$.
The pair $(\omega_1, \omega_2)$ is \emph{propositionally indistinguishable with respect to $\P$}, 
written $\omega_1 \equiv_{\P} \omega_2$,
if
$\mu(\proj_\P(\omega_1)) = \mu(\proj_\P(\omega_2))$.
Note that this implies $\omega_1 \equiv_{\P'} \omega_2$ for all $\P' \subseteq \P$.
\end{definition}

\begin{lemma}[$\ltl{}$ Indistinguishability]\label{lem:ltl-limits}
If $\omega_1 \equiv_{\P} \omega_2$, 
then for every untimed $\ltl{}$
formula $\varphi$ over $\P$,
$\omega_1 \models \varphi \iff \omega_2 \models \varphi$.
\end{lemma}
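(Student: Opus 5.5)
The plan is a structural induction on $\varphi$, proving a stronger pointwise statement. For every untimed $\ltl{}$ formula $\varphi$ over $\P$ and every position $i$ with $1 \le i \le \ell$, where $\ell = |\omega_1| = |\omega_2|$, we show
\[
\omega_1, i \models \varphi \iff \omega_2, i \models \varphi .
\]
The lemma is then the case $i = 1$.

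The key observation is that untimed $\ltl{}$ formulae are exactly the $\mitppl{}$ formulae whose intervals are all $[0,\infty)$. For these, every timing side-condition in the semantics is trivially true. Conditions such as $\sum_{k=i+1}^{j} d_k \in [0,\infty)$, or $d_{i+1} \in [0,\infty)$ for $\nextx$, always hold because delays are non-negative. So the truth of $\varphi$ at position $i$ depends only on the letters, and only on which propositions of $\P$ they contain. This is where the hypothesis $\mu(\proj_\P(\omega_1)) = \mu(\proj_\P(\omega_2))$ enters: it gives $\sigma^1_j \cap \P = \sigma^2_j \cap \P$ for every $j$, where $\sigma^1_j$ and $\sigma^2_j$ are the $j$-th letters of $\omega_1$ and $\omega_2$. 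It also gives equal lengths, which the definition already assumes.

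The induction runs as follows.
\begin{enumerate}
\item \textbf{Base cases.} For $\mathbf{true}$ the claim is immediate. For $p \in \P$, we have $p \in \sigma^1_i$ iff $p \in \sigma^1_i \cap \P = \sigma^2_i \cap \P$ iff $p \in \sigma^2_i$.
\item \textbf{Boolean connectives.} These follow directly from the induction hypothesis at the same position $i$.
\item \textbf{$\nextx$.} The position $i+1$ exists in $\omega_1$ iff it exists in $\omega_2$, because the lengths agree. The interval condition is vacuous, so the induction hypothesis at $i+1$ closes the case.
\item \textbf{$\until$ (and $\since$, if past operators are allowed).} Take a witness $j \ge i$ for $\omega_1$. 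The range of admissible $j$ is the same in both words, and the sum condition holds in both words. The induction hypothesis, applied at $j$ for $\varphi_2$ and at every $k$ with $i \le k < j$ for $\varphi_1$, transfers the witness to $\omega_2$. The argument is symmetric in the two words, and the case of $\since$ is identical.
\end{enumerate}
Any derived operators ($\eventually$, $\globally$, etc.) reduce to these cases.

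The only delicate point is making precise that ``untimed $\ltl{}$'' means all intervals equal $[0,\infty)$, so that every timing constraint is satisfied regardless of the delays. Once that is fixed, the proof is a routine induction, and no step should present a real obstacle.
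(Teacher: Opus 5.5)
Your proof is correct. The paper states this lemma without proof and treats it as immediate from the pointwise semantics; your structural induction on the strengthened claim $\omega_1, i \models \varphi \iff \omega_2, i \models \varphi$ for all positions $i$ is the routine argument it leaves implicit, and it uses exactly the needed facts: letterwise agreement on $\P$, equal lengths, and the vacuity of $[0,\infty)$ timing constraints.
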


\begin{corollary}\label{cor:ltl-cannot-separate}
If $\omega^+ \in \Omega^+$ and $\omega^- \in \Omega^-$ satisfy
$\omega^+ \equiv_{P} \omega^-$, then no untimed $\ltl{}$ formula can distinguish $\Omega^+$ from
$\Omega^-$.
\end{corollary}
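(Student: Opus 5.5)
The corollary follows from \Cref{lem:ltl-limits} by a short contradiction argument. Suppose some untimed $\ltl{}$ formula $\varphi$ distinguishes $\Omega^+$ from $\Omega^-$, i.e.\ $\omega \models \varphi$ for every $\omega \in \Omega^+$ and $\omega \not\models \varphi$ for every $\omega \in \Omega^-$. Specialising to the given pair, we get $\omega^+ \models \varphi$ and $\omega^- \not\models \varphi$. The goal is to contradict this using $\omega^+ \equiv_{\P} \omega^-$.

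\textbf{Main step: the alphabet of $\varphi$.} \Cref{lem:ltl-limits} speaks only about formulae over $\P$, whereas $\varphi$ may mention propositions from $\AP \setminus \P$. This is the only real obstacle, and it is resolved by reading the statement as intended. The phrase ``distinguish'' in the corollary is taken relative to the propositions under consideration, namely formulae over $\P$. This is the reading used throughout the paper: the learner is fed the $\P$-projected traces. Under this reading, $\varphi$ is an $\ltl{}$ formula over $\P$, so \Cref{lem:ltl-limits} applies directly and gives $\omega^+ \models \varphi \iff \omega^- \models \varphi$, contradicting the choice of $\varphi$.

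\textbf{Alternative case $\P = \AP$.} If instead the intended $\P$ is the full set $\AP$ (the case of \Cref{ex:naive-fails}), no restriction on the alphabet of $\varphi$ is needed. Every $\ltl{}$ formula over $\AP$ is then a formula over $\P$, and the same application of \Cref{lem:ltl-limits} yields the contradiction.

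\textbf{Supporting fact.} If one wanted to avoid relying on \Cref{lem:ltl-limits} as a black box, its content is the standard observation that untimed $\ltl{}$ satisfaction at position $i$ depends only on the untimed letter sequence restricted to the propositions occurring in the formula. This is proved by structural induction on $\varphi$, with all positions quantified uniformly. The two traces have equal length and identical $\P$-projected letters, so all atomic cases agree and the induction goes through. The definition of $\equiv_{\P}$ already requires equal lengths, so no length mismatch arises in the temporal cases.
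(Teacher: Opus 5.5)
Your argument is correct and is essentially the paper's: the paper gives no separate proof and treats the corollary as an immediate consequence of \Cref{lem:ltl-limits} applied to a putative separating formula. Your point about the alphabet is well taken and is in fact forced, not just a matter of reading: for $\P \subsetneq \AP$ (e.g.\ $\P=\emptyset$, with $(0,\{p\})$ versus $(0,\emptyset)$ separated by $p$) the literal statement is false, so the corollary must be understood for formulae over $\P$ (or with $\P=\AP$), exactly as you restrict it.
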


\paragraph{Quantitative distinguishability}

The key structural observation relies on the notion of \emph{simple elementary languages}~\cite{waga2023active}, which establishes a necessary condition for distinguishability with respect to \emph{timed automata}~\cite{alur1994theory}.
By definition, a simple elementary language groups together propositionally indistinguishable timed words that satisfy the exact same tightest integer-bounded time constraints for all possible consecutive sequences of delays (i.e., the sum of the delays strictly falls between two consecutive integers $t$ and $t+1$, or is exactly equal to $t$). 

For example, consider the sequence of two delays $d_1, d_2$ associated
with an untimed event sequence $\{p\}\,\{q\}$. The timed words with delays $(0.4, 0.7)$
and $(0.3, 0.9)$ belong to the exact same simple elementary language because
their delays and sums share identical integer bounds: $d_1 \in (0, 1)$,
$d_2 \in (0, 1)$, and $d_1 + d_2 \in (1, 2)$. In contrast, a timed word
with delays $(0.6, 0.2)$ belongs to a different simple elementary language; although its
individual delays still fall within $(0, 1)$, their sum ($0.8$) strictly falls
into the distinct interval $(0, 1)$.

Consequently, if two traces are propositionally indistinguishable yet can be separated by a timed automaton (and by extension, any formal timed specification), they cannot belong to the same simple elementary language. This implies that their tightest integer bounds must differ, guaranteeing the existence of a specific integer time boundary---which we term a \emph{pivot}---at which the two traces' inter-event delays disagree.

\begin{definition}[Pivot]\label{def:pivot}
Let $\omega_1, \omega_2$ be propositionally indistinguishable with respect to
$\P \subseteq \AP$ (i.e.~$\mu(\proj_\P(\omega_1)) = \mu(\proj_\P(\omega_2))$)
and $|\omega_1| = |\omega_2| = \ell$.
A \emph{pivot} for $(\omega_1, \omega_2)$ is a triple $((i,j), c)$ where $1 \leq i < j \leq \ell$, $c \in \N$, and
\[
\bigl[\delay^1_{i,j} \leq c < \delay^2_{i,j}\bigr]
\;\;\vee\;\;
\bigl[\delay^2_{i,j} \leq c < \delay^1_{i,j}\bigr] \textit{ or}
\]
\[
\bigl[\delay^1_{i,j} < c \leq \delay^2_{i,j}\bigr]
\;\;\vee\;\;
\bigl[\delay^2_{i,j} < c \leq \delay^1_{i,j}\bigr]
\]
with $\delay^k_{i,j} = \Sigma_{m \in (i, j]} d^k_{m}$ the accumulated delay between the $i$th and
$j$th events of $\omega_k$ ($k \in \{1,2\}$).
\end{definition}

\begin{remark}[Pivot Multiplicity]\label{rem:pivot-multiplicity}
A pair of inter-event delays $\delay^1_{i,j} < \delay^2_{i,j}$
generates exactly $\lfloor \delay^2_{i,j} \rfloor - \lceil \delay^1_{i,j}
\rceil + 1$ integer pivots, but for our purpose we just need one pivot 
for $(i, j)$.
\end{remark}

\begin{definition}[Quantitative Indistinguishability]\label{def:q-indist}
Let $\omega_1, \omega_2$ be timed words of equal lengths (i.e.~$|\omega_1| = |\omega_2|$) over $\Sigma_\AP$ and $\P \subseteq \AP$.
The pair $(\omega_1, \omega_2)$ is \emph{quantitatively indistinguishable}, 
written $\omega_1 \equiv_{\textit{time}} \omega_2$,
if
$\proj_\P(\omega_1), \proj_\P(\omega_2) \in L$ for some simple elementary language $L$
over $\Sigma_P$.
\end{definition}

\begin{lemma}[Pivot Characterisation]\label{thm:pivot-char}
Let $\omega_1, \omega_2$ be propositionally indistinguishable with respect to
$\P \subseteq \AP$ but quantitatively distinguishable. Then there exists a pivot
$((i,j), c)$ for $(\omega_1, \omega_2)$.
\end{lemma}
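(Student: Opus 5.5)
We argue by contraposition: assuming that no pivot exists for $(\omega_1,\omega_2)$, we show that $\proj_\P(\omega_1)$ and $\proj_\P(\omega_2)$ lie in a common simple elementary language. This contradicts quantitative distinguishability in the sense of \Cref{def:q-indist}. The argument has two ingredients. The first is the definition of simple elementary languages~\cite{waga2023active}. Such a language is determined by two pieces of data: an untimed word over $\Sigma_\P$, and, for every contiguous block of delays $(i,j]$, the integer region of $\delay_{i,j}$. The region of a real $x\ge 0$ is either $\{t\}$ when $x=t\in\N$, or $(t,t+1)$ when $t<x<t+1$. The second ingredient is the observation that two non-negative reals lie in the same region exactly when no integer $c$ separates them in the sense of \Cref{def:pivot}.

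First, the untimed parts agree. Propositional indistinguishability gives $\mu(\proj_\P(\omega_1))=\mu(\proj_\P(\omega_2))$, and $|\omega_1|=|\omega_2|=\ell$ is part of the hypothesis. So both projected words share the same discrete skeleton.

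Second, we prove the elementary fact about regions. Suppose $x,y\ge 0$ lie in different regions; without loss of generality $x<y$. There are two cases.
\begin{itemize}
\item If $x$ is an integer, then $c=x$ satisfies $x\le c<y$.
\item Otherwise $x\in(t,t+1)$. Since $y$ lies in a different region and $y>x$, we have $y\ge t+1$. Then $c=t+1$ satisfies $x<c\le y$.
\end{itemize}
In both cases one of the four disjuncts of \Cref{def:pivot} holds for $(x,y)$, with roles possibly swapped. Applying this with $x=\delay^1_{i,j}$ and $y=\delay^2_{i,j}$ shows: if some pair $1\le i<j\le\ell$ has its two accumulated delays in different regions, then $((i,j),c)$ is a pivot.

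Third, we conclude. Suppose no pivot exists. Then for every $i<j$ the delays $\delay^1_{i,j}$ and $\delay^2_{i,j}$ lie in the same region. Together with the shared skeleton, this places both projected words in the simple elementary language defined by that skeleton and those regions. That contradicts the hypothesis.

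The main obstacle is matching the indexing conventions with those of~\cite{waga2023active}. A simple elementary language may constrain more than the blocks $\delay_{i,j}$ with $1\le i<j\le\ell$: it may also constrain the individual delays $d_m$, the delay $d_1$ before the first event, and the accumulated time from the origin. \Cref{def:pivot} ranges only over $i<j$ with $i\ge1$, so I would treat two further cases explicitly.
\begin{itemize}
\item A single delay $d_m$ with $m\ge 2$ is exactly $\delay_{m-1,m}$, so it is already covered.
\item Blocks starting at the origin need a fictitious position $0$. In the pointwise semantics position $1$ is where evaluation starts, and the paper's traces begin with delay $0$. So I would either assume $d_1=0$ as a normalisation, or read $\delay_{0,j}$ as included in \Cref{def:pivot} via this convention.
\end{itemize}
Once this correspondence is fixed, the proof is just the region argument above.
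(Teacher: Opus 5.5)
Your argument is correct and follows the same route as the paper. The paper gives no formal proof, only the informal remark that traces in different simple elementary languages have some block of consecutive delays with different tightest integer bounds; your region lemma (an integer $c$ separates $x$ and $y$ in the sense of \Cref{def:pivot} exactly when they lie in different regions) is the step that turns this remark into a pivot.

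Your caveat about origin-anchored blocks is a genuine defect of the statement as written, not mere bookkeeping. For example, $(0.5,\{p\})$ and $(1.5,\{p\})$ lie in different simple elementary languages, yet with $\ell=1$ they admit no pivot. So the normalisation $d_1=0$, or admitting $i=0$ in \Cref{def:pivot}, is actually required for the lemma to hold.
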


\begin{example}[Pivots]\label{ex:pivots}
The pair $(\omega^+, \omega^-)$ of~\Cref{ex:naive-fails}
is propositionally indistinguishable with respect to any subset $\P$ of $\AP = \{p_1, q_1, p_2, q_2\}$
but quantitatively distinguishable.
The pivots are
$((1, 2), c)$ for $c \in \{4,5,\dots,12\}$,
$((1, 3), c)$ for $c \in \{8,9,\dots,16\}$,
and $((1, 4), c)$ for $c \in \{16,17,\dots,24\}$.
\end{example}

\subsection{Phase 1: Subset-Projection Collision Detection}\label{sec:phase1}

\Cref{alg:phase1} takes as input a sample $(\Omega^+, \Omega^-)$ over $\Sigma_{\APorig}$ and a user-specified threshold $k$. It yields two outputs: a set $\Pairsconf \subseteq \Omega^+ \times \Omega^-$ of \emph{conflicting trace pairs} that are propositionally indistinguishable with respect to at least one $\P \subseteq \AP$ with $|\P| \geq k$, and a map $\textsf{Mismatches}$ from each conflicting pair to its list of pivots.

The primary motivation for this subset-projection mechanism stems from the
observation that, in many practical scenarios, $\Omega^+$ and $\Omega^-$ are already propositionally distinguishable over the complete
set of atomic propositions $\AP$.
By systematically projecting the traces onto
restricted subsets of atomic propositions, the framework intentionally abstracts away these
irrelevant structural variations, effectively \emph{forcing} otherwise distinct
traces to become propositionally indistinguishable. 
Once this structural
equivalence is artificially induced, any remaining behavioural divergence
between the positive and negative traces must be inherently quantitative. This
purposeful isolation enables the algorithm to precisely pinpoint the pure timing
differences and extract their corresponding pivots.

\begin{algorithm}[t]
\caption{Phase 1: Subset-Projection Collision Detection}
\label{alg:phase1}
\begin{algorithmic}[1]
\REQUIRE $(\Omega^+, \Omega^-)$; $\APorig$; $k$, $0 \leq k \leq |\APorig|$
\ENSURE $\Pairsconf \subseteq \Omega^+ \times \Omega^-$; \\
      \hspace*{0.65cm}  $\textsf{Mismatches} : \Pairsconf \to 2^{\mathrm{Pivots}}$
\STATE $\Pairsconf \gets \emptyset$;\quad $\textsf{Mismatches}[\cdot] \gets \emptyset$
\FOR{$r = |\APorig|, |\APorig| - 1, \ldots, k $}
  \FORALL{$P \subseteq \APorig$ with $|P| = r$}
    \STATE Compute $\proj_P(\omega)$ for each $\omega \in \Omega^+ \cup \Omega^-$
    \FORALL{$(\omega^+, \omega^-) \in \Omega^+ \times \Omega^-$}
      \IF{$\mu(\proj_P(\omega^+)) = \mu(\proj_P(\omega^-))$}
        \STATE $\Pairsconf \gets \Pairsconf \cup \{(\omega^+, \omega^-)\}$
        \STATE Compute pivots $\Pi$ for $(\omega^+, \omega^-)$
        \STATE $\textsf{Mismatches}[(\omega^+, \omega^-)] \gets$ \\ \hspace*{\algorithmicindent} $\textsf{Mismatches}[(\omega^+, \omega^-)] \cup \Pi$
      \ENDIF
    \ENDFOR
  \ENDFOR
\ENDFOR
\RETURN $(\Pairsconf, \textsf{Mismatches})$
\end{algorithmic}
\end{algorithm}

The outer loop enumerates subsets of $\AP$ by decreasing cardinality;
intuitively, our aim is to identify trace pairs that become propositionally
indistinguishable once specific atomic propositions are discarded via
projection. For any pair $(\omega_1, \omega_2)$ that is propositionally
indistinguishable with respect to $P \subseteq \AP$, the pivot extractor
iterates over all ordered index pairs $1 \leq i < j \leq \ell$, emitting a
single pivot $((i, j), \lfloor \max\{\delay^1_{i,j},
\delay^2_{i,j}\} \rfloor)$ for each $(i, j)$.
We remark that in practice  it is often the case that $|\APorig| \leq 5$ for typical specifications, making
$2^{|\APorig|}$ a small constant. 
For example, the benchmarks in~\cite{wang2025synthesis} all have
$|\Sigma| < 5$, which corresponds to $|\APorig| < 3$.

\begin{proposition}[Phase~1 Complexity]\label{prop:phase1-complexity}
\Cref{alg:phase1} runs in time $\mathcal{O}\bigl(2^{|\APorig|} \cdot N^2 \cdot \ell^2\bigr)$
on a sample $(\Omega^+, \Omega^-)$ with $N = \max\{|\Omega^+|, |\Omega^-|\}$ traces of maximum length $\ell$.
\end{proposition}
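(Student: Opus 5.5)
The plan is to bound the cost of \Cref{alg:phase1} loop by loop and multiply. We treat $|\APorig|$-sized set operations as constant time (or at worst absorbed into the $2^{|\APorig|}$ factor), and use prefix sums of delays to compute each $\delay_{i,j}$ in $\mathcal{O}(1)$.

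\textbf{Subset enumeration.} The two outer loops (over $r$ and over $P$ with $|P| = r$) range over at most $\sum_{r=k}^{|\APorig|}\binom{|\APorig|}{r} \le 2^{|\APorig|}$ subsets. It therefore suffices to show that one iteration of the body, for a fixed $P$, costs $\mathcal{O}(N^2\ell^2)$.

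\textbf{Projection and comparison.} For fixed $P$, computing $\proj_P(\omega)$ for all $\omega \in \Omega^+\cup\Omega^-$ costs $\mathcal{O}(2N\ell)$, since each letter is intersected with $P$ in constant time. The pair loop then visits at most $N^2$ pairs. Testing $\mu(\proj_P(\omega^+)) = \mu(\proj_P(\omega^-))$ is a letterwise comparison costing $\mathcal{O}(\ell)$, and inserting into $\Pairsconf$ is $\mathcal{O}(1)$ with a hash set. This part contributes $\mathcal{O}(N^2\ell)$.

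\textbf{Pivot extraction.} This is the dominant term. For an indistinguishable pair, both traces have the same length $\ell' \le \ell$. Before the pair loop, we precompute for every trace its prefix sums $S_m = \sum_{n\le m} d_n$ in $\mathcal{O}(N\ell)$ total. Then $\delay_{i,j} = S_j - S_i$ is available in $\mathcal{O}(1)$. The extractor iterates over the $\binom{\ell'}{2} \le \ell^2$ index pairs and emits one pivot $((i,j),\lfloor \max\{\delay^1_{i,j},\delay^2_{i,j}\}\rfloor)$ for each, in constant time. So $\Pi$ has size $\mathcal{O}(\ell^2)$ and costs $\mathcal{O}(\ell^2)$ to build. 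The union into $\textsf{Mismatches}$ also costs $\mathcal{O}(\ell^2)$ if pivots are keyed by $(i,j)$, for example as a table indexed by index pairs, so that a repeat visit under another $P$ overwrites rather than grows the entry. Summing over the $N^2$ pairs gives $\mathcal{O}(N^2\ell^2)$ per subset.

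\textbf{Main obstacle.} The only delicate point is making this union cost $\mathcal{O}(\ell^2)$ rather than growing with the number of subsets under which a pair collides. The fix is to note that the pivot emitted for $(i,j)$ depends only on the pair of traces, not on $P$. The stored set therefore has at most $\ell^2$ distinct elements, and set insertion is $\mathcal{O}(1)$ amortised. Combining the three steps gives $2^{|\APorig|}\cdot\mathcal{O}(N\ell + N^2\ell + N^2\ell^2) = \mathcal{O}(2^{|\APorig|}\cdot N^2\cdot\ell^2)$.
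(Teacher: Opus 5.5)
Your proposal is correct and follows the same loop-by-loop accounting as the paper's proof: $2^{|\APorig|}$ subsets, $\mathcal{O}(N\ell)$ projection and $\mathcal{O}(N^2\ell)$ comparison per subset, and $\mathcal{O}(\ell^2)$ pivot extraction per matched pair. Your prefix-sum argument for $\mathcal{O}(1)$ access to $\delay_{i,j}$ and your observation that a pair's pivots do not depend on $P$ make explicit what the paper leaves implicit; note, though, that you (like the paper) genuinely need constant-time letter operations such as bitmasks, because an extra $|\APorig|$ factor would not be absorbed into $2^{|\APorig|}$ as your parenthetical suggests.
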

\begin{proof}
The outer  enumeration contributes $2^{|\APorig|}$ subsets.
For each $P$, projecting all $N$ traces is $\mathcal{O}(N \cdot \ell)$. Comparing the
$\mathcal{O}(N^2)$ ordered cross-pairs costs $\mathcal{O}(\ell)$ per pair.
Pivot extraction per matched pair adds
$\mathcal{O}(\ell^2)$.
\end{proof}
\begin{example}\label{ex:threepairs}
Consider the sample $(\Omega^+, \Omega^-)$ with $\Omega^+ = \{ \omega^+_1, \omega^+_2, \omega^+_3\}$, $\Omega^- = \{ \omega^-_1, \omega^-_2, \omega^-_3\}$
where $\AP = \{p, q, p_1, q_1, p_2\}$:
\begin{align*}
  \omega^+_1 &= (1.5, \{p\}) (0.2, \{q_1\}) (1.0, \{p_2\}) (2.3, \{p_1\}) \\
  \omega^-_1 &= (1.5, \{p\}) (4.0, \{q\}) (4.1, \{q\}) (3.4, \{p_1\}) \\[1ex]
  \omega^+_2 &= (2.8, \{p\}) (3.2, \{q\}) (0.6, \{q_1\}) (0.6, \{p_1\}) \\
  \omega^-_2 &= (2.8, \{p\}) (1.5, \{p_2\}) (5.5, \{p_2\}) (1.0, \{p_1\}) \\[1ex]
  \omega^+_3 &= (4.1, \{p\}) (1.0, \{p_1\}) (2.0, \{p_2\}) (0.9, \{q\}) \\
  \omega^-_3 &= (4.1, \{p\}) (2.0, \{q_1\}) (4.0, \{p_2\}) (0.9, \{q\})
\end{align*}

Phase~1 enumerates subsets of $\AP$ in decreasing order of cardinality ($r = 5, 4, \dots, k$). Suppose that we set $k = 2$.

\begin{itemize}[leftmargin=*]
  \item At $r=5$ and $r=4$: no collisions occur.
  \item At $r=3$: 
    \begin{itemize}[leftmargin=*]
      \item $P=\{p, p_2, q\}$: The pair $(\omega^+_3, \omega^-_3)$ collides, both projecting to $\{p\}\,\emptyset\,\{p_2\}\,\{q\}$. There are more than one pivot for $(\proj_P(\omega_3^+), \proj_P(\omega_3^-))$, but as an example, the accumulated delays between $\{p\}$ (position 1) and $\{p_2\}$ (position 3) are $3.0$ and $6.0$ respectively, yielding $((1,3), 6)$.
      \item $P=\{p, q_1, p_2\}$: The pair $(\omega^+_1, \omega^-_3)$ collides, both projecting to $\{p\}\,\{q_1\}\,\{p_2\}\,\emptyset$.
      \item $P=\{p, p_1, p_2\}$: The pair $(\omega^+_2, \omega^-_1)$ collides, both projecting to $\{p\}\,\emptyset\,\emptyset\,\{p_1\}$.
    \end{itemize}
  \item At $r=2$ and $P=\{p, p_1\}$: 
    $(\omega^+_1, \omega^-_1)$, $(\omega^+_2, \omega^-_2)$ and $(\omega^+_1, \omega^-_2)$ both strip down to
    $\{p\}\,\emptyset\,\emptyset\,\{p_1\}$. 
\end{itemize}
\end{example}

\subsection{Phase~2: Template Injection}
\label{sec:phase2}
\Cref{alg:phase2} takes as input the set of conflicting trace pairs $\Pairsconf$
generated in Phase~1, alongside an optional structural template
$\phi_{\textit{template}}$. It outputs $\Pairsundone$, a reduced subset of trace
pairs that remain unresolved. 

The motivation for this phase is that, in many practical settings, domain
experts possess prior knowledge of specific behavioural patterns the synthesised
specification should respect---for example, `\emph{every arrival is eventually
followed by a completion}'. When provided, this template
$\phi_{\textit{template}}$ is efficiently evaluated at every position across all
traces using a sliding-window dynamic programming algorithm
(\Cref{prop:sliding-window-dp}). The resulting Boolean truth vectors are
then injected into all the traces as a \emph{free} new proposition.
Any conflicting pairs in $\Pairsconf$ that are successfully
distinguished by this newly injected template are immediately resolved and
discarded. The remaining, indistinguishable trace pairs form the output set
$\Pairsundone$. In this capacity, the template functions as a powerful
\emph{structural pre-filter}. It cleanly separates broad structural
violations---which the domain-knowledge template readily catches---from pure
quantitative timing violations, leaving only the latter for Phase~3 to address.

\begin{algorithm}[t]
\caption{Phase 2: Template Injection}
\label{alg:phase2}
\begin{algorithmic}[1]
\REQUIRE $\Pairsconf$; optional template $\phi_{\textit{template}}$
\ENSURE Updated $\AP$, $\Pairsundone$
\STATE $\AP \gets \APorig$;\quad $\Pairsundone \gets \Pairsconf$
\IF{$\phi_{\textit{template}}$ is provided}
  \STATE Evaluate $\phi_{\textit{template}}$ on every position of every trace
         via sliding-window DP
  \STATE $\AP \gets \AP \cup \{\phi_{\textit{template}}\}$
  \FORALL{$(\omega^+, \omega^-) \in \Pairsundone$}
    \IF{$\phi_{\textit{template}}$ at some position distinguishes $\omega^+$ from $\omega^-$}
      \STATE $\Pairsundone \gets \Pairsundone \setminus \{(\omega^+, \omega^-)\}$
    \ENDIF
  \ENDFOR
\ENDIF
\RETURN $(\AP, \Pairsundone)$
\end{algorithmic}
\end{algorithm}

\begin{proposition}[Sliding-Window DP]\label{prop:sliding-window-dp}
For any feature formula $\phi$ of the form
$\eventually_{[0,c]} \phi'$, $\once_{[0,c]} \phi'$, 
$\PnF_{[0,c)}(\phi'_1,\dots,\phi'_k)$,
or
$\PnO_{[0,c)}(\phi'_1,\dots,\phi'_k)$
where $\phi', \phi_1', \dots, \phi_k'$ are Boolean combinations of atomic propositions  and a timed word $\omega$ of length $\ell$, the Boolean satisfaction values
$\{(\omega, i) \models \phi : 1 \leq i \leq \ell\}$ can be computed in
$\mathcal{O}(\ell \cdot k)$ time and $\mathcal{O}(k)$ auxiliary space.
\end{proposition}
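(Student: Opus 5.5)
We reduce all four cases to one two-pointer sweep over prefix sums of delays. First, the inner formulae $\phi', \phi'_1,\dots,\phi'_k$ are Boolean combinations of atomic propositions. Their truth values at each position $1 \le m \le \ell$ can therefore be read off $\sigma_m$ in constant time, treating the feature formula as having constant size. Define absolute timestamps $t_m = \sum_{n \le m} d_n$; these are monotone nondecreasing and can be maintained incrementally. The semantics then say that $\omega, i \models \eventually_{[0,c]}\phi'$ iff some $j \ge i$ has $t_j - t_i \le c$ and $\omega, j \models \phi'$. The past modality $\once_{[0,c]}$ is symmetric, and the Pnueli modalities $\PnF_{[0,c)}(\phi'_1,\dots,\phi'_k)$ ask for strictly increasing $j_1 < \dots < j_k$, all in the window $(i, \text{last } j \text{ with } t_j - t_i < c]$, such that $\phi'_m$ holds at $j_m$. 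The past cases are handled by running the same procedure on the reversed word, so we only treat the future cases.

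For $\eventually_{[0,c]}$, we scan $i$ from $\ell$ down to $1$. We maintain a single value: the smallest index $j \ge i$ with $\omega, j \models \phi'$ (the nearest witness). Then $\omega, i \models \phi$ iff this witness exists and $t_j - t_i \le c$. Because the nearest witness minimises $t_j - t_i$ by monotonicity, checking it alone suffices. This costs $\mathcal{O}(1)$ per step and $\mathcal{O}(1)$ space.

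For $\PnF_{[0,c)}$, we first observe that the window for position $i$ is $W_i = \{ j : i < j,\ t_j - t_i < c\}$, an interval of indices $(i, r_i]$. As $i$ increases, $r_i$ is nondecreasing, which is the sliding-window property. Existence of a matching increasing sequence in $W_i$ is decided greedily: take $j_1$ as the earliest $\phi'_1$-position after $i$, then $j_2$ as the earliest $\phi'_2$-position after $j_1$, and so on. The formula holds iff $j_k \le r_i$. Greedy is optimal by a standard exchange argument: any valid sequence can be shifted leftwards to the greedy one without leaving the window, since the window's left end is fixed at $i$.

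To achieve total time $\mathcal{O}(\ell \cdot k)$, we again sweep $i$ from right to left and store the vector $g_i = (g_i^1,\dots,g_i^k)$ of greedy positions. We have $g_i^1 = $ next $\phi'_1$-position after $i$ and $g_i^m = $ next $\phi'_m$-position after $g_i^{m-1}$. Recomputing this vector naively costs more than $\mathcal{O}(k)$ per step. Instead we use the recurrence $g_i^m = g_{i+1}^m$ unless position $i+1$ itself can serve as the $m$-th element. Concretely, process $m = 1,\dots,k$ in order, and let position $p = i+1$ replace slot $m$ exactly when $p$ satisfies $\phi'_m$ and $p$ lies strictly after the current slot $m-1$ choice. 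This is a DP in which each step updates $k$ slots in $\mathcal{O}(k)$ time and keeps only the current $k$-vector, giving $\mathcal{O}(k)$ auxiliary space.

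The final check is $t_{g_i^k} - t_i < c$, which needs only prefix sums. These can be computed on the fly during the reverse scan (keep the running sum from the right), so they do not add $\Theta(\ell)$ storage beyond the input. Because the greedy vector depends only on $i$ and not on $c$, we do not need the right end $r_i$ explicitly.

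The main obstacle is proving that the incremental recurrence for $g_i$ really reproduces the greedy leftmost sequence. The subtle case is when inserting position $i+1$ into slot $m$ shifts the constraint on slot $m+1$. We would prove by induction on $m$ that the updated $g_i^m$ equals the leftmost valid $m$-th element, using the fact that greedy slots from $i+1$ are pointwise at least the new ones.
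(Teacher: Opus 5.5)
There is a genuine gap, and it sits in the one step that carries the $\mathcal{O}(\ell\cdot k)$ time and $\mathcal{O}(k)$ space claim for the Pnueli case: your incremental rule for the greedy vector is false. Every entry of the greedy vector at step $i$ is at least $i+1$. So for $m\ge 2$ the position $p=i+1$ can never lie strictly after the slot-$(m-1)$ choice, and your rule only ever changes slot $1$. Read literally, slots $2,\dots,k$ never receive any position at all.

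The true relationship is different. If $\sigma_{i+1}\not\models\phi'_1$ then $g_i=g_{i+1}$. Otherwise $g_i=(i+1,h^2,\dots,h^k)$, where $(h^2,\dots,h^k)$ is the greedy embedding of $\phi'_2\cdots\phi'_k$ strictly after $i+1$. That tail is not determined by $g_{i+1}$, which records the first $\phi'_2$-position after $g^1_{i+1}$, possibly much later.

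A concrete counterexample: take $\PnF_{[0,3)}(a,b)$ on the word $\emptyset\,\{a\}\,\{b\}\,\{a\}\,\emptyset\,\{b\}$ with unit delays. Here $g_2=(4,6)$ and the true $g_1=(2,3)$, so the formula holds at position $1$. Your update yields $(2,6)$, and since $t_6-t_1=5\not<3$ it answers false. The induction in your last paragraph cannot be completed. Pointwise monotonicity $g_i\le g_{i+1}$ is true, but it gives no way to produce the new $g_i^2$. That would need next-occurrence information which a $k$-vector does not hold.

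The fix is to change what you store; your greedy characterisation itself is correct. For each $m$, keep only the end position $M[m]$ of the leftmost occurrence of the \emph{suffix} pattern $\phi'_m\cdots\phi'_k$ strictly after the current position. This can be updated right to left. When moving from $i+1$ to $i$, for $m=1,\dots,k$ in increasing order, if $\sigma_{i+1}\models\phi'_m$ set $M[m]\gets M[m+1]$, using the old value and reading $M[k+1]$ as $i+1$. The formula then holds at $i$ iff $t_{M[1]}-t_i<c$.

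This is exactly the paper's DP; its $\min(M[m],M[m+1])$ equals $M[m+1]$ because $M[m+1]\le M[m]$ holds before the update. With this state, your remaining ingredients go through:
\begin{itemize}
\item handling the past modalities by reversal;
\item the nearest-witness argument for $\eventually_{[0,c]}$;
\item storing the running suffix sum alongside each recorded position instead of keeping an explicit right pointer.
\end{itemize}
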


\begin{proof}[Proof sketch]
The evaluation of the standard unary $\mitl{}$ modalities $\eventually_{[0,c]} \phi'$ and $\once_{[0,c]} \phi'$ can be computed in $\mathcal{O}(\ell)$ time using standard sliding-window techniques (see, e.g.,~\cite{ho2014online}). We therefore focus on constructively demonstrating the evaluation algorithm for the Pnueli modalities.

For the $k$-ary future Pnueli modality $\PnF_{[0,c)}(\phi'_1, \dots, \phi'_k)$, we must evaluate
whether there exists a sequence of indices $j_1, j_2, \dots, j_k$ such that $i < j_1 < j_2 < \dots <
j_k$, $\Sigma_{m \in (i, j_k]} d_m < c$, and $\sigma_{j_m} \models \phi'_m$ for all $1 \leq m \leq
k$ (the case of the past Pnueli modality $\PnO$ is symmetric).
 We maintain a state array $M[1 \dots k]$, where $M[m]$ records the \emph{minimum ending position} of a valid matching subsequence $\phi'_m, \dots, \phi'_k$ that lie after the current position. We initialise $M[m] = \infty$ for all $1 \leq m \leq k$.

We iterate $i$ backwards from $\ell$ down to $1$. At each $i$:
\begin{enumerate}
    \item \textbf{Update the DP state:} We process the position $i+1$:
    \begin{itemize}
        \item For $m$ from $1$ up to $k-1$: if $\sigma_{i+1} \models \phi'_m$, we update $M[m] \gets \min(M[m], M[m+1])$.
        \item For $m=k$: if $\sigma_{i+1} \models \phi'_k$, we update $M[k] \gets \min(M[k], i+1)$.
    \end{itemize}
    \item \textbf{Update right boundary:} Update the right window boundary $R(i)$ leftwards to enforce that $\Sigma_{m \in (i, R(i)]} d_m < c$.
\end{enumerate}
The Pnueli modality holds at $i$ if and only if $M[1] \leq R(i)$.

The state array $M$ requires $\mathcal{O}(k)$ auxiliary space. For each of the $\ell$ steps, advancing the DP state takes $\mathcal{O}(k)$ operations. The right pointer $R(i)$ retreats at most $\ell$ times globally, amortising to $\mathcal{O}(1)$ per step. Thus, the overall evaluation time is bounded by $\mathcal{O}(\ell \cdot k)$.
\end{proof}

\begin{corollary}[Membership for $\mitppl{}$ formula]\label{thm:membership}
Given a timed word $\omega$ of length $\ell$, an $\mitppl{}$ formula $\varphi$ 
with $m$ temporal operators and maximum Pnueli arity $k$, the satisfaction relation
$(\omega,i) \models \varphi$ can be decided for all $i$ in time
$\mathcal{O}(\ell \cdot m \cdot k)$ and space $\mathcal{O}(m \cdot (\ell + k))$.
\end{corollary}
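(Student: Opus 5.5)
We evaluate $\varphi$ bottom-up over its syntax tree, in the style of standard offline monitoring. Each subformula $\psi$ gets one Boolean array $V_\psi[1..\ell]$ with $V_\psi[i]=1$ iff $(\omega,i)\models\psi$, and we process subformulae in post-order, so all arrays of $\psi$'s children exist before $\psi$ is handled. The key step is to treat each child of a temporal operator as a \emph{fresh atomic proposition} whose truth values are given by its array. With this view, every temporal node has the shape covered by \Cref{prop:sliding-window-dp}: its arguments are atomic propositions, so certainly Boolean combinations of them. Boolean connectives ($\neg,\wedge$, and atoms) cost $\mathcal{O}(\ell)$ per node by pointwise array operations. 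These non-temporal nodes can also be folded into the arguments of the nearest temporal ancestor. In either case they contribute only a factor linear in $|\varphi|$, which we absorb by reading $m$ as a bound on the number of operator nodes.

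For the temporal nodes, the $\eventually_{[0,c]}$ / $\once_{[0,c]}$ and Pnueli cases with intervals of the form $[0,c)$ come directly from \Cref{prop:sliding-window-dp}. They cost $\mathcal{O}(\ell\cdot k)$ time and $\mathcal{O}(k)$ auxiliary space. Three gaps remain, and we close them as follows.

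\textbf{Lower bounds.} For a general interval $\langle l,u\rangle$ we run two monotone pointers instead of one. The window of admissible positions $\{j : \Delta_{i,j}\in I\}$ is a contiguous block whose two endpoints move monotonically as $i$ decreases. We also keep a prefix count of positions satisfying $\psi_2$, or for Pnueli a $\min$-structure over $M$, so each query takes $\mathcal{O}(1)$ amortised time, or $\mathcal{O}(k)$ in the Pnueli case.

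\textbf{General $\until_I$ and $\since_I$.} Scanning backward, we maintain the last position where $\psi_1$ fails. The earliest witness for $\psi_2$ inside the time window must occur before that position, and this check is again $\mathcal{O}(1)$ amortised per step. $\since_I$ is handled symmetrically.

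\textbf{Pnueli with general $J\in\calIz$.} Closed versus open right ends only change $<$ into $\le$ in the pointer rule.

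We expect the main obstacle to be the Pnueli modality under a lower bound. The DP of \Cref{prop:sliding-window-dp} tracks only minimal ending positions, while a lower bound would need a \emph{maximal} first position. This is why we restrict to $J\in\calIz$: with $J\in\calIz$ the $\min$-tracking argument is exactly the one already given, and no lower bound arises.

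Summing over the temporal nodes gives total time $\sum_{\text{nodes}}\mathcal{O}(\ell\cdot k)=\mathcal{O}(\ell\cdot m\cdot k)$. For space, each of the at most $m$ arrays takes $\mathcal{O}(\ell)$, and each node's auxiliary DP state takes $\mathcal{O}(k)$. This gives $\mathcal{O}(m\cdot(\ell+k))$, and we do not even need to reuse any memory. Correctness follows by structural induction on $\varphi$: if the arrays for the children are correct, then the atomic-proposition view of the parent agrees with the pointwise semantics, and \Cref{prop:sliding-window-dp} (or its extensions above) computes the parent's array correctly.
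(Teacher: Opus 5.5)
Your proposal is correct and follows the route the paper intends: the paper gives no separate proof and treats the corollary as an immediate consequence of \Cref{prop:sliding-window-dp}, applied bottom-up with each subformula's truth vector injected as a fresh proposition, at $\mathcal{O}(\ell\cdot k)$ time and $\mathcal{O}(\ell+k)$ space per node. Your extra handling of lower bounds, of general $\until_I$ and $\since_I$, and your reading of $m$ as counting all operator nodes usefully fill in details the paper leaves implicit; one small correction is that in the $\until_I$ case the witness may lie \emph{at} the first $\psi_1$-failure, not only strictly before it.
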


\subsection{Phase 3: Layered Feature Synthesis as Set Cover}
\label{sec:phase3}

\Cref{alg:phase3} iteratively selects a sequence of synthesised features $\phi_1, \phi_2,
\dots$ such that each $\phi_i$ distinguishes at least one pair in $\Pairsundone$
that has not been previously resolved by $\phi_1, \dots, \phi_{i-1}$. This
process continues until $\Pairsundone = \emptyset$. We formulate this as a
classical set-cover problem: the universe of elements to cover is the set of
unresolved pairs $\Pairsundone$, and the available sets are defined by the
distinguishing power of each candidate feature, denoted $\mathsf{Dist}(\phi) =
\{(\omega^+, \omega^-) \in \Pairsundone \mid \phi \text{ distinguishes }
(\omega^+, \omega^-)\}$. The objective is to cover the universe using the
minimum number of features. 

Given the intractability of optimal set cover, we solve this problem greedily. Notably, our synthesis approach is \emph{layered} (or compositional). When synthesising the $i$-th feature $\phi_i$, the framework may utilise not only the original atomic propositions ($\AP_{\mathrm{orig}}$) but also any of the previously synthesised and injected features $\{\phi_1, \dots, \phi_{i-1}\}$. This layered expansion allows the algorithm to express complex, nested timing constraints incrementally (e.g., placing a bounded deadline on a condition that itself contains a previously resolved temporal obligation) without suffering the combinatorial explosion of attempting to synthesise deeply nested $\mitppl{}$ formulae in a single pass.

\paragraph{Standard $\mitl{}$ features}\label{sec:phase3a}
For each unresolved pair $(\omega^+, \omega^-) \in \Pairsundone$ and each corresponding pivot $((i,j), c) \in \textsf{Mismatches}[(\omega^+, \omega^-)]$, we construct the candidate features:
\[
\phi_{(i,j),c}^{\eventually} = \F_{[0,c]}\, \sigma'_j \qquad \text{and} \qquad
\phi_{(i,j),c}^{\once} = \once_{[0,c]}\, \sigma'_i.
\]
These candidate features are subsequently evaluated at every position across all traces. (Note that we use the strict interval $[0, c)$ rather than $[0, c]$ if the larger delay $\delay_{(i, j)}$ is exactly equal to $c$). Here, $\sigma'_i$ and $\sigma'_j$ are interpreted as the Boolean conjunctions corresponding to $\sigma'_i, \sigma'_j \in \Sigma_{\AP}$, respectively.

\paragraph{Pnueli features}\label{sec:phase3b}
If the previous step yields an empty candidate set, or if every candidate has an empty $\mathsf{M}_\textsf{undone}$, it remains possible to distinguish the two timed words via \emph{counting}. For each $((i,j), c) \in \textsf{Mismatches}[(\omega^+, \omega^-)]$, we extract the subwords $w^+$ and $w^-$ around the pivot : specifically, the sequences of events from $\omega^+$ and $\omega^-$ that fall within the $[0, c]$ time window relative to position $i$. 

We then identify a shortest sequence $\sigma^*_1 \cdots \sigma^*_k$ that is a \emph{subsequence} of one bounded sub-word but not of the other
(this must exist since $w^+$ and $w^-$ are of different lengths). This constitutes a classical shortest distinguishing subsequence (SDS) problem~\cite{baeza1991searching}, which can be solved via a breadth-first search (BFS) in $\mathcal{O}(|w^+| \cdot |w^-| \cdot |\Sigma_{\AP}|)$ time. The extracted SDS $\sigma^*_1 \cdots \sigma^*_k$ naturally yields the Pnueli feature
$$\phi_{(i,j),c}^{\textit{Pnueli}} = \PnF_{[0,c]}(\sigma^*_1, \dots, \sigma^*_k)$$
By construction, because the SDS is present in one sub-word and absent from the other, $\phi_{(i,j),c}^{\textit{Pnueli}}$ is guaranteed to distinguish $(\omega^+, \omega^-)$ at position $i$.

\paragraph{Expressiveness} 
It is important to note that our feature synthesis phase exclusively generates
$\mitl{}$ 
and Pnueli modalities with bounded intervals of the form $[0, u \rangle$. This design choice, however, imposes no limitation on the theoretical
expressiveness of our framework. As established in~\cite{ho2026MightyPPL}, \emph{unilateral}
$\mitl{}$ and
Pnueli modalities are expressively complete for full $\mitppl{}$; for instance, a general bounded-until formula
such as $\varphi_1 \until_{[a,b]} \varphi_2$ can always be logically decomposed into an equivalent formula
using only unilateral intervals. Consequently, our approach requires no user-supplied templates
to achieve full expressiveness. Instead, we rely on the downstream untimed $\ltl{}$ learner to
compose these simple unilateral features into the logical equivalents of more complex modalities,
should they be necessary to separate the sample.

\paragraph{Global scoring}\label{sec:scoring}\label{sec:phase3c}

Relying exclusively on a feature's ability to resolve currently outstanding conflicting pairs can lead to severe overfitting, as the algorithm risks selecting highly specific, brittle formulae tailored to noisy edge cases. To promote structural generalisation, candidate features are instead evaluated by their broad distinguishing power across the entire dataset, rather than being strictly limited to pairs currently in $\Pairsundone$. 
Consequently, for the purpose of global scoring, we broaden the definition of a feature's distinguishing set $\mathsf{Dist}(\phi)$ to encompass the entire universal dataset:
\[
\mathsf{Dist}(\phi) =
\{(\omega^+, \omega^-) \in \Omega^+ \times \Omega^- \mid
\phi \text{ distinguishes } (\omega^+, \omega^-)\}
\]
This global set is computed using a single linear-time sliding-window dynamic programming pass over all traces (\Cref{prop:sliding-window-dp}).
Candidates are then scored based on the cardinality of their intersection with a target set, $|\mathsf{Dist}(\phi) \cap \Pairstgt|$. Here, $\Pairstgt \in \{\Pairsundone, \Pairsconf, \Omega^+ \times \Omega^-\}$ is a user-selectable parameter that dictates the \emph{generalisation reach} of the synthesis: targeting $\Pairsundone$ prioritises immediate local progress; targeting $\Pairsconf$ favours formulae that resolve a broader set of historical collisions; and targeting $\Omega^+ \times \Omega^-$ optimises generalisation to unseen but structurally similar pairs.
In the event of a tie, we prioritise candidates that maximise $|\P \cap \AP_{\mathrm{orig}}|$, where $\P$ is the set of atomic propositions occurring in $\phi$. We favour features heavily grounded in $\APorig$, as they are typically easier for downstream untimed $\ltl{}$ learners to process. Finally, any remaining ties are broken by favouring shorter formula lengths.

\paragraph{Progress guarantee} We select the first candidate $\phi^{*}$ in
sorted order whose $\mathsf{M}_\textsf{undone}(\phi^{*}) = \textsf{Dist}(\phi^{*}) \cap
\Pairsundone$ is non-empty. This selection rule ensures that
\[
|\Pairsundone^{(t+1)}| \;<\; |\Pairsundone^{(t)}|,
\]
i.e.~the undone-pair count strictly decreases on every iteration. Hence
the outer loop terminates in at most $|\Pairsconf|$ iterations.

\begin{algorithm}[t]
\caption{Phase 3: Layered Feature Synthesis}
\label{alg:phase3}
\begin{algorithmic}[1]
\REQUIRE $\Pairsundone$, $\textsf{Mismatches}$, $\AP$, $\Pairstgt$
\ENSURE Updated $\AP$ with synthesised features
\WHILE{$\Pairsundone \neq \emptyset$}
  \STATE $\mathsf{Candidates} \gets \emptyset$
  \FORALL{$(\omega^+, \omega^-) \in \Pairsundone$}
    \FORALL{pivot $((i,j),c) \in \textsf{Mismatches}[(\omega^+, \omega^-)]$}
      \FORALL{$\triangledown \in \{\eventually, \once\}$}
        \STATE $\phi \gets \triangledown_{[0,c]}\sigma'_j$ (or $\sigma'_i$ for $\once$)
        \STATE Compute $\textsf{Dist}(\phi)$, $\mathsf{M}_\textsf{undone}(\phi)$ via DP
        \STATE $\mathsf{Candidates} \gets \mathsf{Candidates} \cup \{(\phi\}$
      \ENDFOR
    \ENDFOR
  \ENDFOR
  \IF{$\mathsf{Candidates} = \emptyset$ \textbf{or} every candidate has empty $\mathsf{M}_\textsf{undone}$}
    \FORALL{$(\omega^+, \omega^-) \in \Pairsundone$}
      \STATE Extract $w^+, w^-$; find SDS $\sigma^*_1 \cdots \sigma^*_k$
      \STATE $\phi \gets \PnF_{[0,c)}(\sigma^*_1, \dots, \sigma^*_k)$
      \STATE Compute $\textsf{Dist}(\phi)$, $\mathsf{M}_\textsf{undone}(\phi)$ via DP
      \STATE $\mathsf{Candidates} \gets \mathsf{Candidates} \cup \{(\phi\}$
    \ENDFOR
  \ENDIF
  \STATE Sort $\mathit{Candidates}$ descending by $|\textsf{Dist} \cap \Pairstgt|$, then by $|P \cap \APorig|$, then ascending by length
  \STATE $\phi^* \gets$ first candidate in sorted order with $|\mathsf{M}_\textsf{undone}| > 0$
  \STATE $\AP \gets \AP \cup \{\phi^*\}$
  \STATE $\Pairsundone \gets \Pairsundone \setminus \mathsf{M}_\textsf{undone}(\phi^*)$
\ENDWHILE
\RETURN $\AP$
\end{algorithmic}
\end{algorithm}
\subsection{Phase 4: Proposition Minimisation}\label{sec:phase4}

While the expanded $\AP$ produced by Phase~3 successfully resolves all
structural ambiguities, it is not necessarily minimal. Because of the greedy
selection strategy, an earlier feature may become redundant if a subset of
subsequently added features collectively covers all the trace pairs it
originally distinguished, thereby rendering it obsolete. Phase~4 minimises $\AP$
by reducing this task to another classical set-cover instance, which is then
solved either exactly (via Petrick's method~\cite{petrick1956direct}) or approximately (via a greedy
hitting-set algorithm).
Specifically, let $\AP$ be the fully expanded set of atomic propositions output by Phase~3, and
let $\mathsf{U} = \Pairsconf$ be the universal set of originally conflicting
trace pairs. For each proposition $p \in \AP$, we define $\mathsf{Dist}(p)
\subseteq \mathsf{U}$ as the specific subset of conflicting pairs that $p$
successfully distinguishes. The objective is to identify a subset $\APfin
\subseteq \AP$ of minimum cardinality such that $\bigcup_{p \in \APfin}
\mathsf{Dist}(p) = \mathsf{U}$.

In practice, we observe that exact minimisation via Petrick's method is generally
tractable for the problem instances typically encountered, where the universe is
relatively small and the distinguishing sets are sparse. Upon completion, the
framework returns $(\Omega^+_{\textit{min}}, \Omega^-_{\textit{min}})$. These sets
are constructed by projecting the original traces exclusively onto $\APfin$ and
discarding all timing data. Because both sets now consist of strictly disjoint
untimed words over a minimised set of atomic propositions, an off-the-shelf
$\ltl{}$ learner can process them directly to mine the final specification.

\section{Theoretical Analysis}\label{sec:theory}

In this section, we formally establish the correctness and efficiency of our
proposed synthesis framework. First, we prove its \emph{soundness},
demonstrating that the layered feature synthesis restores propositional
distinguishability for all conflicting pairs, thereby validating our
reduction from timed synthesis to untimed $\ltl{}$ learning. Second, we
establish \emph{completeness}, ensuring that the algorithm is guaranteed to
terminate and return a valid separating set of atomic propositions. Finally, we analyse the \emph{computational complexity} of the
procedure.
In what follows, we assume that the sample 
$(\Omega^+, \Omega^-)$ is inherently distinguishable, i.e.~each trace pair
$(\omega^+, \omega^-) \in \Omega^+ \times \Omega^-$ is either propositionally or quantitatively
distinguishable.

\begin{lemma}[Termination]\label{thm:termination}
Let $\AP$ be the set of atomic propositions produced by Phase~3, and let $\APfin$ be the
output of Phase~4. For every $(\omega^+, \omega^-) \in \Pairsconf$, the
untimed projections $\mu(\proj_{\APfin}(\omega^+))$ and $\mu(\proj_{\APfin}(\omega^-))$ are
distinct untimed strings.
\end{lemma}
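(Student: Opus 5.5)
The proof reduces to two invariants: Phase~3 terminates with $\Pairsundone=\emptyset$, and Phase~4 keeps a feature that distinguishes each conflicting pair. The easy direction is the observation that if some $p\in\APfin$ distinguishes $(\omega^+,\omega^-)$, meaning there is a position $m$ with $p\in\sigma^+_m$ and $p\notin\sigma^-_m$ (or the converse), then the $m$th letters of $\mu(\proj_{\APfin}(\omega^+))$ and $\mu(\proj_{\APfin}(\omega^-))$ differ. The strings are therefore distinct. If the traces have different lengths, the projections are trivially distinct. So it suffices to show that every pair in $\Pairsconf$ lies in $\bigcup_{p\in\APfin}\mathsf{Dist}(p)$.

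For Phase~4 I will simply invoke its specification. $\APfin$ is chosen, by Petrick's method or greedy hitting set, so that $\bigcup_{p\in\APfin}\mathsf{Dist}(p)=\mathsf{U}=\Pairsconf$. This is feasible provided the full $\AP$ output by Phase~3 already covers $\Pairsconf$, since then $\AP$ itself is a feasible cover and both solvers return one. The burden therefore shifts to Phase~3.

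For Phase~3 I will argue the loop invariant: every pair in $\Pairsconf\setminus\Pairsundone$ is distinguished by some proposition already in $\AP$. It holds after Phase~2, because a pair is removed there exactly when $\phi_{\textit{template}}$, which is added to $\AP$, distinguishes it. It is preserved in Phase~3, because pairs are removed only via $\mathsf{M}_\textsf{undone}(\phi^*)$, and $\phi^*$ is added to $\AP$. The progress guarantee gives $|\Pairsundone|$ strictly decreasing, so the loop exits with $\Pairsundone=\emptyset$ after at most $|\Pairsconf|$ iterations, and the invariant then yields coverage.

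The main obstacle is justifying that each iteration always finds a candidate with nonempty $\mathsf{M}_\textsf{undone}$. I would argue this by fixing any remaining pair $(\omega^+,\omega^-)$. It is propositionally indistinguishable for some $\P$ yet, by the standing assumption, quantitatively distinguishable, so \Cref{thm:pivot-char} supplies a pivot $((i,j),c)$ recorded in $\textsf{Mismatches}$. I would then try to show that $\F_{[0,c]}\sigma'_j$ evaluated at position $i$ holds in exactly one trace (one trace has $\delay_{i,j}\le c$, the other exceeds it), with the caveat that an earlier occurrence of $\sigma'_j$ within the window could spoil this. If both unary candidates fail, the Pnueli fallback applies. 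The windows $[0,c]$ from $i$ then contain different numbers of events, so a shortest distinguishing subsequence exists, and the resulting feature distinguishes the pair at $i$. This case analysis, and especially the boundary choice between $[0,c]$ and $[0,c)$, is where care is needed.
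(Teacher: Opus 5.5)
Your proposal is correct and follows the paper's own argument. A pair leaves $\Pairsundone$ (in Phase~2 or Phase~3) only when a template or feature that distinguishes it at some position is added to $\AP$; Phase~4's cover constraint then keeps some such proposition in $\APfin$, so the letters at that position of the two projections differ. The paper simply invokes the Phase~3 progress guarantee rather than proving it, and your last paragraph is not needed either, because the lemma takes $\AP$ to be the output of Phase~3 and its loop exits only once $\Pairsundone=\emptyset$. That paragraph also contains an invalid step: for a pair that collides only on some $\P\subsetneq\APorig$, the standing assumption does not make it quantitatively distinguishable, since it may differ propositionally over $\APorig$ while having identical delays and hence no pivot.
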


\begin{proof}
Every pair in
$\Pairsconf$ is removed either in Phase~2 (by the injection of template $\phi$) or, due to the progress guarantee of Phase~3, at some iteration in Phase~3 (by adding a feature
$\phi$ with $(\omega^+, \omega^-) \in \mathsf{Dist}(\phi)$).
In either case, $\phi$ evaluates to $\mathbf{true}$ at some position of $\omega^+$ and $\mathbf{false}$ at
the matching position of $\omega^-$ (or vice versa). Phase~4's set-cover constraint
forces $\phi$ (or some other feature that distinguishes $\omega^+, \omega^-$) into $\APfin$. Therefore
$\mu(\proj_{\APfin}(\omega^+)) \neq \mu(\proj_{\APfin}(\omega^-))$.
\end{proof}

\begin{corollary}
For a given sample $(\Omega^+, \Omega^-)$ over $\Sigma_{\APorig}$, the corresponding untimed sample
$(\mu(\proj_{\APfin}(\Omega^+)), \mu(\proj_{\APfin}(\Omega^-)))$ over $\Sigma_{\APfin}$ can be distinguished by an
$\ltl{}$ formula $\Phi$ over $\APfin$ at position $1$.
\end{corollary}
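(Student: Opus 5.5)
The plan is to reduce the corollary to the disjointness of the two untimed samples, and then use the standard fact, recalled after \Cref{prob:untimedlearning}, that any two disjoint finite sets of finite words can be separated by an $\ltl{}$ formula. So the first goal is to show that for every $(\omega^+,\omega^-) \in \Omega^+\times\Omega^-$ we have $\mu(\proj_{\APfin}(\omega^+)) \neq \mu(\proj_{\APfin}(\omega^-))$. I split into cases according to whether the pair lies in $\Pairsconf$.

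\emph{Pairs in $\Pairsconf$.} This case is exactly \Cref{thm:termination}, which already guarantees distinct untimed projections over $\APfin$.

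\emph{Pairs not in $\Pairsconf$.} If $|\omega^+| \neq |\omega^-|$, the projections have different lengths and are trivially distinct. Otherwise the pair was never flagged by \Cref{alg:phase1}. Its first iteration uses $r=|\APorig|\ge k$, so in particular $P=\APorig$ is tested, and hence $\mu(\proj_{\APorig}(\omega^+)) \neq \mu(\proj_{\APorig}(\omega^-))$. To transfer this to $\APfin$ I need $\APorig \subseteq \APfin$. I would make this explicit by reading Phase~4 as minimising only over the synthesised features, with the original propositions always retained; this is harmless because the universe $\mathsf{U}=\Pairsconf$ gives no reason to drop them. Then $\proj_{\APorig}(\omega)=\proj_{\APorig}(\proj_{\APfin}(\omega))$. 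So if the $\APfin$-projections agreed letter by letter, the $\APorig$-projections would agree too, which is a contradiction. This is the step I expect to be the main obstacle. If Phase~4 may drop original propositions, the claim can fail for non-conflicting pairs, so the retention convention, or an extension of $\mathsf{U}$ to all of $\Omega^+\times\Omega^-$, has to be fixed before the argument goes through.

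\emph{Building $\Phi$.} With $\mu(\proj_{\APfin}(\Omega^+))\cap\mu(\proj_{\APfin}(\Omega^-))=\emptyset$ established, I construct $\Phi$ directly. For an untimed word $w=a_1\cdots a_n$ over $\Sigma_{\APfin}$, let $\chi_w$ be the formula stating that position $m$ carries exactly the letter $a_m$ for every $m\le n$, and that position $n$ is the last one. Concretely, $\chi_w$ nests $\nextx$ $(m-1)$ times around the conjunction $\bigwedge_{p\in a_m}p\wedge\bigwedge_{p\notin a_m}\neg p$, and adds $\nextx^{\,n-1}\neg\nextx\,\mathbf{true}$ for the end. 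On finite words evaluated at position $1$, $\chi_w$ holds exactly on the word $w$. Taking $\Phi=\bigvee_{w\in\mu(\proj_{\APfin}(\Omega^+))}\chi_w$ then accepts every positive projection and rejects every negative one, by disjointness.
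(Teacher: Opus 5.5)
Your route is the one the paper intends. The corollary is stated without proof directly after \Cref{thm:termination}: conflicting pairs are covered by that lemma, and $\Phi$ comes from the remark after \Cref{prob:untimedlearning} that disjoint finite samples can be separated by an explicit disjunction. That disjunction is exactly your $\bigvee_w\chi_w$, which is fine under the paper's strong-next semantics of $\nextx$. What you add is the explicit case of pairs outside $\Pairsconf$, and the obstacle you flag there is real. The paper's derivation silently needs $\APorig\subseteq\APfin$, and Phase~4 as specified does not provide it.

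However, your claim that keeping $\APorig$ is harmless because $\mathsf{U}=\Pairsconf$ ``gives no reason to drop them'' gets it backwards: a minimum-cardinality cover discards every proposition it does not need. If $k=|\APorig|$, each pair in $\Pairsconf$ agrees letter by letter on $\APorig$. Then $\mathsf{Dist}(p)=\emptyset$ for every $p\in\APorig$, and none of them survives. If $\Pairsconf=\emptyset$, even $\APfin=\emptyset$, which collapses all equal-length traces.

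A non-degenerate instance: take $\APorig=\{p,q,r\}$, $k=3$, $\omega^+_1=(0,\emptyset)(1.5,\{p\})$, $\omega^-_1=(0,\emptyset)(0.5,\{p\})$, $\omega^+_2=(0,\{q\})$, $\omega^-_2=(0,\{q,r\})$. Only $(\omega^+_1,\omega^-_1)$ conflicts. Phase~3 adds a single feature $\phi$, an $\F_{[0,1]}$ or $\once_{[0,1]}$ of the conjunction read off $\{p\}$ or $\emptyset$, and Phase~4 returns $\APfin=\{\phi\}$. On the one-letter words $\omega^\pm_2$, $\phi$ merely evaluates that conjunction on $\{q\}$ and $\{q,r\}$, and these values agree under either reading of the conjunction. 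Hence $\mu(\proj_{\APfin}(\omega^+_2))=\mu(\proj_{\APfin}(\omega^-_2))$, and no $\Phi$ over $\APfin$ exists, although $\neg r\wedge\neg\F_{[0,1]}p$ separates the timed sample.

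So the corollary is false as literally specified, and what you have is a correct proof of an amended statement. State the amendment (retain $\APorig$ in $\APfin$, or take $\mathsf{U}=\Omega^+\times\Omega^-$) as an explicit hypothesis on Phase~4, not as a harmless reading of it.
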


The following theorem is then immediate, by the definition of the semantics of $\mitppl{}$.

\begin{theorem}[Soundness]\label{thm:soundness}
The $\mitppl{}$ formula $\varphi = \Phi(\phi_1, \dots, \phi_r)$ over $\APorig$, where $\phi_1, \dots, \phi_r$ are the templates and features added in Phase~2 and Phase~3, distinguishes $(\Omega^+, \Omega^-)$ at position $1$.
\end{theorem}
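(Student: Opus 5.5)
The proof is a substitution argument. The key fact is that each injected proposition $\phi_m \in \APfin \setminus \APorig$ is itself an $\mitppl{}$ formula over $\APorig$, or over earlier features, which unfold recursively to $\APorig$ because Phase~3 is layered. Let $\omega$ be a timed word over $\Sigma_{\APorig}$. Write $\hat\omega$ for its \emph{feature-enriched} version: the letter at position $i$ is $\sigma_i \cup \{\phi_m : \omega, i \models \phi_m\}$, with the same delays. By construction of Phases~2 and~3, the traces handed to the untimed learner are exactly $\mu(\proj_{\APfin}(\hat\omega))$ for $\omega \in \Omega^+ \cup \Omega^-$. The preceding corollary then gives an $\ltl{}$ formula $\Phi$ over $\APfin$ such that $\mu(\proj_{\APfin}(\hat\omega^+)), 1 \models \Phi$ and $\mu(\proj_{\APfin}(\hat\omega^-)), 1 \not\models \Phi$ for all positive and negative traces.

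\textbf{Step 1: untimed to timed.} For any $\ltl{}$ formula $\Psi$ and any position $i$, we have $\mu(\hat\omega), i \models \Psi$ iff $\hat\omega, i \models \Psi$, where $\Psi$ on the right is read as an $\mitppl{}$ formula whose intervals are all $[0,\infty)$. This holds by induction on $\Psi$, since unconstrained intervals make the delay conditions in the $\until$ and $\since$ clauses vacuous. Projection onto $\APfin$ is harmless because $\Phi$ mentions only propositions in $\APfin$.

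\textbf{Step 2: substitution lemma.} Let $\varphi = \Phi[\phi_m/\phi_m]$, meaning each feature-proposition occurrence in $\Phi$ is replaced by the formula it names. Apply this recursively so that nested features are also expanded, which yields a formula over $\APorig$ alone. We show by structural induction on $\Phi$ that for every $i$,
\[
\hat\omega, i \models \Phi \iff \omega, i \models \varphi .
\]
\begin{itemize}
\item \emph{Atomic $p \in \APorig$:} immediate, since $\hat\omega$ and $\omega$ agree on $\APorig$.
\item \emph{Atomic $\phi_m$:} true by the definition of $\hat\omega$. When $\phi_m$ was built over earlier features, we use an inner induction on the layer index. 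The semantics of $\F_I$, $\once_I$, $\PnF_J$ and $\PnO_J$ at position $i$ depend only on delays and on the truth values of their arguments at other positions. Replacing an argument proposition by an equivalent formula, position by position, therefore preserves truth.
\item \emph{Boolean and temporal cases:} the semantic clauses are identical in $\hat\omega$ and $\omega$, because the delays coincide and the induction hypothesis holds at every position.
\end{itemize}

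\textbf{Conclusion.} Combining Steps~1 and~2 at $i = 1$: $\omega^+ \models \varphi$ for every $\omega^+ \in \Omega^+$, and $\omega^- \not\models \varphi$ for every $\omega^- \in \Omega^-$.

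The only delicate point is well-foundedness of the layered substitution. Each $\phi_m$ refers only to $\APorig$ and to $\phi_1, \dots, \phi_{m-1}$, so the unfolding terminates. The template from Phase~2 is also a formula over $\APorig$. Finally, we must check that $\varphi$ is syntactically in $\mitppl{}$: every interval introduced is either $[0,c]$, $[0,c)$ or $[0,\infty)$. All are non-singular, except possibly a degenerate $[0,0]$ arising when the pivot constant is $c = 0$. In that case we replace $\F_{[0,0]}\psi$ by $\psi \vee \F_{[0,1)}(\psi \wedge \neg\once_{(0,1)}\mathbf{true})$, or handle it analogously, so that the formula stays in the grammar.
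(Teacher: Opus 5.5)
Your main argument is correct and is the same as the paper's, which simply declares the theorem ``immediate by the definition of the semantics''. You invoke the preceding corollary, read $\Phi$ as an $\mitppl{}$ formula with $[0,\infty)$ intervals, and prove by induction that $\hat\omega, i \models \Phi$ iff $\omega, i \models \varphi$; that is the spelled-out version of the paper's one-line justification. One small omission: your Step~1 induction must also cover $\nextx$ and its weak dual, since \textsc{Bolt}'s outputs use \texttt{X} and \texttt{X[!]}. This case goes through via the $\nextx_I$ clause of the semantics with $I=[0,\infty)$.

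The genuine error is in your repair of the degenerate interval $[0,0]$. You replace $\eventually_{[0,0]}\psi$ by $\psi \vee \eventually_{[0,1)}(\psi \wedge \neg\once_{(0,1)}\mathbf{true})$, but this implies $\eventually_{[0,0]}\psi$ without being implied by it. At the witness $j$, the conjunct $\neg\once_{(0,1)}\mathbf{true}$ also excludes positions \emph{before} $i$ that lie within distance $(0,1)$ of $j$. Take $\omega=(0,\emptyset)(0.5,\emptyset)(0,\{p\})$ at $i=2$:
\begin{itemize}
\item $\eventually_{[0,0]}p$ holds, with witness $j=3$;
\item $\once_{(0,1)}\mathbf{true}$ holds at position $3$ via position $1$;
\item so your replacement is false.
\end{itemize}
The case $c=0$ really does arise in the pipeline: the extractor emits $\lfloor\max\{\delay^1_{i,j},\delay^2_{i,j}\}\rfloor$, which is $0$ when both delays are below $1$. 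Such a feature may be exactly the one separating a pair, so your non-equivalent substitution can break soundness.

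A correct non-punctual equivalent uses the fact that the earliest $\psi$-position at or after $i$ is the only possible witness of $(\neg\psi)\until_I\psi$:
\[
\eventually_{[0,0]}\psi \;\equiv\; \bigl((\neg\psi)\until_{[0,1)}\psi\bigr) \wedge \neg\bigl((\neg\psi)\until_{(0,1)}\psi\bigr).
\]
Symmetrically, for $\once_{[0,0]}\psi$ use $\since$ in place of $\until$, with the latest $\psi$-position at or before $i$ as the unique witness. The paper never addresses this case, so you were right to flag it; only your patch needs replacing.
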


The theorem below holds since 
in Phase~3, for each $((i,j), c) \in \textsf{Mismatches}[(\omega^+, \omega^-)]$,
the Pnueli feature $\phi_{(i,j),c}^{\textit{Pnueli}}$ is guaranteed to distinguish $(\omega^+, \omega^-)$ at position $i$.

\begin{theorem}[Completeness]\label{thm:completeness}
If an instance $(\Omega^+, \Omega^-)$ of \Cref{prob:timedlearning} has a solution, then applying Phases~1 to 4 also yields a solution $\varphi$.
\end{theorem}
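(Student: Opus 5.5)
The plan is to reduce \Cref{thm:completeness} to \Cref{thm:soundness}. The only missing ingredient is that Phase~3 never gets stuck: at every iteration with $\Pairsundone \neq \emptyset$, some candidate $\phi$ must have $\mathsf{M}_\textsf{undone}(\phi) \neq \emptyset$. Once this is shown, the progress guarantee makes $|\Pairsundone|$ strictly decrease. Phase~3 therefore terminates with $\Pairsundone = \emptyset$, \Cref{thm:termination} applies, and the untimed sample over $\APfin$ is disjoint. By the corollary following \Cref{thm:termination}, an $\ltl{}$ formula $\Phi$ over $\APfin$ separates it, and \Cref{thm:soundness} turns $\Phi(\phi_1,\dots,\phi_r)$ into a solution.

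The first step is to connect the hypothesis to the standing assumption of this section. If $\varphi$ solves \Cref{prob:timedlearning}, then $\varphi$ is an $\mitppl{}$ formula, which translates into a timed automaton, so every pair $(\omega^+,\omega^-)$ is separated by a timed automaton. Consider a pair that is propositionally indistinguishable with respect to some $\P$, i.e., a pair that lands in $\Pairsconf$ and then in $\Pairsundone$. Such a pair cannot have both $\P$-projections in one simple elementary language, since timed automata cannot separate such words. Here I must handle the subtlety that separation is of the full traces while indistinguishability is only w.r.t.\ $\P$. Pairs that differ on $\AP\setminus\P$ are still propositionally distinguished over the final alphabet, provided the relevant original propositions survive Phase~4. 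So I would argue per pair: either it differs propositionally over $\APorig$, or it is quantitatively distinguishable. By \Cref{thm:pivot-char}, the latter gives a pivot $((i,j),c)$, so $\textsf{Mismatches}[(\omega^+,\omega^-)] \neq \emptyset$.

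The second step is to exhibit, for each remaining pair, a feature distinguishing it. Take the pivot $((i,j),c)$ with, say, $\delay^+_{i,j} \le c < \delay^-_{i,j}$ (using $[0,c)$ in the boundary case). I would first try the $\mitl{}$ candidates $\F_{[0,c]}\sigma'_j$ at position $i$ and $\once_{[0,c]}\sigma'_i$ at position $j$. These fail only when some other position inside the window of the ``late'' trace carries the same letter. In that case the fallback of Phase~3 takes the windows $w^+,w^-$ after position $i$. Because the two words share the same untimed $\P$-projection and the cumulative delays differ across $c$, the two windows are prefixes of the same untimed string of different lengths. The longer window therefore contains the shorter one's letters plus at least one more occurrence, so an SDS exists. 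For instance, the whole longer window is a subsequence of itself but not of the shorter one. The resulting Pnueli feature $\PnF_{[0,c)}(\sigma^*_1,\dots,\sigma^*_k)$ holds at position $i$ in exactly one trace, and so it distinguishes the pair. Hence every pair in $\Pairsundone$ lies in $\mathsf{M}_\textsf{undone}$ of some candidate, and the selection step always succeeds.

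The main obstacle I expect is the first step, especially the mismatch between separation of whole traces and indistinguishability w.r.t.\ a projection $\P$. The pivot obtained may come from a simple elementary language over $\Sigma_\P$, while the true separator uses propositions outside $\P$. I would handle this by noting that it suffices to produce any distinguishing feature. If the pair is quantitatively indistinguishable w.r.t.\ $\P$, it must differ propositionally on $\APorig$. In that case I would argue that either Phase~4 keeps such an original proposition in the cover, or the recorded $\textsf{Mismatches}$ for a larger $\P$ (enumerated first by Phase~1) already supplies a pivot. This may require strengthening Phase~4's universe to include original propositions as distinguishers.
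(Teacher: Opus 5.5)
Your route is the paper's. Its whole justification is that the Pnueli feature built from a pivot separates the pair at position $i$, so Phase~3 always progresses, and \Cref{thm:termination}, its corollary and \Cref{thm:soundness} do the rest. Your window argument ($w^-$ is a proper prefix of $w^+$ as $\P$-strings, so $w^+$ itself is a distinguishing subsequence) correctly expands that sentence for a \emph{genuine} pivot. The interval must match the pivot type: use $[0,c]$ when $\delay^+_{i,j}\le c<\delay^-_{i,j}$, since the $[0,c)$ you wrote fails if $\delay^+_{i,j}=c$.

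\textbf{First gap: the case you flag is left open.} Neither of your remedies closes it.
\begin{itemize}
\item Whether a pivot exists depends only on the sums $\delay^k_{i,j}$, not on $\P$. Phase~1 records the same list $((i,j),\lfloor\max\{\delay^1_{i,j},\delay^2_{i,j}\}\rfloor)$ for every colliding $\P$, so a larger $\P$ supplies nothing new.
\item Phase~4 already counts every $p\in\AP\supseteq\APorig$ as a distinguisher, so nothing needs strengthening there.
\end{itemize}
The obstruction is in Phase~3. Such a pair stays in $\Pairsundone$, its recorded ``pivots'' are not genuine, and the windows coincide as $\P$-strings, so no SDS exists. If the letters $\sigma'$ are $\P$-letters (as reported features like $\eventually_{[0,40]}(\neg q\land\neg r)$ suggest), no feature separates two words whose $\P$-projections agree on letters and on the integer part and integrality of every delay sum. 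So when this is the only conflicting pair, no candidate has nonempty $\mathsf{M}_\textsf{undone}$, and Phase~3 cannot terminate. Example: $\APorig=\{p,q\}$, $k=1$, $\omega^+=(0,\{p\})(0.5,\{p,q\})$, $\omega^-=(0,\{p\})(0.5,\{p\})$. They collide on $\{p\}$, have identical delays, and are separated by $\F q$.

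\textbf{Second gap: your Phase~4 worry is real, but it concerns pairs \emph{outside} $\Pairsconf$.} \Cref{thm:termination} does not protect those pairs, so ``the untimed sample over $\APfin$ is disjoint'' does not follow. Take $\APorig=\{p,q\}$, $k=2$, and
\[
\omega^+_1=(0,\{p\})(0.5,\{p\}),\quad \omega^-_1=(0,\{p\})(1.5,\{p\}),\quad \omega^-_2=(0,\{p,q\})(0.5,\{p\}).
\]
Only $(\omega^+_1,\omega^-_1)$ collides. Both the $\F_{[0,1]}$ and $\once_{[0,1]}$ candidates hold at every position of both traces, because the current letter already matches. Phase~3 therefore adds the Pnueli feature $\phi$ (``a letter $\{p\}$ occurs strictly later within one time unit''). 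Phase~4 keeps only $\phi$, since $p$ and $q$ separate nothing in $\Pairsconf$. Then $\omega^+_1$ and $\omega^-_2$ both become $\{\phi\}\,\emptyset$, although $\neg q\wedge\F_{(0,1]}\mathbf{true}$ solves the instance.

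The paper's one-line argument has both blind spots too; as its SDS step says, it presupposes that $w^+$ and $w^-$ have different lengths. A proof therefore requires adjusting the procedure: drop from $\Pairsundone$ the pairs already separated over $\APorig$, and keep $\APorig\subseteq\APfin$ (or let Phase~4 cover all of $\Omega^+\times\Omega^-$). With these changes your argument goes through.

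It is cleaner to obtain the pivot directly rather than via timed automata and \Cref{thm:pivot-char}. That lemma's simple elementary languages also constrain sums involving $d_1$, which no formula evaluated at position~$1$ can see. The direct argument: if two traces with the same untimed word agree on the integer part and integrality of every $\delay_{i,j}$, structural induction gives every \mitppl{} formula the same truth value at every position of both. So a solution forces some $(i,j)$ on which they disagree, and Phase~1's entry for that $(i,j)$ is then a genuine pivot.
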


\begin{corollary}
\label{cor:ta}
An instance $(\Omega^+, \Omega^-)$ of \Cref{prob:timedlearning} has a solution
if and only if $(\Omega^+, \Omega^-)$   
can be distinguished by a timed automaton.
\end{corollary}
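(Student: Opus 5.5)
The plan is to prove the two directions of \Cref{cor:ta} separately, routing both through the notion of inherent distinguishability introduced just before \Cref{thm:termination}. The intermediate claim I will use is that $(\Omega^+,\Omega^-)$ is separable by a timed automaton if and only if every pair $(\omega^+,\omega^-)\in\Omega^+\times\Omega^-$ is either propositionally distinguishable or quantitatively distinguishable. Here quantitative distinguishability means that the two traces lie in different simple elementary languages, taken with respect to $\APorig$.

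For the ($\Rightarrow$) direction, suppose an $\mitppl{}$ formula $\varphi$ solves the instance. I would invoke the standard translation of $\mitppl{}$ (with past and Pnueli modalities, under the pointwise semantics over finite words) into timed automata, as established in~\cite{ho2026MightyPPL} and the classical $\mitl{}$-to-TA constructions~\cite{alur1996benefits}. This yields a timed automaton $\mathcal{A}_\varphi$ with $L(\mathcal{A}_\varphi)=\{\omega \mid \omega\models\varphi\}$. By assumption $\mathcal{A}_\varphi$ accepts all of $\Omega^+$ and none of $\Omega^-$, so it separates the sample. The only point needing care is that the translation cited handles finite timed words and all of the modalities in \Cref{def:mitppl}; I would state this explicitly as the assumed prior result.

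For the ($\Leftarrow$) direction, let $\mathcal{A}$ be a timed automaton that separates the sample. I would first show that every pair is inherently distinguishable. Take any $(\omega^+,\omega^-)$. If $\mu(\omega^+)\neq\mu(\omega^-)$, the pair is propositionally distinguishable. Otherwise the two traces have identical untimed projections, so they have equal length. If they also lay in the same simple elementary language $L$, then by~\cite{waga2023active} the region equivalence underlying simple elementary languages would force $\mathcal{A}$ to accept either both or neither. The reason is that all words of $L$ satisfy the same integer constraints on every sum of consecutive delays, so they induce the same region sequence in any run. This contradicts separation, so the pair is quantitatively distinguishable. With inherent distinguishability in hand, the standing assumption of \Cref{sec:theory} holds. \Cref{thm:completeness} (equivalently, \Cref{thm:termination} followed by \Cref{thm:soundness}) then produces an $\mitppl{}$ formula $\varphi=\Phi(\phi_1,\dots,\phi_r)$ separating the sample. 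This is a solution to \Cref{prob:timedlearning}.

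I expect the main obstacle to be the step that words in the same simple elementary language cannot be separated by a timed automaton. The subtlety is that the automaton's clock constants may exceed the largest integer bound that is relevant to the traces. Simple elementary languages fix the integer part and the fractional ordering of all delay sums $\delay_{i,j}$, with no cap, so I would argue that every clock value along a run is some $\delay_{i,j}$. Every guard $x\bowtie c$ is then decided identically for all words in $L$. Determinism is not needed: the same set of transition sequences is enabled for every word of $L$, so acceptance is uniform. I would also have to reconcile this with \Cref{thm:pivot-char}, which supplies the pivot that \Cref{thm:completeness} relies on.
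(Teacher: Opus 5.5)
Your decomposition matches the paper's. You prove ($\Rightarrow$) by translating the solution into a timed automaton, and that direction is fine as written. You prove ($\Leftarrow$) by showing that a separating automaton forces every pair to be propositionally or quantitatively distinguishable, then running Phases~1--4. The gap is the step you deferred: reconciling the region argument with \Cref{thm:pivot-char}. Under the paper's definitions it cannot be closed, and the paper's sketch has the same hole. A clock that has not been reset since time $0$ holds $d_1+\dots+d_j$, which is not any $\delay_{i,j}$, because \Cref{def:pivot} requires $1\le i<j$. Simple elementary languages do constrain these sums, but no $\mitppl{}$ formula can see them. In the pointwise semantics every delay sum that occurs has the form $\sum_{k=m+1}^{n} d_k$ with $m\ge 1$. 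So, by induction, no formula's truth value at any position depends on $d_1$.

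Concretely, take $\Omega^+=\{(0.5,\{a\})\}$ and $\Omega^-=\{(1.5,\{a\})\}$. The one-clock automaton whose single edge goes from the initial to an accepting location, labelled $\{a\}$ with guard $x<1$, separates them. The two words also lie in different simple elementary languages. Yet on a word of length one, every formula's truth value at position $1$ depends only on $\sigma_1$, so \Cref{prob:timedlearning} has no solution and there is no pivot at all. Longer words fail the same way. For $(0.5,\{a\})(0.3,\{a\})$ versus $(0.9,\{a\})(0.3,\{a\})$, every $\delay_{i,j}$ agrees, but $d_1+d_2$ crosses $1$. So ($\Leftarrow$) is false as stated, and \Cref{thm:pivot-char} fails whenever the only quantitative difference lies in sums involving $d_1$.

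To repair this you need an extra hypothesis. One option is $d_1=0$ for every trace, as in \Cref{sec:example}; another is to use automata whose clocks start at the first event. Then $d_1+\dots+d_j=\delay_{1,j}$, so every clock value along a run is $0$ or some $\delay_{i,j}$ with $1\le i<j$. Differing simple elementary languages then yield a genuine pivot, and your argument goes through, provided Phase~1 is run with $k=|\APorig|$. With a smaller $k$, a pair that differs only propositionally can still enter $\Pairsconf$ through a collision on a subset, and it may have no pivot. Finally, \Cref{thm:completeness} assumes a solution already exists. Invoking it for ($\Leftarrow$) is therefore circular, and it is not equivalent to the other route. You must instead use \Cref{thm:termination} and \Cref{thm:soundness} under the standing assumption, which is what the paper does.
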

\begin{proof}[Proof sketch]
By~\cref{thm:completeness}, if $(\Omega^+, \Omega^-)$
has a solution, there is an $\mitppl{}$ formula $\varphi$ that distinguishes $(\Omega^+, \Omega^-)$ at position $1$, and $\varphi$ can be translated into a language-equivalent timed automaton~\cite{ho2026MightyPPL}.
Conversely, if $(\Omega^+, \Omega^-)$ can be distinguished by a timed automaton, each 
trace pair
$(\omega^+, \omega^-) \in \Omega^+ \times \Omega^-$ must either be propositionally or quantitatively
distinguishable. By~\cref{thm:soundness}, we can apply Phases~1 to 4 and obtain a solution $\varphi$.
\end{proof}

Although timed automata are strictly more expressive than $\mitppl{}$ in general, Corollary \ref{cor:ta} shows that the gap disappears while distinguishing finite collections of timed words.

\begin{proposition}[Complexity]\label{thm:complexity}
The framework runs in time
$\mathcal{O}((N + 2^{|\AP|}) \cdot N^4 \cdot \ell^4)$
on a sample $(\Omega^+, \Omega^-)$ with $N = \max\{|\Omega^+|, |\Omega^-|\}$, $\ell$ the maximum length of traces, and $\AP$
the set of atomic propositions after Phase~3.
\end{proposition}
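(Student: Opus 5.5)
The plan is to bound each phase separately and then sum the bounds. Throughout, I use $N$ for the number of traces per side, $\ell$ for the maximum trace length, and $\AP$ for the final proposition set after Phase~3. Note that $|\APorig| \le |\AP|$.

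\textbf{Phase~1.} By \Cref{prop:phase1-complexity}, Phase~1 costs $\mathcal{O}(2^{|\APorig|}\cdot N^2\cdot\ell^2)$. This is absorbed into $\mathcal{O}(2^{|\AP|}\cdot N^4\cdot\ell^4)$. I also record the size bounds Phase~1 produces:
\begin{itemize}
\item $|\Pairsconf| \le N^2$;
\item each pair has at most $\mathcal{O}(\ell^2)$ pivots, since one pivot is kept per $(i,j)$;
\item so the total number of pivots is $\mathcal{O}(N^2\ell^2)$.
\end{itemize}

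\textbf{Phase~2.} This is a single evaluation of the template on the $2N$ traces. By \Cref{thm:membership} it costs $\mathcal{O}(N\ell)$ for a fixed-size template. Checking it against $\Pairsconf$ costs $\mathcal{O}(N^2\ell)$. Both terms are negligible.

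\textbf{Phase~3 (dominant).} By the progress guarantee, the while loop runs at most $|\Pairsconf|\le N^2$ times. In each iteration:
\begin{itemize}
\item There are at most $\mathcal{O}(N^2\ell^2)$ candidates $\eventually_{[0,c]}\sigma'_j$ and $\once_{[0,c]}\sigma'_i$, two per pivot.
\item Each candidate is evaluated on all $2N$ traces in $\mathcal{O}(N\ell)$ total by \Cref{prop:sliding-window-dp}. Its $\mathsf{Dist}$ set is obtained by comparing truth vectors over $\Omega^+\times\Omega^-$, which costs $\mathcal{O}(N^2\ell)$. Scoring per candidate is therefore $\mathcal{O}(N^2\ell)$.
\item The naive per-iteration cost is then $\mathcal{O}(N^4\ell^3)$, giving $\mathcal{O}(N^6\ell^3)$ over all iterations.
\end{itemize}
This exceeds the target, so the accounting must be tightened. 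The main obstacle is matching the stated form $(N+2^{|\AP|})\cdot N^4\ell^4$.

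My first attempt at tightening: candidates depend only on the pivot data $(i,j,c,\sigma'_i,\sigma'_j)$. Their truth vectors over all traces can therefore be computed once and cached. Only $\mathsf{M}_{\textsf{undone}}$ must be refreshed per iteration, and that is a lookup against the shrinking $\Pairsundone$. With caching, the evaluation cost is paid once: $\mathcal{O}(N^2\ell^2)\cdot\mathcal{O}(N^2\ell)=\mathcal{O}(N^4\ell^3)$. The remaining per-iteration work is:
\begin{itemize}
\item sorting and selecting, $\mathcal{O}(N^2\ell^2\log(N\ell))$ per iteration;
\item the Pnueli fallback, which computes one SDS per undone pair in $\mathcal{O}(\ell^2|\Sigma_{\AP}|)$. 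Here $|\Sigma_{\AP}|=2^{|\AP|}$, and this is exactly where the $2^{|\AP|}$ factor enters Phase~3.
\end{itemize}
Summing over at most $N^2$ iterations gives
$$\mathcal{O}\bigl(N^2\cdot(N^2\ell^2\log(N\ell)+N^2\ell^2 2^{|\AP|})\bigr),$$
plus $\mathcal{O}(N\cdot N^4\ell^4)$ slack for re-evaluating the Pnueli candidates on all traces. I would bound $\log(N\ell)\le N\ell^2$ crudely so that everything lands inside $(N+2^{|\AP|})N^4\ell^4$. Should the candidate-caching argument fail, I would instead argue that each iteration's candidate evaluation is $\mathcal{O}(N^3\ell^4)$ by charging per pivot $\mathcal{O}(N\ell)$ DP plus $\mathcal{O}(N\ell)$ amortised comparisons. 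That fallback still needs an extra factor of $N$ to absorb, which is exactly the $N\cdot N^4\ell^4$ term in the statement.

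\textbf{Phase~4.} The greedy hitting set over universe $\Pairsconf$ (size $\le N^2$) with $|\AP|$ sets costs $\mathcal{O}(|\AP|^2N^2)$. The exact version via Petrick's method is at worst exponential in $|\AP|$, and I bound it by $\mathcal{O}(2^{|\AP|}N^2)$ by enumerating subsets of $\AP$ and checking coverage. Both fit within the claimed bound. Summing the four phases yields the proposition.
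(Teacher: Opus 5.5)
Your accounting for Phases~1, 2 and~4 matches the paper's, but your Phase~3 argument has a genuine gap. The caching step fails because the synthesis is layered. Every iteration adds $\phi^*$ to $\AP$, and the candidates $\eventually_{[0,c]}\sigma'_j$ and $\once_{[0,c]}\sigma'_i$ take $\sigma'_i,\sigma'_j$ as letters of $\Sigma_{\AP}$ for the \emph{current} $\AP$; this is how nested features such as $\eventually_{[0,14]}(\neg\eventually_{[0,20]}(\neg p\wedge\neg r))$ arise. So the candidate formulae and their truth vectors can change after every iteration. The Pnueli candidates are likewise rebuilt each round from the current $\Pairsundone$ and alphabet. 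Evaluating candidates ``once'' therefore saves nothing in the worst case.

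Your fallback does not repair this. It asserts $\mathcal{O}(N\ell)$ ``amortised comparisons'' per candidate with no argument. That contradicts your own $\mathcal{O}(N^2\ell)$ charge for computing $\mathsf{Dist}(\phi)$, which is a set of up to $N^2$ pairs from $\Omega^+\times\Omega^-$. So the fallback bound is asserted, not derived. The ``slack'' for the Pnueli candidates has the same defect. In your own cost model, $N^2$ iterations times $\mathcal{O}(N^2)$ fresh Pnueli candidates times $\mathcal{O}(N^2\ell)$ for their $\mathsf{Dist}$ sets gives $N^6\ell$, which exceeds $N^5\ell^4$ once $N>\ell^3$.

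The paper needs no caching. It charges $\mathcal{O}((N+2^{|\AP|})\ell^2)$ per candidate, where the $2^{|\AP|}\ell^2$ part matches the SDS cost you identified. It multiplies this by $\mathcal{O}(|\Pairstgt|\cdot\ell^2)=\mathcal{O}(N^2\ell^2)$ candidates and by $|\Pairsconf|\le N^2$ iterations, and never charges the pairwise $\mathsf{Dist}$ comparisons separately.

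If you want to keep your more careful $\mathcal{O}(N^2\ell)$ charge, the missing idea is a second bound on the number $t$ of iterations. Each iteration adds a \emph{new} proposition: once a feature is added, every undone pair it distinguishes is removed, so its $\mathsf{M}_{\textsf{undone}}$ stays empty and it is never reselected. Hence $t\le|\AP|<2^{|\AP|}$ for the post-Phase-3 $\AP$ of the statement. The per-iteration cost is $\mathcal{O}(N^2\ell^2)\cdot\mathcal{O}(N^2\ell)$ for the unary candidates plus $\mathcal{O}(N^2)\cdot\mathcal{O}(\ell^2 2^{|\AP|}+N\ell^2+N^2\ell)$ for the Pnueli ones, i.e.\ $\mathcal{O}(N^4\ell^3+N^2\ell^2\cdot 2^{|\AP|})$. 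Using $t\le 2^{|\AP|}$ on the first term and $t\le N^2$ on the second gives $\mathcal{O}(2^{|\AP|}\cdot N^4\ell^3)$ for Phase~3, inside the claimed bound.

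Finally, in Phase~4 you bound Petrick's method by a different algorithm (subset enumeration) and omit the cost of each coverage check. The paper only accounts for the greedy variant, so that claim is best dropped.
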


\begin{proof}
In each of the phases:
\begin{itemize}
\item Phase~1: $\mathcal{O}(2^{|\APorig|} \cdot N^2 \cdot \ell^2)$ by~\Cref{prop:phase1-complexity}.
\item Phase~2: $\mathcal{O}(N \cdot \ell)$ per template evaluation.
\item Phase~3: $|\Pairsconf|$ iterations in the worst case, each evaluating
$\mathcal{O}(|\Pairstgt| \cdot \ell^2)$ candidates by sliding-window DP at $\mathcal{O}((N + 2^{|\AP|}) \cdot \ell^2)$ per candidate, total $\mathcal{O}((N + 2^{|\AP|}) \cdot N^4 \cdot \ell^4)$.
\item Phase~4: $\mathcal{O}(|\AP|^{2} \cdot N^2)$ if we use the greedy hitting-set algorithm instead. \qedhere
\end{itemize}
\end{proof}

\section{Implementation and Experimental Evaluation}\label{sec:experiments}

We have implemented our synthesis framework, including subset-projection
collision detection, template injection, layered feature synthesis, and
proposition minimisation, in Python 3.14. To generate timed words for our
evaluations, we utilised the timed automata uniform sampler
\textsc{Wordgen}~\cite{barbot2023wordgen}, and when necessary, employed
$\mightyppl{}$~\cite{ho2026MightyPPL} to convert $\mitppl{}$ formulae into timed automata.
For the downstream untimed learning phase, we use the \textsc{Bolt}
tool~\cite{bathie2026ltl} as our $\ltl{}$ learning back-end. All experiments were
conducted on a desktop workstation with an Intel Core i9-13900K CPU and 64\,GB
of RAM.
While direct empirical comparisons are inherently difficult due to
variations in problem instance formulations and targeted formalisms, a direct
hardware-matched benchmarking was not possible because the implementation
for~\cite{wang2025synthesis} is not publicly available. Consequently, we frame
our experimental results to explicitly demonstrate the standalone feasibility
and efficiency of our translation pipeline, discussing its performance in the context of
recent state-of-the-art methods~\cite{raha2023synthesizing, wang2025synthesis} whenever possible.

To demonstrate the effectiveness and efficiency of our approach, we conducted
three sets of experiments. First, we evaluate the toolchain's performance on learning the behaviours of a simple timed automaton. Second, we
demonstrate the scalability of our approach when learning standard CPS
specification patterns, specifically evaluating the impact of domain-knowledge
template injection. Finally, we apply our synthesis approach to a case study
involving a train-gate controller to identify unknown timing constants.

\subsection{A Simple Timed Automaton}
\label{sec:simpleta}

In this subsection, we evaluate our framework's ability to capture the
behaviours of a simple timed automaton (\Cref{fig:simpleta}), adapted from~\cite[Section~VII.A]{wang2025synthesis}.
To precisely replicate the experimental setting
of~\cite[Section~VII.A]{wang2025synthesis}, positive traces of varying lengths
($\ell \in \{6, 7, 8, 9, 10\}$) are sampled directly from the automaton, with
the expected delay of each individual event set to $2$. Negative traces of
identical lengths are sampled, with the same expected delay, from a timed automaton accepting the complement timed
language (this automaton is easy to construct, despite timed automata not being closed
under complementation in the general case). We construct balanced trace sets
where $|\Omega_+| = |\Omega_-| = N$, for $N \in \{5, 7, 9, 11, 13, 15\}$. 
In~\Cref{tab:simpleta.noproj,tab:simpleta.proj}, each divided data cell contains two numbers:
the first denotes the execution time originally reported in~\cite{wang2025synthesis}, and the second represents the execution time of
our own toolchain. We emphasise that this juxtaposition is provided purely for
reference; it must not be treated as a direct, head-to-head comparison, given
the significant differences in the goal (their method finds minimal intersection- and renaming-free $\tre{}$s), underlying hardware, and running environments.

The experimental results demonstrate that our approach highly efficiently
synthesises separating $\mitppl{}$ formulae---which, for this specific
benchmark, resolve entirely to standard $\mitl{}$.
Whilst enabling the subset-projection strategy incurs a minor runtime overhead, it yields
slightly more concise specifications: the average number of operators in the
resulting formulae is $8.63$, compared to $9.90$ when subset projection is
disabled. 
Conversely, if we strictly minimise the output set of propositions, the downstream $\ltl{}$ learner must compensate by generating more sophisticated structural constraints: the average operator counts increase to $10.73$ (with
subset projection) and $10.23$ (without), highlighting the delicate balance
between proposition size and structural formula complexity.

\begin{figure}[!htbp]
    \centering
\begin{tikzpicture}[auto, ->, node distance=5cm, transform shape, scale=0.7]

\node[state, initial] (q0) {$q_0$};
\node[state] (q1) [below left=3cm and 4.5cm of q0] {$q_1$};
\node[state, accepting] (q2) [below right=3cm and 4.5cm of q0] {$q_2$};

\path[->]
    (q0) edge [loopabove] node {$\{p\}, x < 2, \{x\}$} (q0)
    
    (q0) edge [bend right=15] node [left=0.5cm] {$\emptyset, x < 4, \{x\}$} (q1)
    (q1) edge  node [near start, right=0.5cm] {$\emptyset, x < 4, \{x\}$} (q0)
    
    (q0) edge  node [near end, left=0.5cm] {$\{p\}, x \geq 2 \land x < 4, \{x\}$} (q2)
    (q2) edge [bend right=15] node [right=0.5cm] {$\{p\}, x \geq 2 \land x < 4, \{x\}$} (q0)
    (q2) edge node [above] {$\{p\}, x < 2, \{x\}$} (q1)
    (q1) edge [bend right=15] node [below] {$\{p\}, x < 4, \{x\}$} (q2)
    
    (q2) edge [loop right, looseness=6, in=285, out=345] node[below] {$\emptyset, x < 4, \{x\}$} (q2);

\end{tikzpicture}
    \captionof{figure}{The simple timed automaton used in~\Cref{sec:simpleta}. Note that the alphabet size is $2$.}
    \label{fig:simpleta}
\end{figure}

\begin{table}[!htbp]
\captionof{table}{Execution times (s) on the simple timed automaton benchmark (without subset projection). In each cell the first number is the reported execution time in~\cite{wang2025synthesis}, the second number is the execution time of our framework.}
\label{tab:simpleta.noproj}
\centering
\scalebox{0.75}{
\def\arraystretch{1.5}
\setlength\tabcolsep{0.5mm}
\begin{tabularx}{0.6\textwidth}{|c|*{12}{>{\raggedleft\arraybackslash}X|}}
\hline
\diagbox[width=3em, height=1.5\line]{$\ell$}{$N$} & 
\multicolumn{2}{c|}{5} & 
\multicolumn{2}{c|}{7} & 
\multicolumn{2}{c|}{9} & 
\multicolumn{2}{c|}{11} & 
\multicolumn{2}{c|}{13} & 
\multicolumn{2}{c|}{15} \\
\hline
6     & 0.41  & \textbf{0.04}    & 107.8  & \textbf{0.04}   & 0.33  & \textbf{0.08}     & 245.9 & \textbf{0.13}    & 125.8 & \textbf{0.18}  & 122.5 & \textbf{0.18} \\ \hline
7     & 32.6  & \textbf{0.02}    & 128.5  & \textbf{0.04}   & 125.7 & \textbf{0.08}     & 120.1 & \textbf{0.10}    & 115.0 & \textbf{0.19}  & 120.6 & \textbf{0.47} \\ \hline
8     & 1.52  & \textbf{0.04}    & 32.1   & \textbf{0.10}   & 154.7 & \textbf{0.05}     & 119.2 & \textbf{0.05}    & 124.9 & \textbf{0.18}  & 118.8 & \textbf{0.15} \\ \hline
9     & 122.5 & \textbf{0.04}    & 1.48   & \textbf{0.04}   & 124.3 & \textbf{0.09}     & 122.4 & \textbf{0.14}    & 117.5 & \textbf{0.17}  & 120.1 & \textbf{0.16} \\ \hline
10    & 112.7 & \textbf{0.04}    & 127.0  & \textbf{0.04}   & 127.2 & \textbf{0.05}     & 124.8 & \textbf{0.06}    & 116.4 & \textbf{0.20}  & 123.2 & \textbf{0.12} \\ \hline
\end{tabularx}
}
\end{table}

\begin{table}[!htbp]
\captionof{table}{Execution times (s) on the simple timed automaton benchmark (with subset projection). In each cell the first number is the reported execution time in~\cite{wang2025synthesis}, the second number is the execution time of our framework.}
\label{tab:simpleta.proj}
\centering
\scalebox{0.75}{
\def\arraystretch{1.5}
\setlength\tabcolsep{0.5mm}
\begin{tabularx}{0.6\textwidth}{|c|*{12}{>{\raggedleft\arraybackslash}X|}}
\hline
\diagbox[width=3em, height=1.5\line]{$\ell$}{$N$} & 
\multicolumn{2}{c|}{5} & 
\multicolumn{2}{c|}{7} & 
\multicolumn{2}{c|}{9} & 
\multicolumn{2}{c|}{11} & 
\multicolumn{2}{c|}{13} & 
\multicolumn{2}{c|}{15} \\
\hline
6     & 0.41  & \textbf{0.08}    & 107.8  & \textbf{0.11}   & 0.33  & \textbf{0.29}     & 245.9 & \textbf{0.25}    & 125.8 & \textbf{0.61}  & 122.5 & \textbf{0.70} \\ \hline
7     & 32.6  & \textbf{0.07}    & 128.5  & \textbf{0.17}   & 125.7 & \textbf{0.20}     & 120.1 & \textbf{0.24}    & 115.0 & \textbf{0.62}  & 120.6 & \textbf{0.58} \\ \hline
8     & 1.52  & \textbf{0.07}    & 32.1   & \textbf{0.18}   & 154.7 & \textbf{0.23}     & 119.2 & \textbf{0.24}    & 124.9 & \textbf{0.60}  & 118.8 & \textbf{0.46} \\ \hline
9     & 122.5 & \textbf{0.13}    & 1.48   & \textbf{0.17}   & 124.3 & \textbf{0.24}     & 122.4 & \textbf{0.31}    & 117.5 & \textbf{0.41}  & 120.1 & \textbf{0.58} \\ \hline
10    & 112.7 & \textbf{0.12}    & 127.0  & \textbf{0.22}   & 127.2 & \textbf{0.28}     & 124.8 & \textbf{0.35}    & 116.4 & \textbf{0.40}  & 123.2 & \textbf{0.64} \\ \hline
\end{tabularx}
}
\end{table}
\subsection{Standard Patterns in CPS}
\label{sec:patterns}

In this subsection, we evaluate the scalability of our method against standard
specification patterns common in real-time task scheduling and the online
monitoring of cyber-physical systems~\cite{raha2023synthesizing}. These include
structural archetypes such as bounded response ($\globally(p \implies
\eventually_{[0, 2]} q)$) and delayed acknowledgement. 

To this end, we fix the sample size to $N = 40$ and select four representative
$\mitl{}$ formulae from~\cite[Section~6]{raha2023synthesizing}: \[
\begin{aligned} & \globally(\eventually_{[0, 2]} p),  & \globally(\neg p
\implies \eventually_{[0, 2]} q) \\ & \globally(\neg p \implies \globally_{[0,
2]} q),  & \globally(p \until_{[0, 2]} q) \;. \end{aligned} \] For each target
formula $\varphi$, positive and negative traces of varying lengths ($\ell \in
\{5, 10, 15, 20\}$) are sampled from the language-equivalent timed automata for
$\varphi$ and $\neg \varphi$, respectively, constructed via $\mightyppl{}$.
In~\Cref{tab:pattern.noproj,tab:pattern.proj}, each divided data cell reports
two execution times to highlight the impact of domain knowledge: the first
denotes the baseline synthesis time without template injection, and the second
denotes the time when a corresponding structural template ($\eventually_{[0,
2]}p$, $\eventually_{[0, 2]} q$,  $\globally_{[0, 2]} q$, and $p \until_{[0, 2]}
q$, respectively) is injected before the main synthesis loop.

The experimental results confirm the efficiency of our toolchain in
resolving this benchmark suite. For context, the direct synthesis times reported
for these patterns in~\cite[Section~6]{raha2023synthesizing} range from $10$ to
$5400$ seconds. We reiterate, however, that these figures are provided purely as
a reference point rather than for direct empirical comparison, as~\cite{raha2023synthesizing} solves a related but different problem (find minimal formulae with bounded future-reach).
We also
observe that enabling the subset-projection strategy yields significantly more
concise specifications, reducing the average number of operators from $11.44$
down to $8.88$. Finally, we evaluate the effect of template injection. We emphasise
that templates represent a form of domain knowledge or an `oracle' that is not
necessarily available in all practical settings. However, in scenarios where such
structural priors are known and injected, they function as pre-computed, highly
selective features. The downstream \textsc{Bolt} learner immediately incorporates
these injected features, effectively bypassing the need to synthesise complex,
quantitative structural constraints from scratch.

\begin{table}[!htbp]
\captionof{table}{Execution times (s) on the CPS patterns benchmark (without subset projection).
In each cell the first number is the execution time without template injection, the second number is the execution time with template injection.}
\label{tab:pattern.noproj}
\centering
\scalebox{0.75}{
\def\arraystretch{1.5}
\setlength\tabcolsep{0.5mm}
\begin{tabularx}{0.6\textwidth}{|c|*{8}{>{\raggedleft\arraybackslash}X|}}
\hline
\diagbox[width=3em, height=1.5\line]{$\ell$}{$\phi$} & 
\multicolumn{2}{c|}{$\globally(\eventually_{[0, 2]} p)$} & 
\multicolumn{2}{c|}{$\globally(\neg p \Rightarrow \eventually_{[0, 2]} q)$} & 
\multicolumn{2}{c|}{$\globally(\neg p \Rightarrow \globally_{[0, 2]} q)$} & 
\multicolumn{2}{c|}{$\globally(p \until_{[0, 2]} q)$} \\
\hline
5    & 0.12  & \textbf{0.07}    & 0.43  & \textbf{0.07}   & 1.54          & \textbf{0.07}   & \textbf{0.07}  & \textbf{0.07}  \\ \hline
10   & 0.17  & \textbf{0.12}    & 2.58  & \textbf{0.12}   & 0.43          & \textbf{0.12}   & 0.43           & \textbf{0.12}  \\ \hline
15   & 2.06  & \textbf{0.12}    & 2.05  & \textbf{0.12}   & \textbf{0.12} & \textbf{0.12}   & \textbf{0.12}  & \textbf{0.12}  \\ \hline
20   & 0.64  & \textbf{0.12}    & 2.82  & \textbf{0.17}   & \textbf{0.12} & 0.17            & \textbf{0.12}  & 0.17  \\ \hline
\end{tabularx}
}
\end{table}

\begin{table}[!htbp]
\captionof{table}{Execution times (s) on the CPS patterns benchmark (with subset projection).
In each cell the first number is the execution time without template injection, the second number is the execution time with template injection.}
\label{tab:pattern.proj}
\centering
\scalebox{0.75}{
\def\arraystretch{1.5}
\setlength\tabcolsep{0.5mm}
\begin{tabularx}{0.6\textwidth}{|c|*{8}{>{\raggedleft\arraybackslash}X|}}
\hline
\diagbox[width=3em, height=1.5\line]{$\ell$}{$\phi$} &
\multicolumn{2}{c|}{$\globally(\eventually_{[0, 2]} p)$} &
\multicolumn{2}{c|}{$\globally(\neg p \Rightarrow \eventually_{[0, 2]} q)$} &
\multicolumn{2}{c|}{$\globally(\neg p \Rightarrow \globally_{[0, 2]} q)$} &
\multicolumn{2}{c|}{$\globally(p \until_{[0, 2]} q)$} \\
\hline
5  & 0.42 & \textbf{0.33} & 0.68 & \textbf{0.29} & 0.60 & \textbf{0.40} & 0.57 & \textbf{0.27} \\ \hline
10 & 1.57 & \textbf{0.77} & 8.40 & \textbf{0.98} & 2.34 & \textbf{0.93} & 6.32 & \textbf{0.88} \\ \hline
15 & 4.30 & \textbf{1.67} & 9.76 & \textbf{1.93} & 5.48 & \textbf{1.92} & 6.70 & \textbf{1.67} \\ \hline
20 & 8.89 & \textbf{3.57} & 17.49 & \textbf{3.63} & 11.11 & \textbf{3.62} & 10.54 & \textbf{4.68} \\ \hline
\end{tabularx}
}
\end{table}

\subsection{Case Study: Train-Gate Controller}
\label{sec:traingate}

Our final case study involves extracting timing constants from a train-gate
controller, modelled by a timed automaton and adapted
from~\cite[Section~VII.C]{wang2025synthesis}. For this scenario, we assume a
grey-box setting where the general `shape' or logical sequence of the system's
behaviour is known to the practitioner, but the precise integer time boundaries
must be learned entirely from the sampled traces.
Positive traces are sampled directly from the timed automaton shown
in~\Cref{fig:traingate}, with the expected delay of each individual event set to
$10$. Negative traces are generated using the same expected delay, but are
sampled from a structurally identical automaton where all guard conditions have
been relaxed to $\mathbf{true}$. The samples comprise traces of varying lengths;
for instance, the rows marked `$\ell \leq 10$' correspond to trace lengths $\ell
\in \{4, 5, 8, 9, 10\}$, which are the only valid lengths up to $10$ accepted by
the original automaton. As before, we construct balanced trace sets where
$|\Omega_+| = |\Omega_-| = N$.

\paragraph{The effect of proposition minimisation}
For this part of experiment we consider $N \in \{25, 50, 75, 100\}$.
In~\Cref{tab:traingate.noproj,tab:traingate.proj}, each divided data cell
contains two numbers: the first denotes the execution time without proposition
minimisation, and the second denotes the time with minimisation enabled.

Our framework performs efficiently on the moderate sample sizes evaluated here.
Consistent with previous observations, enabling the subset-projection strategy
slightly reduces the structural complexity of the output formulae, dropping the
average number of operators from $37.25$ to $36.83$ (when proposition
minimisation is disabled). Furthermore, whilst proposition minimisation can
yield up to a $2\times$ performance boost in certain cases, this speed-up comes
at the cost of significantly more complex formulae: the average operator counts
surge to $64.67$ (without subset-projection) and $63.17$ (with
subset-projection).

\paragraph{Scalability bottleneck and phase-by-phase breakdown}
To identify the primary computational bottleneck, \Cref{tab:traingate.breakdown} provides a runtime
breakdown on extended sample sizes ($N \in \{25, \dots, 750\}$, as
in~\cite[Section~VII.C]{wang2025synthesis}). Our feature synthesis (Phases 1--3) and proposition
minimisation (Phase 4) remain highly efficient, completing in roughly 9 minutes even for the largest sets of 750 traces.
The primary computational limitation, however, arises during the downstream \textsc{Bolt} execution. Because \textsc{Bolt}
performs an exact structural search over an exponentially large space, it times out on datasets of 300
or more traces when proposition minimisation is disabled. However, enabling minimisation drastically
compresses this search space, allowing \textsc{Bolt} to scale up to 600 traces.
This confirms that our frontend timing reduction is fundamentally scalable; as downstream untimed learners improve,
the overall capacity of our framework will naturally increase.

\paragraph{Generalisation and formula quality}
Because \textsc{Bolt} cannot scale beyond $N=100$ without enabling 
proposition minimisation---which acts as a form of lossy compression on the 
logical search space---we disable minimisation for the feature-level analysis 
of this experiment. \Cref{tab:traingate.features} lists the raw timing features 
extracted (Phases 1--3) for larger trace sets ($N \geq 100$). For shorter traces 
($\ell \leq 10$), the framework consistently isolates bounds matching the automaton's 
exact internal guards (e.g., $2$, $3$, $7$, and $12$), converging tightly on these 
physical bounds as $N$ increases. For larger bounds ($\ell \in \{15, 19\}$), the 
features incorporate larger constants (e.g., $24$, $36$, or $62$). Because longer 
traces encompass multiple structural loops, these expanded intervals successfully 
capture macro-level systemic delays that reflect the accumulated sums of base 
constants across repeated iterations. Examples of the corresponding full $\mitppl{}$ 
formulae synthesised from these features are provided in~\Cref{tab:traingate.rawformulae}.
Furthermore, to assess generalisation, we evaluated these formulae against a single 
large hold-out set comprising $10,000$ unseen test traces ($5,000$ positive, $5,000$ 
negative) generated up to $\ell \leq 19$. \Cref{tab:traingate.accuracy} illustrates 
the out-of-sample accuracy of specifications trained on varying sample sizes ($N \leq 100$). 
The results highlight a clear structural trade-off: specifications learned \emph{without} 
proposition minimisation consistently achieve higher accuracy (reaching up to $94.48\%$). 
By retaining full feature granularity, the uncompressed alphabet allows the exact solver 
to better capture the ground truth. In contrast, while proposition minimisation is strictly 
necessary to prevent the downstream \textsc{Bolt} search from timing out on larger datasets, 
this logical compression artificially restricts the solver's vocabulary, forcing a measurable 
degradation in accuracy on unseen traces.

\begin{figure}[!htbp]
    \centering

\begin{tikzpicture}[auto, ->, node distance=5cm, transform shape, scale=0.7]

\node[state, initial, accepting] (q0) {$q_0$};
\node[state] (q1) [right=3cm of q0] {$q_1$};
\node[state] (q2) [below=2cm of q1] {$q_2$};
\node[state] (q4) [below=2cm of q0] {$q_4$};
\node[state] (q3) at ($(q4)!0.5!(q2) + (0,-2.2cm)$) {$q_3$};

\path[->]
    (q0) edge node [above] {$\{p\}, \mathbf{true}, \{x\}$} (q1)
    (q1) edge node [right] {$\{q\}, x \geq 2, \{x\}$} (q2)
    (q2) edge node [above] {$\{r\}, x \geq 12, \{x\}$} (q4)
    (q4) edge node [left] {$\emptyset, x \geq 3, \{x\}$} (q0)

    (q2) edge node [right=4pt] {$\{p, q\}, \mathbf{true}, \{x\}$} (q3)
    (q3) edge node [left=4pt] {$\{r\}, x \geq 7, \{x\}$} (q4);

\end{tikzpicture}

    \captionof{figure}{The train-gate timed automaton used in~\Cref{sec:traingate}. Note that
    the alphabet size is $5$.}
    \label{fig:traingate}
\end{figure}

\begin{table}[!htbp]
\captionof{table}{Execution times (s) on the train-gate benchmark (without subset projection).
In each cell the first number is the execution time without proposition minimisation, the second number is the execution time with proposition minimisation.}
\label{tab:traingate.noproj}
\centering
\scalebox{0.75}{
\def\arraystretch{1.5}
\setlength\tabcolsep{0.5mm}
\begin{tabularx}{0.6\textwidth}{|c|*{8}{>{\raggedleft\arraybackslash}X|}}
\hline
\diagbox[width=3em, height=1.5\line]{$\ell$}{$N$} &
\multicolumn{2}{c|}{25} &
\multicolumn{2}{c|}{50} &
\multicolumn{2}{c|}{75} &
\multicolumn{2}{c|}{100} \\
\hline
$\leq 10$   & 1.77  & \textbf{0.67}    & 10.25  & \textbf{4.94}   & 32.33 & \textbf{15.46}  & 20.09 & \textbf{11.38}  \\ \hline
$\leq 15$   & 1.82  & \textbf{1.19}    & 11.19  & \textbf{8.39}   & 25.77 & \textbf{9.60}   & 37.29 & \textbf{16.04}  \\ \hline
$\leq 19$   & 1.42  & \textbf{0.56}    & \textbf{6.17}   & 8.50   & 11.11 & \textbf{7.70}   & 24.15 & \textbf{18.83}  \\ \hline
\end{tabularx}
}
\end{table}

\begin{table}[!htbp]
\captionof{table}{Execution times (s) on the train-gate benchmark (with subset projection).
In each cell the first number is the execution time without proposition minimisation, the second number is the execution time with proposition minimisation.}
\label{tab:traingate.proj}
\centering
\scalebox{0.75}{
\def\arraystretch{1.5}
\setlength\tabcolsep{0.5mm}
\begin{tabularx}{0.6\textwidth}{|c|*{8}{>{\raggedleft\arraybackslash}X|}}
\hline
\diagbox[width=3em, height=1.5\line]{$\ell$}{$N$} &
\multicolumn{2}{c|}{25} &
\multicolumn{2}{c|}{50} &
\multicolumn{2}{c|}{75} &
\multicolumn{2}{c|}{100} \\
\hline
$\leq 10$ & 3.49  & \textbf{1.50} & 10.05  & \textbf{6.22} & 41.33 & \textbf{19.48} & 39.11 & \textbf{15.99}  \\ \hline
$\leq 15$ & 3.00  & \textbf{2.23} & 14.00  & \textbf{8.80} & 30.66 & \textbf{14.52} & 40.00 & \textbf{21.08}  \\ \hline
$\leq 19$ & 3.70  & \textbf{2.78} & \textbf{10.75}  & 12.71 & 24.34 & \textbf{18.68}  & 32.95 & \textbf{28.42}  \\ \hline
\end{tabularx}
}
\end{table}

\begin{table}[!htbp]
\captionof{table}{Phase-by-phase execution time breakdown (s) for the train-gate benchmark ($\ell \leq
19$, with subset projection).}
\label{tab:traingate.breakdown}
\centering
\scalebox{0.75}{
\def\arraystretch{1.5}
\setlength\tabcolsep{0.5mm}
\begin{tabularx}{0.6\textwidth}{|c|*{5}{>{\raggedleft\arraybackslash}X|}}
\hline
\textbf{$N$} &
\multicolumn{1}{>{\centering\arraybackslash}X|}{Subset Projection} &
\multicolumn{1}{>{\centering\arraybackslash}X|}{Layered Feature Synthesis} &
\multicolumn{1}{>{\centering\arraybackslash}X|}{Proposition Minimisation} &
\multicolumn{1}{>{\centering\arraybackslash}X|}{\textsc{Bolt} w/o Proposition Minimisation} &
\multicolumn{1}{>{\centering\arraybackslash}X|}{\textsc{Bolt} w/ Proposition Minimisation} \\ \hline
25  & 0.05  & 2.35   & 0.00  & 3.34   & \textbf{2.56}   \\ \hline
50  & 0.21  & 9.82   & 0.00  & 9.08   & \textbf{14.46}  \\ \hline
75  & 0.53  & 12.22  & 0.01  & 25.13  & \textbf{16.02}  \\ \hline
100 & 0.91  & 20.04  & 0.02  & 33.12  & \textbf{27.51}  \\ \hline
300 & 8.74  & 100.56 & 0.24  & ERR    & \textbf{123.58} \\ \hline
450 & 21.82 & 196.79 & 0.71  & ERR    & \textbf{243.26} \\ \hline
600 & 38.14 & 316.19 & 1.76  & ERR    & \textbf{425.13} \\ \hline
750 & 60.87 & 478.81 & 3.40  & ERR    & ERR    \\ \hline
\end{tabularx}
}
\end{table}

\begin{table}[!htbp]
\captionof{table}{Timing features synthesised (without proposition minimisation).}
\label{tab:traingate.features}
\centering
\scalebox{0.75}{
\def\arraystretch{1.5}
\setlength\tabcolsep{1.5mm}
\begin{tabularx}{0.6\textwidth}{|c|c|>{\small\arraybackslash}X|}
\hline
\textbf{$\ell$} & \textbf{$N$} & \multicolumn{1}{c|}{\small Synthesised Features (Phases 1--3)} \\ \hline
\multirow{5}{*}{$\leq 10$}
 & 100 & $\eventually_{[0, 7]}(\mathbf{true})$, $\eventually_{[0, 18]}(\neg r)$, $\eventually_{[0, 40]}(\neg q \land \neg r)$ \\ \cline{2-3}
 & 300 & $\eventually_{[0, 8]}(\mathbf{true})$, $\eventually_{[0, 16]}(\neg r)$, $\eventually_{[0, 3]}(\mathbf{true})$, $\eventually_{[0, 12]}(\mathbf{true})$, $\eventually_{[0, 18]}(\neg q)$ \\ \cline{2-3}
 & 450 & $\eventually_{[0, 7]}(\mathbf{true})$, $\eventually_{[0, 18]}(\neg r)$, $\eventually_{[0, 2]}(\mathbf{true})$, $\eventually_{[0, 12]}(\neg p)$ \\ \cline{2-3}
 & 600 & $\eventually_{[0, 7]}(\mathbf{true})$, $\eventually_{[0, 18]}(\neg p \land \neg r)$, $\eventually_{[0, 12]}(\mathbf{true})$, $\eventually_{[0, 3]}(\mathbf{true})$, $\eventually_{[0, 2]}(\mathbf{true})$ \\ \cline{2-3}
 & 750 & $\eventually_{[0, 7]}(\mathbf{true})$, $\eventually_{[0, 17]}(\neg p \land \neg r)$, $\eventually_{[0, 2]}(\mathbf{true})$, $\eventually_{[0, 12]}(\mathbf{true})$, $\eventually_{[0, 3]}(\mathbf{true})$ \\ \hline
\multirow{5}{*}{$\leq 15$}
 & 100 & $\eventually_{[0, 10]}(\neg p)$, $\eventually_{[0, 2]}(\mathbf{true})$, $\eventually_{[0, 40]}(\neg q \land \neg r)$ \\ \cline{2-3}
 & 300 & $\eventually_{[0, 11]}(\neg p)$, $\eventually_{[0, 4]}(\mathbf{true})$, $\eventually_{[0, 22]}(\neg r)$, $\eventually_{[0, 12]}(\neg q)$ \\ \cline{2-3}
 & 450 & $\eventually_{[0, 7]}(\mathbf{true})$, $\eventually_{[0, 24]}(\neg p \land \neg r)$, $\eventually_{[0, 11]}(\mathbf{true})$, $\eventually_{[0, 2]}(\mathbf{true})$ \\ \cline{2-3}
 & 600 & $\eventually_{[0, 8]}(\mathbf{true})$, $\eventually_{[0, 34]}(\neg p \land \neg q \land \neg r)$, $\eventually_{[0, 4]}(\mathbf{true})$, $\eventually_{[0, 12]}(\mathbf{true})$ \\ \cline{2-3}
 & 750 & $\eventually_{[0, 8]}(\mathbf{true})$, $\eventually_{[0, 19]}(r)$, $\eventually_{[0, 4]}(\mathbf{true})$, $\eventually_{[0, 2]}(\mathbf{true})$, $\eventually_{[0, 16]}(\neg r)$, $\eventually_{[0, 29]}(\neg q \land \neg r)$, $\eventually_{[0, 62]}(\neg q \land \neg r)$ \\ \hline
\multirow{5}{*}{$\leq 19$}
 & 100 & $\eventually_{[0, 20]}(\neg p \land \neg r)$, $\eventually_{[0, 46]}(\neg q \land \neg r)$ \\ \cline{2-3}
 & 300 & $\eventually_{[0, 9]}(\neg p)$, $\eventually_{[0, 36]}(\neg q \land \neg r)$, $\eventually_{[0, 3]}(\mathbf{true})$, $\eventually_{[0, 12]}(\mathbf{true})$ \\ \cline{2-3}
 & 450 & $\eventually_{[0, 11]}(\neg r)$, $\eventually_{[0, 12]}(\neg p)$, $\eventually_{[0, 4]}(\mathbf{true})$ \\ \cline{2-3}
 & 600 & $\eventually_{[0, 8]}(\mathbf{true})$, $\eventually_{[0, 19]}(\neg p \land \neg r)$, $\eventually_{[0, 2]}(\mathbf{true})$, $\eventually_{[0, 12]}(\mathbf{true})$ \\ \cline{2-3}
 & 750 & $\eventually_{[0, 8]}(\mathbf{true})$, $\eventually_{[0, 20]}(\neg p \land \neg r)$, $\eventually_{[0, 3]}(\mathbf{true})$, $\eventually_{[0, 14]}(\neg \eventually_{[0, 20]}(\neg p \land \neg r))$, $\eventually_{[0, 11]}(\mathbf{true})$ \\ \hline
\end{tabularx}
}
\end{table}

\begin{table}[!htbp]
\captionof{table}{Example formulae synthesised ($\ell \leq 10$, without proposition minimisation).}
\label{tab:traingate.rawformulae}
\centering
\scalebox{0.75}{
\def\arraystretch{1.5}
\setlength\tabcolsep{1.5mm}
\begin{tabularx}{0.6\textwidth}{|c|c|>{\scriptsize\ttfamily\raggedright\arraybackslash}X|}
\hline
\textbf{$\ell$} & \textbf{$N$} &
\multicolumn{1}{c|}{\normalfont\small{Synthesised $\mitppl{}$ Formula}} \\ \hline
\multirow{4}{*}{\normalsize $\leq 10$}
 & \normalsize 25 & 
(G ((X[!] (X (X[!] (X[!] (r))))) -> (F[0, 29](!(q) \&\& !(r))))) \& ((X ((! (F[0, 29](!(q) \&\& !(r)))) R ((F[0, 42](!(q))) U (F[0, 29](!(q) \&\& !(r)))))) \textbar{} ((F[0, 42](!(q))) \& (G ((q) <-> ((q) R ((q) U (F[0, 29](!(q) \&\& !(r)))))))))
 \\ \cline{2-3}
 & \normalsize 50 & 
(G (F (! (r)))) \& ((G ((F[0, 14](!(r))) <-> (r))) \textbar{} ((! ((r) R ((F[0, 15](!(q))) U (! (F[0, 14](!(r))))))) \& ((G ((F[0, 15](!(q))) -> ((X (X (F[0, 15](!(q))))) -> (r)))) \textbar{} ((G ((X[!] (p)) -> (F[0, 14](!(r))))) \& ((! (F[0, 15](!(q)))) \& (G (((F[0, 14](!(r))) U (X[!] (X (false)))) -> (r))))))))
 \\ \hline
\end{tabularx}
}
\end{table}

\begin{table}[!htbp]
\captionof{table}{Out-of-sample generalisation accuracy evaluated on a held-out set of $10,000$
unseen traces ($\ell \leq 19$).}
\label{tab:traingate.accuracy}
\centering
\scalebox{0.75}{
\def\arraystretch{1.3}
\setlength\tabcolsep{1.5mm}
\begin{tabularx}{0.6\textwidth}{|c|c|>{\centering\arraybackslash}X|>{\centering\arraybackslash}X|}
\hline
\textbf{$\ell$} & \textbf{Training Size ($N$)} & \textbf{Accuracy (w/o Prop. Min.)} & \textbf{Accuracy (w/ Prop. Min.)} \\ \hline
\multirow{4}{*}{$\leq 10$} 
 & 25  & \textbf{77.22}\%   & 71.27\%           \\ \cline{2-4}
 & 50  & 78.64\%            & \textbf{80.12}\%  \\ \cline{2-4}
 & 75  & \textbf{83.63}\%   & 77.80\%           \\ \cline{2-4}
 & 100 & \textbf{91.97}\%   & 89.65\%  \\ \hline
\multirow{4}{*}{$\leq 15$} 
 & 25  & \textbf{83.11}\%   & 72.35\%           \\ \cline{2-4}
 & 50  & \textbf{90.87}\%   & 83.42\%           \\ \cline{2-4}
 & 75  & \textbf{88.99}\%   & 85.96\%           \\ \cline{2-4}
 & 100 & \textbf{94.48}\%   & 89.88\%           \\ \hline
\multirow{4}{*}{$\leq 19$} 
 & 25  & \textbf{84.70}\%   & 83.21\%          \\ \cline{2-4}
 & 50  & \textbf{83.86}\%   & 65.00\%          \\ \cline{2-4}
 & 75  & \textbf{81.40}\%   & 80.60\%          \\ \cline{2-4}
 & 100 & \textbf{80.86}\%   & 79.07\%          \\ \hline
\end{tabularx}
}
\end{table}

\section{Related Work}\label{sec:related}

\subsection{Learning Timed Automata} Automata learning techniques seek to
construct a state-machine representation of timed behaviors. These approaches
are typically divided into \emph{active} and \emph{passive} learning.

\paragraph{Active learning} 

Active learning assumes the existence of an oracle or ``teacher'' that can
answer membership and equivalence queries during the learning process. 
Foundational work~\cite{grinchtein2010learning} adapts Angluin's classical $L^*$
algorithm to infer \emph{event-recording automata} (\textsf{ERA}s)~\cite{alur1999event}.
Subsequent research has focused on expanding the
expressiveness of the learned models while managing the inherent complexity of
continuous time. For instance,~\cite{henry2020active} extends active learning to timed
automata with unobservable clock resets. Some other works restrict the learning
problem to one-clock formalisms, such as one-clock timed automata
\cite{an2020learning, xu2022active} and single-timer Mealy machines
\cite{vaandrager2021learning}. 
More recently, significant advances have been made toward learning general multi-clock
\emph{deterministic timed automata} (\textsf{DTA}s)~\cite{waga2023active, teng2024learning}. Some other approaches have sought to optimise the
learning process by bridging passive state-merging with active query refinement
\cite{aichernig2020passive}, utilising model-based mutation testing to
iteratively refine timed automata from tests \cite{tappler2019time}, or
inferring symbolic timed constraints directly from concrete timed interactions
\cite{dierl2023learning}.

\paragraph{Passive learning} In contrast, passive learning operates solely on
given sets of positive (normal) and negative (anomalous) traces; the dominant methodology here is state-merging.
For example,~\cite{verwer2012efficiently} uses statistical heuristics on timestamped labels to efficiently identify \emph{deterministic real-time automata} (\textsf{DRTA}s), i.e.~\textsf{DTA}s with only one clock that resets on every transition
over \emph{integer-time} traces.
Another recent approach is the TAG algorithm~\cite{cornanguer2022tag},
which infers \textsf{DRTA}s from only positive traces.
Other approaches formulate passive timed automata
learning as an exact logical constraint satisfaction problem solved via SMT
\cite{tappler2022timed, majumdar2025learning}. We remark that passive learning of automata is generally computationally hard~(see e.g.,~\cite{verwer2011efficiency}).

\subsection{Learning Untimed $\ltl{}$ Specifications}

Texada~\cite{lemieux2015general} is the first tool to support passive learning of full $\ltl{}$. Subsequent state-of-the-art tools have employed sophisticated
combinatorial search techniques. \texttt{SCARLET}~\cite{raha2022scalable}, for instance, focusses on specifically
tailored fragments of $\ltl{}$ to improve scalability.
\cite{valizadeh2024gpu}~proposes a GPU-accelerated enumerative
algorithm that utilises characteristic tables and observational equivalence for
fast formula evaluation. Building upon this, \textsc{Bolt}~\cite{bathie2026ltl} is a CPU-based tool that achieves further orders-of-magnitude
speedups.

\subsection{Learning Timed Temporal Logics}

\paragraph{Parameter synthesis} Much of the existing literature relies on
template-based \emph{parameter synthesis}, where the structural architecture of the
formula is provided \textit{a priori}: the learning algorithm optimises
the numerical time bounds and signal thresholds. This is frequently formulated as a
statistical optimisation problem, where algorithms maximise quantitative
robustness semantics to classify stochastic processes~\cite{bartocci2014data},
detect cyber-physical anomalies~\cite{jones2014anomaly, kong2016temporal}, or
extract reward functions for reinforcement learning frameworks
\cite{li2017reinforcement, xu2019transfer, raha2023synthesizing}. While
template-based methods are less computationally demanding, they inherently limit the
expressive capability of the learned specifications to the designer's prior
domain assumptions.

\paragraph{Structure synthesis} Overcoming the template restriction requires
\emph{structure synthesis}---a significantly more complex problem due to the infinite
search space of formula architectures. 
Foundational work~\cite{godskesen1995synthesizing}
established the synthesis of distinguishing formulae for timed bisimulation
equivalence. Modern template-free approaches
tackle the computational challenge using diverse techniques. Some rely on systematic
enumerative search, coupled with aggressive heuristic pruning or partial
ordering to manage the combinatorial explosion \cite{kong2014temporal,
mohammadinejad2020interpretable}. Others adapt classical machine learning
paradigms, such as inferring interval temporal logic decision trees
\cite{brunello2019interval} or clustering execution traces to generate hybrid
system assertions \cite{nicoletti2024mining}. A notable recent approach~\cite{fronda2022differentiable}
maps discrete logic into continuous domains for differentiable neural-network
inference.

\paragraph{Comparison to our approach} While recent template-free structure
synthesis methods have shown considerable progress, they often face severe
scalability bottlenecks or expressiveness limits. For instance,~\cite{wang2025synthesis}
tackles the passive synthesis of (intersection- and renaming-free) \emph{timed regular expressions} ($\tre{}$s) by enumerating
untimed parametric $\tre{}$ templates and subsequently resolving the timing
constraints via SMT solvers. Although effective and able to find minimal (intersection- and renaming-free) $\tre{}$s, this
method is inherently tied to the rigid structure of that limited fragment. Similarly,~\cite{raha2023synthesizing} proposes an exact SMT-based approach to synthesise $\mtl{}$ formulae of arbitrary structure by encoding the entire syntax DAG
and continuous monitoring semantics into Linear Real Arithmetic. However, their reliance on a monolithic SMT
encoding to discover both the logical structure and timing bounds simultaneously
poses a substantial scalability challenge. 
Their focus is also quite different: finding minimal formulae with bounded future reach (lookahead) to prioritise online monitoring efficiency.

In contrast, our framework distinctively separates the timed and untimed aspects
of the problem. By extracting quantitative timing differences \textit{a priori}
and embedding them into the alphabet as atomic propositions, we bypass the need
for monolithic SMT structural encoding or template enumeration entirely. This
novel abstraction allows us to delegate the complex combinatorial search to
highly optimised untimed $\ltl{}$ tools (such as \textsc{Bolt} or \texttt{SCARLET}), guaranteeing
completeness while retaining the full expressiveness of $\mitppl{}$. 

\section{Conclusion and Future Work}\label{sec:conclusion}
We presented a framework that reduces timed specification mining to untimed $\ltl{}$ learning by
synthesising timed features to resolve propositional indistinguishability. The approach combines
collision detection, template injection, layered feature synthesis, and proposition minimisation,
with formal guarantees of soundness, progress, and completeness, alongside strong experimental
results across diverse real-time systems. Our experimental evaluation demonstrates its efficiency
and practical utility across diverse real-time scenarios.

Several directions remain for future work, including synthesising minimal timed formulae through
incremental enumeration, integrating regular expressions to better capture cyclic behaviours, and
reversing the workflow by first extracting untimed structural templates and subsequently
synthesising timing parameters. Furthermore, while our current approach is highly effective, it relies
on the availability of both positive and negative examples to extract separating timing pivots and
mathematically bound the learned specification. Recognising that observing anomalous failure traces
can be difficult in safety-critical domains, we plan to adapt our framework to learn exclusively
from positive traces via one-class classification (see,~e.g.,~\cite{roy2023learning}). Because our architecture strictly separates the
extraction of timing features from the synthesis of logical structure, these modules can be updated
independently. By shifting our feature extraction module to calculate tightness metrics such as
bounding envelopes from positive delays--we can swap our downstream structural learner for one that
optimises for language minimality over those features, acting as a regulariser to prevent overfitting.

\section*{Acknowledgment}
The authors would like to acknowledge the use of Claude 4.6 (Anthropic) to assist with copy-editing and grammatical refinement during the preparation of this manuscript.

\bibliography{biblio}

\clearpage
\onecolumn
\appendix

\begin{center}

\bgroup 

\renewcommand{\thetable}{}

\makeatletter
\def\@dbstargarg{}
\makeatother

\bgroup\footnotesize\begin{center}
    \textsc{Formulae synthesised with and without proposition minimisation.}
\end{center}\egroup
\vskip 0.5ex
\centering
\scalebox{0.65}{
\def\arraystretch{1}
\setlength\tabcolsep{1.5mm}
\begin{tabularx}{\textwidth}{|c|c|>{\tiny\ttfamily\raggedright\arraybackslash}X|>{\tiny\ttfamily\raggedright\arraybackslash}X|}
\hline
\normalfont\normalsize\textbf{$\ell$} & \normalfont\normalsize\textbf{$n$} & \multicolumn{1}{c|}{\normalfont\normalsize\textbf{With Proposition Minimisation}} & \multicolumn{1}{c|}{\normalfont\normalsize\textbf{Without Proposition Minimisation}} \\ \hline
\multirow{4}{*}{\normalsize $\leq 10$}
 & \normalsize 25 &
(G ((X[!] ((F[0, 42](!(q))) -> (F[0, 29](!(q) \&\& !(r))))) -> (F[0, 42](!(q))))) \& (((X (X ((F[0, 29](!(q) \&\& !(r))) <-> (X (F[0, 29](!(q) \&\& !(r))))))) \& (X ((! (F[0, 29](!(q) \&\& !(r)))) R ((F[0, 42](!(q))) U (F[0, 29](!(q) \&\& !(r))))))) \textbar{} ((G ((F[0, 29](!(q) \&\& !(r))) \textbar{} (X (F[0, 42](!(q)))))) \& (! ((X (X (X (X[!] (F[0, 29](!(q) \&\& !(r))))))) U (F[0, 29](!(q) \&\& !(r)))))))
 &
(G ((X[!] (X (X[!] (X[!] (r))))) -> (F[0, 29](!(q) \&\& !(r))))) \& ((X ((! (F[0, 29](!(q) \&\& !(r)))) R ((F[0, 42](!(q))) U (F[0, 29](!(q) \&\& !(r)))))) \textbar{} ((F[0, 42](!(q))) \& (G ((q) <-> ((q) R ((q) U (F[0, 29](!(q) \&\& !(r)))))))))
 \\ \cline{2-4}
 & \normalsize 50 &
(((G ((F[0, 14](!(r))) -> (X (X (X[!] (true)))))) \& ((F[0, 14](!(r))) \textbar{} (G (X ((F[0, 14](!(r))) \textbar{} (X (F[0, 15](!(q))))))))) \textbar{} (((F[0, 14](!(r))) R (! (F[0, 15](!(q))))) \& ((F (! ((X[!] (X[!] (X (F[0, 14](!(r)))))) -> (F[0, 14](!(r)))))) \& (G (((F[0, 14](!(r))) R (X[!] (X (F[0, 15](!(q)))))) -> (F[0, 14](!(r)))))))) \& ((! (F[0, 15](!(q)))) \& ((X (F[0, 15](!(q)))) \textbar{} ((! ((F[0, 15](!(q))) R (X (F[0, 14](!(r)))))) <-> (! ((F[0, 15](!(q))) R (X (X ((F[0, 15](!(q))) R (F[0, 14](!(r)))))))))))
 &
(G (F (! (r)))) \& ((G ((F[0, 14](!(r))) <-> (r))) \textbar{} ((! ((r) R ((F[0, 15](!(q))) U (! (F[0, 14](!(r))))))) \& ((G ((F[0, 15](!(q))) -> ((X (X (F[0, 15](!(q))))) -> (r)))) \textbar{} ((G ((X[!] (p)) -> (F[0, 14](!(r))))) \& ((! (F[0, 15](!(q)))) \& (G (((F[0, 14](!(r))) U (X[!] (X (false)))) -> (r))))))))
 \\ \cline{2-4}
 & \normalsize 75 &
((((((X (F[0, 10](!(p)))) -> (F[0, 25](!(q) \&\& !(r)))) \& ((F ((F[0, 25](!(q) \&\& !(r))) <-> ((F (F[0, 4](True))) R (! (F[0, 25](!(q) \&\& !(r))))))) \textbar{} (((F[0, 25](!(q) \&\& !(r))) <-> (X (F[0, 4](True)))) \& (G (((F[0, 4](True)) R (F[0, 10](!(p)))) -> (X (X (F[0, 10](!(p)))))))))) \textbar{} ((X ((F[0, 10](!(p))) -> (F[0, 4](True)))) \& ((G ((F[0, 4](True)) -> (X (X (X[!] (X (F[0, 10](!(p))))))))) \& ((X (F[0, 25](!(q) \&\& !(r)))) <-> (F (! ((X[!] (F[0, 25](!(q) \&\& !(r)))) -> (F (F[0, 4](True)))))))))) \& (((F[0, 10](!(p))) -> (F (X[!] ((F[0, 10](!(p))) \textbar{} (F[0, 4](True)))))) <-> ((X (F[0, 4](True))) \textbar{} ((F[0, 10](!(p))) R (! ((F[0, 4](True)) R (X[!] (F[0, 25](!(q) \&\& !(r)))))))))) \textbar{} (((F[0, 25](!(q) \&\& !(r))) <-> (X (F[0, 4](True)))) \& ((! ((X (F[0, 25](!(q) \&\& !(r)))) \textbar{} (F (F[0, 4](True))))) \textbar{} ((F (! ((F[0, 10](!(p))) -> (F (F[0, 4](True)))))) \& (X (X ((! (F[0, 25](!(q) \&\& !(r)))) R (F (F[0, 25](!(q) \&\& !(r))))))))))) \& ((X (X ((F[0, 10](!(p))) -> (F[0, 25](!(q) \&\& !(r)))))) \& ((F[0, 4](True)) \textbar{} (((X (F[0, 10](!(p)))) <-> ((X (X (F[0, 4](True)))) R (F[0, 10](!(p))))) \& ((F[0, 10](!(p))) \textbar{} (X (X ((F[0, 10](!(p))) R ((F[0, 4](True)) -> (F[0, 25](!(q) \&\& !(r)))))))))))
 &
((((X (! ((F (F[0, 4](True))) U (F[0, 10](!(p)))))) \textbar{} ((X ((F[0, 4](True)) -> (F[0, 25](!(q) \&\& !(r))))) <-> ((X (X ((F[0, 10](!(p))) -> (F[0, 25](!(q) \&\& !(r)))))) <-> ((X ((F[0, 10](!(p))) -> (F[0, 4](True)))) \& (((G (! ((F[0, 4](True)) R (p)))) \& (F ((F[0, 25](!(q) \&\& !(r))) <-> (X (p))))) \textbar{} ((((r) R (! (F[0, 4](True)))) \textbar{} (G ((F[0, 4](True)) -> ((F[0, 4](True)) U (p))))) \& ((G (! ((F[0, 4](True)) R (p)))) \textbar{} (! ((F (F[0, 4](True))) U ((F[0, 10](!(p))) \& (q))))))))))) \& ((F ((F[0, 10](!(p))) <-> ((F (F[0, 4](True))) <-> (F (p))))) <-> ((X (F[0, 4](True))) \textbar{} ((r) R ((F[0, 4](True)) -> (X (X[!] (r)))))))) \textbar{} (((F[0, 25](!(q) \&\& !(r))) <-> (X (F[0, 10](!(p))))) \& (((r) R (! (F[0, 10](!(p))))) \textbar{} ((G ((F[0, 25](!(q) \&\& !(r))) <-> (r))) \textbar{} (((r) R (F (F[0, 10](!(p))))) \& (G ((F[0, 4](True)) -> (X ((p) R (F (F[0, 25](!(q) \&\& !(r))))))))))))) \& ((G (F ((p) <-> (r)))) \& ((F[0, 25](!(q) \&\& !(r))) \textbar{} ((G (((X (F[0, 10](!(p)))) R (F[0, 4](True))) -> (X (p)))) <-> (X (X (((X (F[0, 4](True))) U (F[0, 10](!(p)))) -> (F[0, 25](!(q) \&\& !(r)))))))))
 \\ \cline{2-4}
 & \normalsize 100 &
(((X ((F[0, 18](!(r))) -> (X (q)))) \& ((G ((F (q)) <-> (X[!] (X[!] (true))))) \& (((((F[0, 18](!(r))) U (F[0, 7](True))) -> (F[0, 7](True))) \& (G ((F[0, 7](True)) -> ((q) -> (X (q)))))) \textbar{} ((X (X (X (X (F[0, 18](!(r))))))) \& (G ((F[0, 7](True)) -> ((q) -> (X (q))))))))) \textbar{} (G (X ((F[0, 7](True)) <-> (X[!] (q)))))) \& ((F ((F[0, 7](True)) \& ((q) <-> (X (F[0, 18](!(r))))))) \textbar{} (X ((q) U (X ((X[!] (true)) -> (F[0, 7](True)))))))
 &
(((((F[0, 40](!(q) \&\& !(r))) -> (F[0, 18](!(r)))) \& (G ((F[0, 7](True)) -> ((r) U (X (X[!] (F[0, 40](!(q) \&\& !(r))))))))) \& ((X ((F[0, 18](!(r))) -> (X (p)))) \& (F ((F[0, 7](True)) <-> ((p) R (X (F[0, 18](!(r))))))))) \textbar{} (((r) R (F (p))) \& (G ((X[!] ((F[0, 7](True)) R (X (r)))) -> (r))))) \& ((X (X (p))) \textbar{} ((F (F[0, 7](True))) R (! ((r) R (X[!] (F[0, 18](!(r))))))))
 \\ \hline
\multirow{4}{*}{\normalsize $\leq 15$}
 & \normalsize 25 &
(X (X ((F[0, 27](!(p) \&\& !(q) \&\& !(r))) U (X (X[!] (X[!] (true))))))) \& ((G (! ((F[0, 27](!(p) \&\& !(q) \&\& !(r))) \& (X (F[0, 27](!(p) \&\& !(q) \&\& !(r))))))) \textbar{} (((! (F[0, 27](!(p) \&\& !(q) \&\& !(r)))) R (F (F[0, 27](!(p) \&\& !(q) \&\& !(r))))) \& (((F[0, 27](!(p) \&\& !(q) \&\& !(r))) \& (X (X (X (F[0, 27](!(p) \&\& !(q) \&\& !(r))))))) \textbar{} (((F (F[0, 27](!(p) \&\& !(q) \&\& !(r)))) U (X (false))) \& ((X (! (F[0, 27](!(p) \&\& !(q) \&\& !(r))))) \textbar{} (G (! ((F[0, 27](!(p) \&\& !(q) \&\& !(r))) \& (X (X[!] (X[!] (F[0, 27](!(p) \&\& !(q) \&\& !(r))))))))))))))
 &
(F (! ((X[!] (F[0, 27](!(p) \&\& !(q) \&\& !(r)))) -> (p)))) \& ((G ((F[0, 27](!(p) \&\& !(q) \&\& !(r))) -> (X (X (F (F[0, 27](!(p) \&\& !(q) \&\& !(r)))))))) \textbar{} (((F[0, 27](!(p) \&\& !(q) \&\& !(r))) -> (X (X (p)))) \& ((G ((p) \textbar{} (X (X (F (F[0, 27](!(p) \&\& !(q) \&\& !(r)))))))) \textbar{} (G ((X[!] ((F[0, 27](!(p) \&\& !(q) \&\& !(r))) R (X (p)))) -> (r))))))
 \\ \cline{2-4}
 & \normalsize 50 &
((X (! (F[0, 8](!(p))))) \& ((G (! (F[0, 3](True)))) \textbar{} (((X (F[0, 3](True))) \textbar{} (X (X ((F[0, 3](True)) <-> (X (X[!] (X (false)))))))) \& ((X (! (F[0, 8](!(p))))) \& ((G (((F[0, 3](True)) R (F[0, 8](!(p)))) -> (X (F[0, 3](True))))) \textbar{} (G ((F[0, 3](True)) -> (X (X[!] (X[!] (X[!] (true)))))))))))) \& ((F ((F (F[0, 8](!(p)))) <-> (X (X (false))))) \& ((F[0, 3](True)) \textbar{} ((! ((F[0, 3](True)) R (X[!] ((F[0, 8](!(p))) U (X (F[0, 8](!(p)))))))) \& ((F[0, 8](!(p))) \textbar{} ((F[0, 3](True)) R ((F[0, 3](True)) -> ((F[0, 8](!(p))) -> (X (F[0, 8](!(p)))))))))))
 &
((G (F ((p) <-> (r)))) \& (((r) R (X (! (F[0, 8](!(p)))))) \textbar{} (((F[0, 3](True)) -> (F (X[!] (F[0, 3](True))))) <-> (G ((F[0, 3](True)) -> (X ((q) U (p)))))))) \& (((F[0, 3](True)) -> (F (X[!] (F[0, 8](!(p)))))) \& (G ((F[0, 8](!(p))) -> ((X (r)) -> (p)))))
 \\ \cline{2-4}
 & \normalsize 75 &
(((X (X (! (F[0, 3](True))))) \& ((G ((F[0, 3](True)) -> (X ((! (F[0, 3](True))) U (F[0, 34](!(q) \&\& !(r))))))) \& ((G (X (! (F[0, 6](True))))) \textbar{} ((F (! ((F[0, 6](True)) -> (F[0, 3](True))))) \& (G ((F[0, 6](True)) -> (X (X (F (F[0, 34](!(q) \&\& !(r)))))))))))) \textbar{} ((G ((F[0, 6](True)) -> (X (X[!] (X[!] (X (F[0, 34](!(q) \&\& !(r))))))))) \textbar{} ((F (! ((F[0, 3](True)) -> (F[0, 34](!(q) \&\& !(r)))))) \& (F (! ((X[!] (F[0, 34](!(q) \&\& !(r)))) -> (F (F[0, 6](True))))))))) \& (((F[0, 34](!(q) \&\& !(r))) -> (F (F[0, 6](True)))) <-> (((F[0, 34](!(q) \&\& !(r))) -> (X (X (X (F[0, 34](!(q) \&\& !(r))))))) \& ((G ((F[0, 3](True)) -> (X ((F[0, 34](!(q) \&\& !(r))) -> (X (F[0, 34](!(q) \&\& !(r)))))))) \& (((F[0, 3](True)) -> (G ((F[0, 6](True)) -> (F (F[0, 3](True)))))) <-> ((F[0, 3](True)) -> (((! (F[0, 6](True))) U (F[0, 3](True))) U (F[0, 34](!(q) \&\& !(r)))))))))
 &
(((X (X ((F[0, 3](True)) <-> (G (F (p)))))) \& ((F (! ((F[0, 34](!(q) \&\& !(r))) -> ((p) -> (F (F[0, 3](True))))))) \textbar{} (G ((r) \textbar{} (X ((X (F[0, 3](True))) -> (F[0, 6](True)))))))) \textbar{} ((F ((F[0, 6](True)) \& (p))) \& (G ((F[0, 6](True)) -> (X ((p) U (X[!] (r)))))))) \& ((G (! ((F[0, 3](True)) R ((F[0, 3](True)) U (r))))) \& ((F ((F[0, 3](True)) <-> ((q) <-> (F (p))))) \& ((F[0, 34](!(q) \&\& !(r))) -> ((F (F[0, 6](True))) <-> (F (X[!] (p)))))))
 \\ \cline{2-4}
 & \normalsize 100 &
((((((F[0, 10](!(p))) <-> (F[0, 40](!(q) \&\& !(r)))) R (X[!] (true))) \& ((X ((F[0, 10](!(p))) -> (X (X (X (F[0, 2](True))))))) \& (((F (F[0, 40](!(q) \&\& !(r)))) U (X (false))) R (! (F[0, 2](True)))))) \& ((G ((F[0, 10](!(p))) -> ((! (F[0, 2](True))) R (F[0, 10](!(p)))))) \& (X (X ((F[0, 2](True)) <-> (X (X[!] (X[!] (F[0, 2](True)))))))))) \textbar{} ((! ((F[0, 2](True)) R ((F[0, 2](True)) -> (F[0, 10](!(p)))))) R (F[0, 10](!(p))))) \& ((F (! ((X[!] (F[0, 40](!(q) \&\& !(r)))) -> (F (F[0, 2](True)))))) \& ((X (((F[0, 10](!(p))) <-> (X (F[0, 2](True)))) U (X (F[0, 10](!(p)))))) \& (((! (F[0, 40](!(q) \&\& !(r)))) \textbar{} (F ((F[0, 40](!(q) \&\& !(r))) \& (X (X[!] (X (F (F[0, 2](True))))))))) \& ((X (X (X (F[0, 10](!(p)))))) \textbar{} (F (! ((F[0, 10](!(p))) \textbar{} (X (X (X[!] (F[0, 40](!(q) \&\& !(r)))))))))))))
 &
(((G (! (F[0, 10](!(p))))) \textbar{} (((F[0, 40](!(q) \&\& !(r))) R (F (p))) \& ((! (F[0, 2](True))) \& ((X ((F[0, 10](!(p))) -> (X (p)))) \& (G ((X[!] ((F[0, 10](!(p))) \& (F[0, 2](True)))) -> (r))))))) \textbar{} ((F ((F[0, 10](!(p))) \& (r))) \& ((F[0, 2](True)) R ((r) R ((F[0, 2](True)) <-> (X[!] (p))))))) \& ((G ((F[0, 10](!(p))) -> ((X (r)) -> (p)))) \& (X (! ((F[0, 10](!(p))) U (G (X (F (r))))))))
 \\ \hline
\multirow{4}{*}{\normalsize $\leq 19$}
 & \normalsize 25 &
(X (! ((F[0, 25](!(p) \&\& !(q) \&\& !(r))) U (X (false))))) \& ((X (X ((F[0, 25](!(p) \&\& !(q) \&\& !(r))) \textbar{} (X (X[!] (true)))))) \& ((G ((F[0, 25](!(p) \&\& !(q) \&\& !(r))) -> ((F[0, 25](!(p) \&\& !(q) \&\& !(r))) U (X (F[0, 25](!(p) \&\& !(q) \&\& !(r))))))) \textbar{} (((F (F[0, 25](!(p) \&\& !(q) \&\& !(r)))) U (X (false))) \& (X (X ((F[0, 25](!(p) \&\& !(q) \&\& !(r))) \textbar{} (X (F[0, 25](!(p) \&\& !(q) \&\& !(r))))))))))
 &
(X (! ((F[0, 25](!(p) \&\& !(q) \&\& !(r))) U (X (false))))) \& ((G ((X[!] ((p) R (X (r)))) -> (F[0, 25](!(p) \&\& !(q) \&\& !(r))))) \textbar{} (G ((p) -> (X[!] (X[!] ((r) -> (F[0, 25](!(p) \&\& !(q) \&\& !(r)))))))))
 \\ \cline{2-4}
 & \normalsize 50 &
(((X (! ((F[0, 12](!(p))) U (X (false))))) \& (((F[0, 12](!(p))) U (X (X (X (X (false)))))) \textbar{} ((F (! ((F[0, 12](!(p))) \textbar{} (X (X (F (F[0, 12](!(p))))))))) \& (((X (X (F[0, 12](!(p))))) <-> ((F[0, 12](!(p))) \& (X ((X (X (X (F[0, 12](!(p)))))) R (! (F[0, 12](!(p)))))))) \textbar{} ((X (X (X (X ((F (F[0, 12](!(p)))) -> (F[0, 12](!(p)))))))) \& (X (! ((F[0, 12](!(p))) R (X[!] (X (X (F[0, 12](!(p)))))))))))))) \textbar{} ((((F (F[0, 12](!(p)))) U (X (false))) \& (X (! (((F[0, 12](!(p))) R (X (F[0, 12](!(p))))) -> (F[0, 12](!(p))))))) \textbar{} (((F[0, 12](!(p))) <-> (X (X (F[0, 12](!(p)))))) \& ((X (X (X (X[!] (X (X (X (F[0, 12](!(p)))))))))) \& ((X ((F[0, 12](!(p))) U (X[!] (X[!] (X[!] (F[0, 12](!(p)))))))) <-> (G (((F[0, 12](!(p))) R (X[!] (F[0, 12](!(p))))) -> (F[0, 12](!(p)))))))))) \& (((G ((F[0, 12](!(p))) -> (X (X[!] (true))))) \textbar{} ((F[0, 12](!(p))) \textbar{} (X (X ((X (F[0, 12](!(p)))) -> (F[0, 12](!(p)))))))) <-> ((G ((F[0, 12](!(p))) \textbar{} (X (X (X (X[!] (true))))))) \textbar{} (((X (F[0, 12](!(p)))) \textbar{} (X (! ((F[0, 12](!(p))) R (X (X[!] (X (F[0, 12](!(p)))))))))) <-> (((F[0, 12](!(p))) <-> (X ((F[0, 12](!(p))) \textbar{} (X (X (F[0, 12](!(p)))))))) \textbar{} ((F[0, 12](!(p))) <-> (G ((F[0, 12](!(p))) -> (X (X (X[!] (true)))))))))))
 &
(G (F (! (r)))) \& ((G (((F[0, 12](!(p))) R (q)) -> (X[!] (X (F[0, 12](!(p))))))) \textbar{} ((F (! ((F[0, 12](!(p))) -> (p)))) \& ((G ((q) -> ((F[0, 12](!(p))) <-> (p)))) \textbar{} (((X (X (F[0, 12](!(p))))) -> (F[0, 12](!(p)))) \& ((G (F ((p) <-> (r)))) \& (((r) R ((q) <-> (X (F[0, 12](!(p)))))) \textbar{} (((F[0, 12](!(p))) U ((! (F[0, 12](!(p)))) U (r))) \& (F (! ((F[0, 12](!(p))) \textbar{} (X (F (q)))))))))))))
 \\ \cline{2-4}
 & \normalsize 75 &
((((G ((F[0, 26](!(p) \&\& !(q) \&\& !(r))) \textbar{} (X (F (F[0, 6]((p))))))) \textbar{} ((G ((X[!] (X (X[!] (F[0, 6]((p)))))) -> (F[0, 26](!(p) \&\& !(q) \&\& !(r))))) \textbar{} (((F (F[0, 26](!(p) \&\& !(q) \&\& !(r)))) U (X (false))) \& (G ((F[0, 26](!(p) \&\& !(q) \&\& !(r))) -> (X ((X[!] (F[0, 26](!(p) \&\& !(q) \&\& !(r)))) -> (F[0, 26](!(p) \&\& !(q) \&\& !(r)))))))))) \& (((! (F[0, 26](!(p) \&\& !(q) \&\& !(r)))) U (X (X (F[0, 6]((p)))))) \textbar{} ((X (X (X ((X (F[0, 26](!(p) \&\& !(q) \&\& !(r)))) -> (F[0, 6]((p))))))) <-> ((F[0, 26](!(p) \&\& !(q) \&\& !(r))) \textbar{} ((F (! ((F[0, 26](!(p) \&\& !(q) \&\& !(r))) -> (X (X (F (F[0, 6]((p))))))))) <-> ((F (F[0, 6]((p)))) <-> ((! (F[0, 26](!(p) \&\& !(q) \&\& !(r)))) U ((F[0, 26](!(p) \&\& !(q) \&\& !(r))) U (X (X (F[0, 6]((p))))))))))))) \textbar{} ((X (X (! (F[0, 6]((p)))))) \& (((F[0, 6]((p))) R (F (F[0, 26](!(p) \&\& !(q) \&\& !(r))))) \& ((X ((X (F[0, 26](!(p) \&\& !(q) \&\& !(r)))) -> (F[0, 26](!(p) \&\& !(q) \&\& !(r))))) <-> (((F[0, 6]((p))) R (X (! ((F[0, 26](!(p) \&\& !(q) \&\& !(r))) <-> (F[0, 6]((p))))))) \textbar{} ((G ((F[0, 6]((p))) -> (X (X (F[0, 26](!(p) \&\& !(q) \&\& !(r))))))) \& (((F (F[0, 26](!(p) \&\& !(q) \&\& !(r)))) U (X (false))) U (F[0, 6]((p)))))))))) \& ((G ((F[0, 26](!(p) \&\& !(q) \&\& !(r))) -> (X (F[0, 6]((p)))))) \textbar{} (((F (F[0, 26](!(p) \&\& !(q) \&\& !(r)))) U (X (false))) \& ((! ((F[0, 6]((p))) U (F[0, 26](!(p) \&\& !(q) \&\& !(r))))) \textbar{} (((F[0, 6]((p))) <-> ((F[0, 26](!(p) \&\& !(q) \&\& !(r))) U (X (false)))) \& (F ((F[0, 6]((p))) <-> ((F[0, 26](!(p) \&\& !(q) \&\& !(r))) \textbar{} (X (X (F[0, 26](!(p) \&\& !(q) \&\& !(r))))))))))))
 &
((((G ((p) -> ((! (F[0, 26](!(p) \&\& !(q) \&\& !(r)))) R (F (F[0, 26](!(p) \&\& !(q) \&\& !(r))))))) \textbar{} ((G ((p) -> (F (F[0, 26](!(p) \&\& !(q) \&\& !(r)))))) \& (! ((F[0, 26](!(p) \&\& !(q) \&\& !(r))) R ((F[0, 26](!(p) \&\& !(q) \&\& !(r))) -> (p)))))) \& ((G (X ((p) -> (F[0, 26](!(p) \&\& !(q) \&\& !(r)))))) \textbar{} (((G ((X[!] ((F[0, 26](!(p) \&\& !(q) \&\& !(r))) R (X (q)))) -> (r))) \textbar{} (G (X ((p) -> ((! (F[0, 6]((p)))) U (F[0, 26](!(p) \&\& !(q) \&\& !(r)))))))) \& ((F[0, 6]((p))) \textbar{} ((G ((q) U (((F[0, 26](!(p) \&\& !(q) \&\& !(r))) R (p)) -> (F[0, 6]((p)))))) \textbar{} (G ((p) -> ((! (F[0, 26](!(p) \&\& !(q) \&\& !(r)))) R (F (F[0, 26](!(p) \&\& !(q) \&\& !(r)))))))))))) \textbar{} (((F[0, 6]((p))) R (F (F[0, 26](!(p) \&\& !(q) \&\& !(r))))) \& ((G ((r) \textbar{} (X ((X (p)) -> (F[0, 6]((p))))))) \textbar{} (((r) R (! (F[0, 6]((p))))) \& (F (! (((F[0, 6]((p))) U (F[0, 26](!(p) \&\& !(q) \&\& !(r)))) -> (F[0, 26](!(p) \&\& !(q) \&\& !(r)))))))))) \& ((F (! ((p) -> (F[0, 26](!(p) \&\& !(q) \&\& !(r)))))) <-> (! (((F[0, 6]((p))) \& (p)) R (X (F[0, 26](!(p) \&\& !(q) \&\& !(r)))))))
 \\ \cline{2-4}
 & \normalsize 100 &
(((((X (F[0, 46](!(q) \&\& !(r)))) <-> (((F[0, 20](!(p) \&\& !(r))) U (X (X (F[0, 20](!(p) \&\& !(r)))))) <-> ((X ((F[0, 20](!(p) \&\& !(r))) -> (X (X (F[0, 20](!(p) \&\& !(r))))))) \& ((G ((F[0, 20](!(p) \&\& !(r))) \textbar{} (X (F[0, 20](!(p) \&\& !(r)))))) \textbar{} (G ((F[0, 20](!(p) \&\& !(r))) -> (X (X (F (F[0, 46](!(q) \&\& !(r)))))))))))) \textbar{} ((G (X ((F[0, 20](!(p) \&\& !(r))) -> (F[0, 46](!(q) \&\& !(r)))))) \& ((G ((F[0, 20](!(p) \&\& !(r))) -> (X (X (F (F[0, 20](!(p) \&\& !(r)))))))) \textbar{} ((X (! (F[0, 20](!(p) \&\& !(r))))) \& (X (X (X (X (X[!] (X (! (F[0, 20](!(p) \&\& !(r)))))))))))))) \& ((G ((F[0, 20](!(p) \&\& !(r))) -> (F (F[0, 46](!(q) \&\& !(r)))))) \& ((! ((F[0, 20](!(p) \&\& !(r))) -> (X (F[0, 20](!(p) \&\& !(r)))))) \textbar{} (G ((F[0, 20](!(p) \&\& !(r))) -> (X (X (X[!] (true))))))))) \textbar{} ((X (! ((F[0, 46](!(q) \&\& !(r))) U ((F[0, 20](!(p) \&\& !(r))) <-> (X (F[0, 20](!(p) \&\& !(r)))))))) \textbar{} (((X (! (F[0, 20](!(p) \&\& !(r))))) \& ((G ((F[0, 20](!(p) \&\& !(r))) <-> (F[0, 46](!(q) \&\& !(r))))) R (X (X (F[0, 46](!(q) \&\& !(r))))))) \textbar{} ((G ((F[0, 20](!(p) \&\& !(r))) -> (X (X (X (X[!] (true))))))) \& (G ((F[0, 46](!(q) \&\& !(r))) -> (X ((F[0, 20](!(p) \&\& !(r))) U (X (F[0, 20](!(p) \&\& !(r)))))))))))) \& (((((F[0, 20](!(p) \&\& !(r))) U (((F[0, 20](!(p) \&\& !(r))) R (F[0, 46](!(q) \&\& !(r)))) R (! (F[0, 20](!(p) \&\& !(r)))))) \& ((G ((F[0, 20](!(p) \&\& !(r))) \textbar{} (X (F[0, 20](!(p) \&\& !(r)))))) \textbar{} ((F (! ((F[0, 20](!(p) \&\& !(r))) -> (F[0, 46](!(q) \&\& !(r)))))) \textbar{} ((F[0, 20](!(p) \&\& !(r))) \& (X ((F[0, 20](!(p) \&\& !(r))) <-> (X ((F[0, 20](!(p) \&\& !(r))) <-> (X (F[0, 20](!(p) \&\& !(r)))))))))))) \textbar{} ((((F[0, 46](!(q) \&\& !(r))) U (X (false))) \& (X ((F[0, 20](!(p) \&\& !(r))) R (X (X (X (X[!] (F[0, 20](!(p) \&\& !(r)))))))))) \textbar{} ((X (X (X (X[!] (X (X (X (F[0, 20](!(p) \&\& !(r)))))))))) \& ((F[0, 20](!(p) \&\& !(r))) U (G ((F[0, 20](!(p) \&\& !(r))) -> (X (X (F[0, 46](!(q) \&\& !(r))))))))))) \textbar{} (X (! ((G ((F[0, 20](!(p) \&\& !(r))) -> (F[0, 46](!(q) \&\& !(r))))) U (F[0, 20](!(p) \&\& !(r)))))))
 &
(((F (F[0, 46](!(q) \&\& !(r)))) \& ((G (X ((p) -> ((F[0, 20](!(p) \&\& !(r))) U (X[!] (F[0, 46](!(q) \&\& !(r)))))))) \& ((G ((! (p)) U ((q) <-> (X[!] (F[0, 20](!(p) \&\& !(r))))))) \textbar{} (((! (F[0, 20](!(p) \&\& !(r)))) R (F (F[0, 20](!(p) \&\& !(r))))) \& ((((q) U (F[0, 20](!(p) \&\& !(r)))) U (X (false))) \textbar{} (G (((F[0, 20](!(p) \&\& !(r))) U (X[!] (X (false)))) -> (r)))))))) \textbar{} ((F[0, 20](!(p) \&\& !(r))) \& (((r) R ((q) <-> (X (F[0, 20](!(p) \&\& !(r)))))) \textbar{} (((r) R ((p) <-> (X (X (X (F[0, 20](!(p) \&\& !(r)))))))) \textbar{} (((X (! (F[0, 20](!(p) \&\& !(r))))) \& (G ((p) \textbar{} (X (X (F[0, 20](!(p) \&\& !(r)))))))) \textbar{} ((G ((p) -> (F (F[0, 46](!(q) \&\& !(r)))))) \& (! ((F[0, 46](!(q) \&\& !(r))) U (X (X ((p) R (F[0, 46](!(q) \&\& !(r)))))))))))))) \& ((((X (F[0, 20](!(p) \&\& !(r)))) \textbar{} (X ((F[0, 46](!(q) \&\& !(r))) U (! ((q) <-> (X (F[0, 20](!(p) \&\& !(r))))))))) \& (((G ((p) -> (F[0, 20](!(p) \&\& !(r))))) \textbar{} (X (((F[0, 20](!(p) \&\& !(r))) R (q)) -> (F[0, 20](!(p) \&\& !(r)))))) <-> ((F (! (((F[0, 20](!(p) \&\& !(r))) R (F[0, 46](!(q) \&\& !(r)))) -> (F[0, 20](!(p) \&\& !(r)))))) \& ((G ((X[!] (r)) -> (p))) \textbar{} (((F[0, 46](!(q) \&\& !(r))) R (F (p))) \& ((X (((F[0, 20](!(p) \&\& !(r))) R (q)) -> (F[0, 20](!(p) \&\& !(r))))) \& (G ((r) -> (X[!] ((F[0, 46](!(q) \&\& !(r))) <-> (F (F[0, 20](!(p) \&\& !(r)))))))))))))) \textbar{} ((X (! ((G ((F[0, 20](!(p) \&\& !(r))) -> (F[0, 46](!(q) \&\& !(r))))) U (F[0, 20](!(p) \&\& !(r)))))) \textbar{} ((F[0, 20](!(p) \&\& !(r))) \& (X (! (((p) R (F[0, 46](!(q) \&\& !(r)))) R (q)))))))
 \\ \hline
\end{tabularx}
}
\egroup
\end{center}

\end{document}